\definecolor{blackkkk}{named}{black}
\crefname{figure}{Fig.}{Figs.}
\crefname{algocf}{alg.}{algs.}
\Crefname{algocf}{Algorithm}{Algorithms}
\crefname{definition}{Definition}{Definitions}
\crefname{equation}{eq.}{eqs.}
\Crefname{equation}{Eq.}{Eqs.}
\newenvironment{proofsketch}{\paragraph{Proof sketch.}}{\hfill$\square$}
\newcommand{\yy}[1]{\textcolor{blackkkk}{#1}}
\newcommand{\MD}[1]{\textcolor{black}{#1}}
\newcommand{\algofont}[0]{\small}
\begin{document}

\title{Finding Locally Densest Subgraphs: Convex Programming with Edge and Triangle Density
}


\author{Yi Yang         \and
  Chenhao Ma \and
  Reynold Cheng\and
  Laks V.S. Lakshmanan\and
  Xiaolin Han
}


\institute{Yi Yang \at
  The Chinese University of Hong Kong, Shenzhen
  \email{yiyang3@link.cuhk.edu.cn}
  \and
  Chenhao Ma \at
  The Chinese University of Hong Kong, Shenzhen \\
  \email{machenhao@cuhk.edu.cn}           
  \emph{Corresponding author}  %
  \and
  Yixiang Fang \at
  The Chinese University of Hong Kong, Shenzhen \\
  \email{fangyixiang@cuhk.edu.cn}
  \and
  Reynold Cheng \at Department of Computer Science \&
  Guangdong–Hong Kong-Macau Joint Laboratory \&
  HKU Musketeers Foundation Institute of Data Science,
  The University of Hong Kong \\
  \email{ckcheng@cs.hku.hk}
  \and
  Laks V.S. Lakshmanan \at The University of British Columbia \\
  \email{laks@cs.ubc.ca}
  \and
  Xiaolin Han \at Northwestern Polytechnical University \\
  \email{xiaolinh@nwpu.edu.cn} \\
}

\date{Received: date / Accepted: date}

\maketitle

\begin{abstract}
  Finding the densest subgraph (DS) from a graph is a fundamental problem in graph databases. The DS obtained, which reveals closely related entities, has been found to be useful in various application domains such as e-commerce, social science, and biology. However, in a big graph that contains billions of edges, it is desirable to find more than one subgraph cluster that is not necessarily the densest, yet they reveal closely related vertices. In this paper, we study the locally densest subgraph (LDS), a recently proposed variant of DS. An LDS is a subgraph which is the densest among the ``local neighbors''. Given a graph $G$, a number of LDSs can be returned, which reflect different dense regions of $G$ and thus give more information than DS. The existing LDS solution suffers from low efficiency. We thus develop a convex-programming-based solution that enables powerful pruning.{\color{blackkkk} We also extend our algorithm to triangle-based density to solve LTDS problem. Based on current algorithms, we propose a unified framework for the LDS and LTDS problems.} Extensive experiments on {\yy thirteen} real large graph datasets show that our proposed algorithm is up to four orders of magnitude faster than the state-of-the-art.
  \keywords{Densest subgraph \and Graph mining \and Community detection}
\end{abstract}


\section{Introduction}
In modern systems and platforms that manage intricate relationship among objects, graphs have been widely used to model such relationship information~\cite{gionis2013piggybacking,dourisboure2007extraction,ching2015one,li2021analyzing,han2022leveraging,han2022deeptea,han2019traffic,han2020traffic,han2022framework,ma2019linc}. For example, the Facebook friendship network can be treated as a graph by mapping users to vertices and friendships among users to edges connecting vertices \cite{ching2015one}. \Cref{fig:graph} depicts a graph of friendship, where {\tt a} and {\tt f} have an edge meaning that they are friends of each other. In biology, graphs can be used to capture complex interactions among different proteins \cite{stelzl2005human} and relationships in genomic DNA \cite{fratkin2006motifcut}.

Lying at the core of large scale graph data mining, the densest subgraph (DS) problem \cite{goldberg1984finding,fang12efficient,qin2015locally,bahmani2012densest} is about the finding of a ``dense'' subgraph from a graph $G$ with $n$ vertices and $m$ edges. For example, the DS of \Cref{fig:graph} is the subgraph induced by vertex subset $S_{1}$, because its density, i.e., the average number of edges over the number of vertices in the subgraph, is the highest among all possible subgraphs of $G$. The DS has been found useful to various application domains~\cite{tsourakakis2021dense}. For example, dense subgraphs can be used to detect communities~\cite{chen2010dense,tsourakakis2013denser} and discover fake followers \cite{hooi2016fraudar} in social networks. In biology, the DS found can be used to identify regulatory motifs in genomic DNA \cite{fratkin2006motifcut} and find complex patterns in gene annotation graphs \cite{saha2010dense}. In graph databases, a DS is used to construct index structures for supporting reachability and distance queries \cite{jin20093}. In system optimization, DS plays an important role in social piggybacking \cite{gionis2013piggybacking,tsourakakis2021dense}, which improves the throughput of social networking systems (e.g., Twitter). 

\yy{
%
Beyond the edge-based setting, several alternative notions of density have been explored in the literature~\cite{tsourakakis2013denser, tsourakakis2015k, mitzenmacher2015scalable}. Among them, {\em triangle-based density} has gained particular attention in recent years. Unlike edges, triangles represent stronger and more cohesive connections between nodes, making them especially relevant for capturing community structures and patterns in graphs. Triangles frequently emerge across various domains. For instance, in social networks, a triangle models a tightly-knit group, as friends of friends are often friends themselves. Consequently, identifying subgraphs with high triangle density can reveal richer structures and relationships that edge-based measures may overlook. This characteristic makes triangle-dense subgraphs valuable for various data mining tasks, offering deeper insights by considering higher-order connectivity.}

\begin{figure}[tb]
    \includegraphics[width=0.43\textwidth]{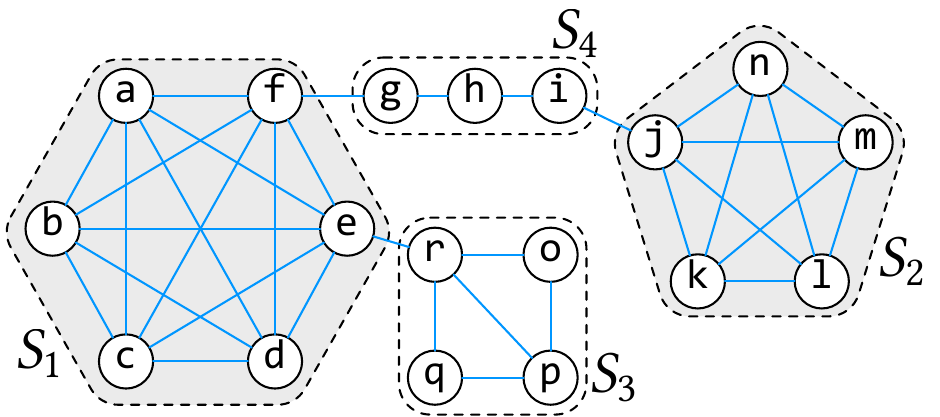}
    \caption{An undirected graph, $G$}
    \label{fig:graph}
\end{figure}

{\bf LDS model.} Most existing studies \cite{goldberg1984finding,bahmani2012densest,fang12efficient,charikar2000greedy} on the DS problem focus on finding the densest subgraph. In fact, it is not uncommon to find more than one ``dense subgraphs''. For example, in community detection, it is interesting to explore multiple communities in a social network, even if not all communities have the highest density. Motivated by this, \citeauthor{qin2015locally} \cite{qin2015locally} proposed the locally densest subgraph (LDS) model \cite{qin2015locally,samusevich2016local,tsourakakis2021dense}, which identifies LDSs of the graph. An LDS needs to be {\it dense} and {\it compact}. Conceptually, an LDS is a subgraph with the highest density in its vicinity. Moreover, LDS is ``compact'', in the sense that any subset of its vertices is highly connected to each other. (We will discuss the formal definition of LDS in \Cref{sec:prelim}.) For \Cref{fig:graph}, the two subgraphs induced by vertex subsets $S_{1}$ and $S_{2}$, denoted as $G[S_{1}]$ and $G[S_{2}]$ respectively, are two LDSs of $G$. On the other hand, the subgraph $G[S_{1}\cup S_{3}]$ is not an LDS.






The LDS problem has three important properties~\cite{qin2015locally}:

\noindent $\bullet$ The LDS is {\it parameter-free}. As we will explain later, the two LDSs $G[S_{1}]$ and $G[S_{2}]$ can be obtained from $G$ in \Cref{fig:graph} without setting density threshold or other parameters.

\noindent $\bullet$ For any pair of LDSs on a given graph, they do not have common vertices (i.e., {\it disjoint}). In the above example, the two LDSs $G[S_{1}]$ and $G[S_{2}]$ are disjoint. This allows us to identify all the non-overlapping ``dense regions'' of a graph.

\noindent $\bullet$ The set of subgraphs found by the LDS problem is a superset of the subgraph found by the DS problem. For example, DS only identifies $G[S_{1}]$ in \Cref{fig:graph}, with a density of $2.5$. However, LDS returns $\{G[S_{1}], G[S_{2}]\}$. Notice that $G[S_{2}]$, which has a density of $2$, does not overlap with $G[S_{1}]$.

\begin{figure}[t]
    \includegraphics[width=0.48\textwidth]{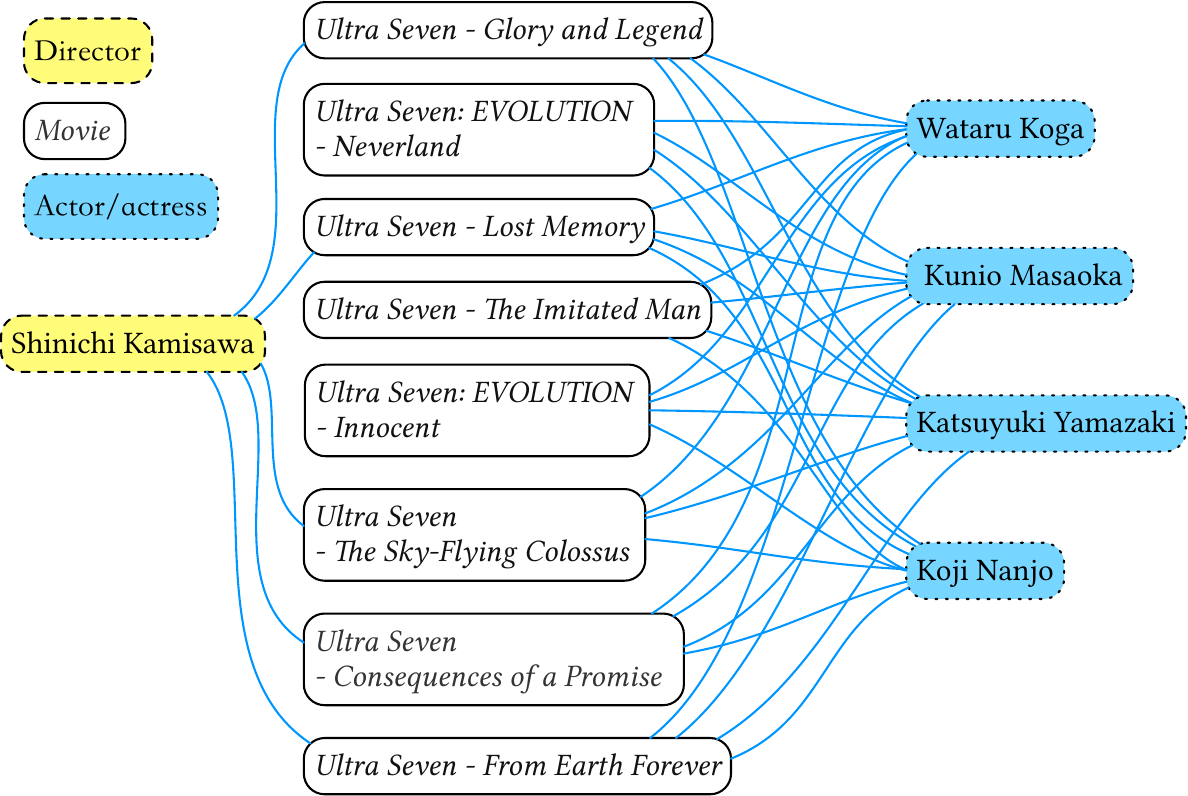}
    \caption{An LDS about ``Ultraman''.}
    \label{fig:case-study}
\end{figure}

We have conducted a case study on a large movie graph provided by TCL, an electronic product provider in China. We found that the LDSs found are about different topics (e.g., western films, Chinese martial fiction, Danish comedies). \Cref{fig:case-study} shows an LDS, with the third highest density, about the Japanese sci-fi series ``Ultraman''. The DS model can only find a subgraph about western films. The case study shows that the LDS model can find representative dense subgraphs of a graph.

    {\bf Prior work.} Based on the LDS model, \citeauthor{qin2015locally} proposed a max-flow-based LDS solution with pruning techniques built on $k$-core, named {\tt LDSflow} \cite{qin2015locally}. Here $k$-core \cite{seidman1983network} is a cohesive subgraph, which requires each vertex to have at least $k$ neighbors in the subgraph. To find LDSs, {\tt LDSflow} adopts a prune-and-verify framework. Specifically, the pruning bounds for vertices based on $k$-core are derived, which are then used to prune vertices. The LDSs are then obtained by verifying the remaining vertices through max-flow computation on the flow network. \yy{Inspired by LDS model, \citeauthor{samusevich2016local} \cite{samusevich2016local} generalized it to a triangle-based setting, which is referred to as the locally triangle-densest subgraphs (LTDS) model. A max-flow-based algorithm, named {\tt LTDSflow}, was also proposed to identify the top-$k$ LTDSs with the largest triangle-based density.} The main problem of {\tt LDSflow} and {\tt LTDSflow} is that they may not scale well on large graphs. For example, {\tt LDSflow} needs around 17 hours to output $15$ LDSs with the highest densities from a graph with around 3 million edges.

    {\bf Contributions.} We have developed an efficient and scalable LDS solution, as detailed below:
\begin{enumerate}[wide, labelindent=0pt, topsep=2pt, itemsep=2pt]
    \item {\em Propose the compact number.} Given a vertex $v$, we define the {\it compact number} of $v$, which depicts the degree of compactness of the most compact subgraph containing $v$.
          We further use compact numbers to link the LDS problem and a convex program for the DS problem theoretically, leveraging the fact that vertices in an LDS share the same compact number, and compact numbers can be computed by solving the convex program.
    \item {\em An efficient convex-programming-based algorithm with elegant pruning techniques for solving the LDS problem.} We propose a prune-and-verify algorithm based on the convex programming, which is named {\tt LDScvx}. For pruning, we show that an iterative Frank-Wolfe algorithm \cite{frank1956algorithm} can provide the tight upper and lower bounds for compact numbers, which can be used to prune vertices not contained by any LDS from the graph. For verification, {\tt LDSflow} \cite{qin2015locally} performs min-cut computation based on a specific $k$-core of $G$. In {\tt LDScvx}, we only need to compute the min-cut based on a smaller subgraph of the $k$-core by exploiting the upper and lower bounds of compact numbers.
    \yy{\item {\em An extended convex-programming-based algorithm that generalizes the LDS solution to the LTDS problem.} We adapted our {\tt LDScvx} solution for the LTDS problem. This extension preserves the efficiency and pruning strategies of {\tt LDScvx} while ensuring adaptability to the structural differences between LDS and LTDS.
    \item {\em A unified framework for LDS problem.} Inspired by \cite{zhou2024depth}, we summarize a unified framework for all LDS and LTDS solutions from a high-level perspective. Specifically, this framework consists of four stages: {\it Initial Reduction}, {\it Vertex Weight Updating}, {\it Graph Reduction and Division}, and {\it Candidate Subgraph Extraction and Verification}. Our framework allows us to compare our methods with existing solutions comprehensively.}
    \item {\em Extensive experiments.} We have experimentally compared our CP-based algorithm with the existing algorithm on thirteen real large datasets with sizes of up to 1.6 billion edges. Our results show that {\tt LDScvx} is up to four orders of magnitude faster than the state-of-the-art. A case study on a large movie graph has also been performed.
\end{enumerate}

{\bf Outline.} The rest of the paper is organized as follows. We review the related work in \Cref{sec:related}. In \Cref{sec:prelim}, we formally present the LDS problem. \Cref{sec:cn} defines the compact number and discusses its relationship with LDS and convex programming. We present our LDS algorithm in \Cref{sec:algo} and extension for LTDS problem in \Cref{sec:ltds}. The unified framework and empirical results are shown in \Cref{sec:framework} and \Cref{sec:exp}, respectively. \Cref{sec:conc} concludes the paper.

\section{Related Work}
\label{sec:related}
The densest subgraph is regarded as one kind of cohesive subgraph. \citeauthor{tsourakakis2021dense} \cite{tsourakakis2021dense} provides a comprehensive study of techniques and applications of the DS problem. Other related topics include $k$-core \cite{seidman1983network}, $k$-truss \cite{wang2012truss}, clique, and quasi-clique \cite{conte2018d2k}. More details on them can be found in \cite{fang2020survey,fang2022densest}. In the following, we focus on densest subgraph discovery and its variants.

    {\bf Densest subgraph discovery (DS).} \citeauthor{goldberg1984finding} \cite{goldberg1984finding} introduced the densest subgraph (DS) problem on undirected graphs, which aims to find the subgraph whose edge-density is the highest among all the subgraphs where the edge-density of a graph $G=(V, E)$ is defined as $\frac{|E|}{|V|}$. To solve the DS problem, \citeauthor{goldberg1984finding} \cite{goldberg1984finding} proposed an exact algorithm based on flow network via solving $O(\log(n))$ min-cut problems.
%
%
Later, \citeauthor{fang12efficient} \cite{fang12efficient} improved the efficiency of the flow-based exact algorithm by locating the DS in a specific $k$-core.
Exact algorithms can process small graphs reasonably, but they cannot scale well to large graphs. To remedy this issue, several approximation algorithms \cite{charikar2000greedy,bahmani2012densest,fang12efficient,boob2020flowless,zhou2024counting} have been developed.
%

\citeauthor{kannan1999analyzing} \cite{kannan1999analyzing} extended the DS problem definition to directed graphs.
\citeauthor{charikar2000greedy} \cite{charikar2000greedy} developed an exact polynomial-time algorithm to find the directed densest subgraph via $O(n^2)$ linear programs.
\citeauthor{khuller2009finding} \cite{khuller2009finding} presented a max-flow-based exact algorithm.
\citeauthor{ma2020efficient} \cite{ma2020efficient,ma2021directed,ma2021efficient} improved the max-flow-based algorithm by introducing the notion of $[x,y]$-core and exploiting a divide-and-conquer strategy.
To further boost the efficiency for large-scale graphs, approximation algorithms  \cite{charikar2000greedy,khuller2009finding,ma2020efficient,sawlani2020near,ma2022convex} for the directed DS problem are also developed.
\yy{Based on existing DS algorithms for both undirected and directed graphs, \citeauthor{zhou2024depth} \cite{zhou2024depth} proposed a unified framework that integrates all algorithms from a high-level perspective.}
%

Further, there are some variants of the DS problem focusing on different aspects.
\citeauthor{asahiro2002complexity} \cite{asahiro2002complexity} studied the DS problem with constraints on the size of the DS. However, the size constraints (e.g., at least $k$ vertices or at most $k$ vertices to be included) make the DS problem NP-hard \cite{asahiro2002complexity}. Hence, \citeauthor{andersen2009finding} \cite{andersen2009finding}, \citeauthor{khuller2009finding} \cite{khuller2009finding}, and \citeauthor{chekuri2022densest} \cite{chekuri2022densest} studied efficient approximation algorithms on the variants of the size-constrained DS.
\citeauthor{tsourakakis2015k} extended the notion of density based on edges to $k$-clique-density, and studied the DS problem with $k$-clique-density \cite{tsourakakis2015k,mitzenmacher2015scalable,sun2020kclist++}.
\citeauthor{chang2020deconstruct} \cite{chang2020deconstruct} proposed a novel and efficient index to report all minimal DS's and enumerate all DS's, where the minimal DS is strictly denser than all of its proper subgraphs.
%
Tatti et al. \cite{tatti2015density} and Danisch et al. \cite{danisch2017large} studied the density-friendly decomposition problem, which decomposes a graph into a chain of subgraphs, where each subgraph is nested within the next one, and the inner one is denser than the outer ones.
\citeauthor{galbrun2016top} \cite{galbrun2016top} and \citeauthor{dondi2021top} \cite{dondi2021top,dondi2021novel} studied the densest subgraphs with overlaps.

    {\bf Locally densest subgraph (LDS).} \citeauthor{qin2015locally} \cite{qin2015locally} proposed a new DS model, named locally densest subgraph (LDS). Based on this model, users can identify all the locally densest regions of a graph. \citeauthor{qin2015locally} have shown that such subgraphs cannot be found by other DS models theoretically and empirically. Compared to the original DS model, \citeauthor{qin2015locally} \cite{qin2015locally} showed that the subgraphs provided by the LDS model best represent different local dense regions of the graph, while the DS model only provides the DS.
Furthermore, they compared the LDS model with a straightforward greedy approach based on the DS, which finds the DS \cite{goldberg1984finding} at a time, removes it from the graph, and repeats the process for $k$ times. Nevertheless, this greedy approach has several shortcomings \cite{qin2015locally}:
\begin{enumerate*}
    \item The top-$k$ results may not fully reflect the top-$k$ densest regions. Especially when the graph has a vast dense region, subgraphs in other dense regions may have a low chance to appear.
    \item A subgraph returned by the greedy approach can be partial and contained by a better subgraph.
    \item This greedy approach is essentially a heuristic and does not allow for a formal characterization of the result.
\end{enumerate*}
The above claims are also verified by the experimental results in \cite{qin2015locally}. With the LDS model justified, \citeauthor{qin2015locally} \cite{qin2015locally} proposed a max-flow-based algorithm, {\tt LDSflow}, to find the top-$k$ LDSs from a graph.
However, we found that {\tt LDSflow} \cite{qin2015locally} does not scale to large graphs \MD{because {\tt LDSflow} needs to run the max-flow algorithm on the graph several times to find an LDS candidate and verify it, and the max-flow computation can be quite time-consuming. For example, the state-of-the-art max-flow algorithm is proposed by \citeauthor{orlin2013max} with time complexity of $O(nm)$ \cite{orlin2013max}.}
In this paper, we propose an efficient and scalable LDS algorithm for finding the top-$k$ LDSs from  large graphs. We note that   \citeauthor{samusevich2016local} \cite{samusevich2016local} extended the LDS model from edge-based density to triangle-based density. We focus on  edge-based density, and leave triangle density based LDS for future work.


\section{Preliminaries}
\label{sec:prelim}

This section formally defines the locally densest subgraph problem (LDS problem). \Cref{tab:notations} lists the notations used in this paper.

\begin{table}[tb]
    \centering
    \algofont
    \caption{Notations and meanings.}
    \begin{tabular}{c|p{5.8cm}}
        \hline
        Notation                & Meaning                                            \\
        \hline
        \hline
        $G=(V,E)$               & a graph with vertex set $V$ and edge set $E$       \\
        \hline
        $G[S]$                  & the subgraph induced by $S$                        \\
        \hline
        $d_{G}(u)$              & the degree of a vertex $u$ in $G$                  \\
        \hline
        $\mathsf{density}(G)$   & the density of graph $G$, i.e., $\frac{|E|}{|V|}$  \\
        \hline
        $\phi(u), \phi_{G'}(u)$ & compact number of $u$ in $G$ and $G'$ respectively \\
        \hline
        $\overline{\phi}(u)$    & the upper bound of $\phi(u)$                       \\
        \hline
        $\underline{\phi}(u)$   & the lower bound of $\phi(u)$                       \\
        \hline
        $\mathsf{core}_{G}(u)$  & the core number of $u$ in $G$                      \\
        \hline
        $\mathsf{CP}(G)$        & the convex program of $G$ for densest subgraph     \\
        \hline
        $\bm{\alpha}$           & the weights distributed from edges to vertices     \\
        \hline
        $\bm{r}$                & the weights received by each vertex                \\
        \hline
    \end{tabular}%
    \label{tab:notations}%
\end{table}

Let $G=(V, E)$ be an undirected graph with $n=|V|$ vertices and $m=|E|$ edges. For each vertex $u\in G$, we use $d_{G}(u)$ to represent its degree in $G$. Given a subset $S\subseteq V$, $E(S)$ denotes the set of edges induced by $S$, i.e., $E(S)=E\cap (S\times S)$. Hence, the subgraph induced by $S$ is denoted by $G[S]=(S, E(S)).$ Following the classic graph density definition \cite{goldberg1984finding,charikar2000greedy,asahiro2000greedily,bahmani2012densest,qin2015locally}, the density of a graph $G=(V,E)$, denoted by $\mathsf{density}(G)$, is defined as:
\begin{equation}
    \label{equ:density}
    \mathsf{density}(G)= \frac{|E|}{|V|}.
\end{equation}
Based on the density definition, the densest subgraph problem is to find the subgraph $G[S]$ of $G$ such that $\mathsf{density}(G[S])$ is maximized.

Densest subgraph discovery is widely applied in many graph mining tasks (e.g., \cite{fratkin2006motifcut,saha2010dense,hooi2016fraudar,ma2020efficient,ma2021directed}). However, as pointed out by \cite{qin2015locally}, it is usually not sufficient to find one dense subgraph, in many applications such as community detection \cite{dourisboure2007extraction}. \citeauthor{qin2015locally} \cite{qin2015locally} proposed a locally densest subgraph model to provide multiple dense subgraphs, which are dense and compact. \Cref{equ:density} gives the density of a subgraph. For compactness, \Cref{def:rho-compact} defines what is a $\rho$-compact subgraph, which comes from the intuition that a graph is compact if any subset of vertices is highly connected to others in the graph \cite{qin2015locally}.

\begin{definition}[$\rho$-compact \cite{qin2015locally}]
    \label{def:rho-compact}
    A graph $G=(V, E)$ is $\rho$-compact if and only if $G$ is connected, and removing any subset of vertices $S \subseteq V$ will result in the removal of at least $\rho \times |S|$ edges in $G$, where $\rho$ is a non-negative real number.
\end{definition}

\begin{definition}[Maximal $\rho$-compact subgraph \cite{qin2015locally}]
    \label{def:maximal-rho}
    A $\rho$-compact subgraph $G[S]$ of $G$ is a maximal $\rho$-compact subgraph of $G$ if and only if there does not exist a supergraph $G[S']$ of $G[S]$ ($S'\neq S$) in $G$ such that $G[S']$ is also $\rho$-compact.
\end{definition}

\noindent\MD{{\bf Remark.} The locally dense subgraph in \cite{tatti2015density} is similar to the maximal $\rho$-compact subgraph. But they are different because the $\rho$-compact subgraph needs to be connected.}

\begin{definition}[Locally densest subgraph \cite{qin2015locally}]
    \label{def:lds}
    A subgraph $G[S]$ of $G$ is a locally densest subgraph (LDS) of $G$ if and only if $G[S]$ is a maximal $\mathsf{density}(G[S])$-compact subgraph in $G$.
\end{definition}

From \Cref{def:lds}, we can find that
\begin{enumerate*}
    \item the definition itself is parameter-free;
    \item an LDS is compact;
    \item any supergraph of an LDS cannot be more compact than the LDS itself;
    \item any subgraph $G[S']$ of an LDS $G[S]$ cannot be denser than the LDS itself \cite{qin2015locally}.
\end{enumerate*}
These properties show that an LDS is indeed a {\em locally densest} subgraph.
The third one comes from that an LDS $G[S]$ is a maximal $\rho$-compact subgraph.
The last one can be proven by contradiction. Suppose that $G[S']$ has a density satisfying $\mathsf{density}(G[S'])>\mathsf{density}(G[S])$. If we remove vertex set $U=S\setminus S'$ from $G[S]$, the number of edges removed is $\mathsf{density}(G[S])\times |S| - \mathsf{density}(G[S'])\times |S'|< \mathsf{density}(G[S])\times (|S|-|S'|)=\mathsf{density}(G[S])\times |U|$, which contradicts that $G[S]$ is $\mathsf{density}(G[S])$-compact.
We further illustrate the LDS definition with the following example.

\begin{example}[LDS]
    Consider the graph $G$ shown in \Cref{fig:graph}. The subgraph $G[S_{1}]$ with density $\frac{5}{2}$ is a maximal $\frac{5}{2}$-compact subgraph. Hence, $G[S_{1}]$ is an LDS. Similarly, $G[S_{2}]$ with density $2$ is also an LDS as it is a maximal $2$-compact subgraph.
    The subgraph $G[S_{3}]$ with density $\frac{5}{4}$ is a $\frac{5}{4}$-compact subgraph. But $G[S_{3}]$ is not an LDS because it is contained in $G[S_{1}\cup S_{3}]$ which is $\frac{3}{2}$-compact (and also $\frac{5}{4}$-compact).
    $G[S_{1}\cup S_{3}]$ is also not an LDS, because its density is $\frac{21}{10}$ but it is not a $\frac{21}{10}$-compact subgraph. The compactness of $G[S_{1}\cup S_{3}]$ is $\frac{3}{2}=\frac{6}{4}$, because removing $S_{3}$ from $G[S_{1}\cup S_{3}]$ will result in removing of 6 edges.
    \qed
\end{example}

The LDS has a useful property that all the LDSs in a graph $G$ are pairwise disjoint.
\begin{lemma}[Disjoint property \cite{qin2015locally}]
    \label{lem:disjoint}
    Suppose $G[S]$ and $G[S']$ are two LDSs in $G$, we have $S\cap S' = \emptyset$.
\end{lemma}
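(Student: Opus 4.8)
The plan is to argue by contradiction. Suppose $G[S]$ and $G[S']$ are two distinct LDSs with $S\cap S'\neq\emptyset$; let $\rho=\mathsf{density}(G[S])$ and $\rho'=\mathsf{density}(G[S'])$, and assume without loss of generality that $\rho\ge\rho'$. I will show this forces $G[S\cup S']$ to be a $\rho'$-compact subgraph strictly containing $G[S']$ (or, in the degenerate nested case, a $\rho$-compact subgraph strictly containing $G[S]$), contradicting the maximality requirement built into \Cref{def:lds}.

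The main tool is a supermodularity observation. Fix $\rho'\ge 0$ and set $g(X)=|E(X)|-\rho'|X|$ for $X\subseteq V$. Because $E(A)\cup E(B)\subseteq E(A\cup B)$ while $E(A)\cap E(B)=E(A\cap B)$, we get $|E(A\cup B)|+|E(A\cap B)|\ge |E(A)|+|E(B)|$, and since $|A\cup B|+|A\cap B|=|A|+|B|$ exactly, the function $g$ is supermodular: $g(A\cup B)+g(A\cap B)\ge g(A)+g(B)$. I will also use the elementary reformulation of \Cref{def:rho-compact}: for a connected $G[Z]$, being $\rho''$-compact is equivalent to $Z$ maximizing $|E(X)|-\rho''|X|$ over all $X\subseteq Z$ (just substitute $X=Z\setminus U$). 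Two consequences: since $G[S']$ is $\rho'$-compact, $g(X)\le g(S')$ for all $X\subseteq S'$; and since $G[S]$ is $\rho$-compact with $\rho\ge\rho'$, for all $X\subseteq S$ we have $g(X)=(|E(X)|-\rho|X|)+(\rho-\rho')|X|\le g(S)$, because both summands are maximized over subsets of $S$ at $X=S$.

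The crux is to prove $g(W)\le g(S\cup S')$ for every $W\subseteq S\cup S'$. Applying supermodularity to $W$ and $S'$ and using $g(W\cap S')\le g(S')$ gives $g(W\cup S')\ge g(W)$. Applying it again to $W\cup S'$ and $S$, noting $(W\cup S')\cup S=S\cup S'$ and $(W\cup S')\cap S\subseteq S$, and using $g((W\cup S')\cap S)\le g(S)$, gives $g(S\cup S')\ge g(W\cup S')\ge g(W)$. Hence $S\cup S'$ maximizes $g$ over its subsets; since $G[S]$ and $G[S']$ are connected and share a vertex, $G[S\cup S']$ is connected, so $G[S\cup S']$ is $\rho'$-compact. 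If $S\not\subseteq S'$, then $G[S\cup S']$ is a $\rho'$-compact proper supergraph of the LDS $G[S']$, a contradiction. If instead $S\subseteq S'$ with $S\neq S'$, then the already-established property that no subgraph of an LDS is denser than the LDS gives $\rho\le\rho'$, hence $\rho=\rho'$; but then $G[S']$ is a $\rho$-compact proper supergraph of the LDS $G[S]$, again a contradiction.

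I expect the main obstacle to be the asymmetry between the two densities: a naive symmetric ``take the union and invoke maximality'' argument fails because $G[S]$ is only guaranteed $\rho$-compact, not $\rho'$-compact, so one must insert the monotonicity estimate $g(X)\le g(S)$ for $X\subseteq S$, and this is exactly the step that consumes the hypothesis $\rho\ge\rho'$. A secondary subtlety is the nested case $S\subseteq S'$, which the union construction cannot rule out by itself and which has to be dispatched separately via the ``no denser subgraph'' property of LDSs.
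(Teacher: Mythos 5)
Your proof is correct. Note that the paper itself supplies no proof of this lemma -- it is imported verbatim from the cited prior work -- so there is no in-paper argument to compare against; I am judging your write-up on its own. Every step checks out: the identity $E(A)\cap E(B)=E(A\cap B)$ together with $E(A)\cup E(B)\subseteq E(A\cup B)$ gives the supermodularity of $g(X)=|E(X)|-\rho'|X|$; the reformulation of $\rho''$-compactness of a connected $G[Z]$ as $Z$ maximizing $|E(X)|-\rho''|X|$ over $X\subseteq Z$ is an exact restatement of \Cref{def:rho-compact} via $X=Z\setminus U$; the monotonicity estimate $g(X)\le g(S)$ for $X\subseteq S$ correctly consumes the hypothesis $\rho\ge\rho'$; and the two-step supermodular exchange yields $g(W)\le g(S\cup S')$ for all $W\subseteq S\cup S'$, so $G[S\cup S']$ (connected, since the two LDSs share a vertex) is $\rho'$-compact. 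The case split at the end is also handled properly: the nested case $S\subsetneq S'$ genuinely cannot be dispatched by the union construction, and your appeal to the ``no subgraph of an LDS is denser'' property (which the paper does prove, just after \Cref{def:lds}) to force $\rho=\rho'$ and then contradict maximality of $G[S]$ is the right fix. Compared with the folklore argument in the cited source, which bounds the edges destroyed by a removal $U\subseteq S\cup S'$ by splitting $U$ across $S$ and $S'\setminus S$ and tracking double counting by hand, your supermodularity packaging is cleaner and makes the asymmetry between $\rho$ and $\rho'$ explicit rather than implicit; the price is only the small detour through the equivalent ``maximizer of $g$'' characterization of compactness.
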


According to \Cref{lem:disjoint}, the LDS model can be used to find all dense regions of a graph. However, most real-world applications (e.g., community detection) usually require finding the top-$k$ dense regions of a graph \cite{qin2015locally}. Hence, we focus on finding the top-$k$ LDSs with the largest densities following \cite{qin2015locally}.

\begin{problem}[LDS problem \cite{qin2015locally}]
Given a graph $G$ and an integer $k$, the LDS problem is to compute the top-$k$ LDSs with the largest density in $G$.
\end{problem}


\section{From Compactness to CP}
\label{sec:cn}
In this section, we first propose a new concept named {\em compact number}, inspired by the core number related to $k$-core \cite{seidman1983network}, which is a cohesive subgraph model. Then, we present some interesting properties of LDS from the perspective of compact numbers. Afterward, we show that the compact numbers can be computed via a convex programming (CP) formulation of the densest subgraph problem. Hence, the compact number acts as a bridge between the LDS problem and the CP formulation, as shown in \Cref{fig:cn-bridge}.

\begin{figure}[ht]
    \includegraphics[width=0.48\textwidth]{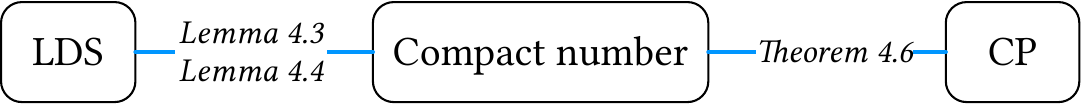}
    \caption{Relation among LDS, Compact number, and CP.}
    \label{fig:cn-bridge}
\end{figure}


\subsection{Compact Number and LDS}
\label{sec:cn:cn-lds}

Inspired by the core number of $k$-core \cite{seidman1983network}, we define the {\em compact number} of each vertex in graph $G$ with respect to $\rho$-compact subgraphs of $G$.

\begin{definition}[Compact number]\label{def:cn}
    Given a graph $G=(V,E)$, the compact number of each vertex $u \in V$, denoted by $\phi(u)$, is the largest $\rho$ such that $u$ is contained in a $\rho$-compact subgraph of $G$.
\end{definition}

We use $\overline{\phi}(u)$ and $\underline{\phi}(u)$ to denote the upper and lower bounds of $\phi(u)$ in $G$, respectively. Besides, $\overline{\phi}_{G'}(u)$ denotes the upper bound of $\phi_{G'}(u)$ in $G'$, $G'$ is a subgraph of $G$.

\begin{example}[Compact number]
    Consider vertex {\tt q} of $G$ in \Cref{fig:graph}. The compact number of {\tt q} is $\frac{3}{2}$, i.e., $\phi(${\tt q}$)=\frac{3}{2}$, because $G[S_{1} \cup S_{3}]$ is a $\frac{3}{2}$-compact subgraph containing {\tt q} and there is no other subgraph containing {\tt q} with a larger $\rho$. Removing $S_{3}$ from $G[S_{1} \cup S_{3}]$ will remove 6 edges, so $G[S_{1} \cup S_{3}]$ is $\frac{3}{2}$-compact.
\end{example}

Next, we discuss the relationship between the LDS and the compact numbers of vertices within or adjacent to the LDS via the following lemmas.

\begin{lemma}
    \label{lem:lds-in-phi}
    Given an LDS $G[S]$ in $G$, $\forall u \in S$, we have $\phi(u)=\mathsf{density}(G[S])$.
\end{lemma}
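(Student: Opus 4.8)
The plan is to establish the equality $\phi(u) = \mathsf{density}(G[S])$ for every $u \in S$ by proving the two inequalities separately. Let $\rho^{*} = \mathsf{density}(G[S])$. The easy direction is $\phi(u) \geq \rho^{*}$: since $G[S]$ is an LDS, it is in particular a $\rho^{*}$-compact subgraph of $G$ (this is immediate from \Cref{def:lds}, as an LDS is a maximal $\mathsf{density}(G[S])$-compact subgraph, hence a fortiori $\rho^{*}$-compact), and it contains $u$. By \Cref{def:cn}, $\phi(u)$ is the largest $\rho$ for which $u$ lies in some $\rho$-compact subgraph, so $\phi(u) \geq \rho^{*}$.

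The harder direction is $\phi(u) \leq \rho^{*}$. Here I would argue by contradiction: suppose some $u \in S$ satisfies $\phi(u) > \rho^{*}$. Then by \Cref{def:cn} there exists a subgraph $G[T]$ with $u \in T$ that is $\rho'$-compact for some $\rho' > \rho^{*}$. I want to combine $G[T]$ with $G[S]$ to contradict either the maximality of $G[S]$ as a $\rho^{*}$-compact subgraph, or the density bound. The natural move is to consider $G[S \cup T]$. The key claim would be that $G[S \cup T]$ is still $\rho^{*}$-compact: intuitively, $G[S]$ contributes compactness $\rho^{*}$ on its side, $G[T]$ contributes compactness $\rho' > \rho^{*}$ on its side, and since $u \in S \cap T$ the union is connected; removing any vertex subset $W \subseteq S \cup T$ removes at least $\rho^{*}|W \cap S|$ edges "charged to $S$" plus at least $\rho^{*}|W \cap T \setminus S|$ edges "charged to $T$", and a careful accounting that avoids double-counting edges inside $S \cap T$ should yield a total of at least $\rho^{*}|W|$ removed edges. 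If $S \cup T \neq S$, this contradicts maximality of $G[S]$; if $T \subseteq S$, then $G[T]$ is a subgraph of the LDS $G[S]$ that is $\rho'$-compact with $\rho' > \rho^{*}$, and since a $\rho'$-compact graph on $|T|$ vertices has at least $\rho'|T| > \rho^{*}|T|$... actually more cleanly, a $\rho'$-compact connected subgraph has density at least $\rho'$ (remove all but one vertex), so $\mathsf{density}(G[T]) \geq \rho' > \rho^{*} = \mathsf{density}(G[S])$, contradicting property (4) listed after \Cref{def:lds} that no subgraph of an LDS is denser than the LDS.

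The main obstacle is the "union of compact subgraphs sharing a vertex is compact" step, i.e., verifying that $G[S \cup T]$ is $\rho^{*}$-compact. The subtlety is the bookkeeping of edges in the overlap $G[S \cap T]$ and cross-edges between $S \setminus T$ and $T \setminus S$ (of which there may be none if $S$ and $T$ only share vertices, but in general one must be careful) when a removed set $W$ straddles both parts. I would handle this by, for an arbitrary nonempty $W \subseteq S \cup T$, splitting into the edges removed within $G[S]$ and within $G[T]$: the $\rho^{*}$-compactness of $G[S]$ gives at least $\rho^{*}|W \cap S|$ edges of $E(S)$ incident to $W$, and the $\rho'$-compactness of $G[T]$ gives at least $\rho'|W \cap T| \geq \rho^{*}|W \cap T|$ edges of $E(T)$ incident to $W$; edges counted twice lie in $E(S \cap T)$ and are incident to $W \cap S \cap T$, and one shows the overcount is at most $\rho^{*}|W \cap S \cap T|$ by again applying $\rho^{*}$-compactness of $G[S]$ restricted appropriately — or, more simply, by noting $|W \cap S| + |W \cap T| - |W \cap S \cap T| = |W|$ and that the doubly-counted edges number at most (something controlled). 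Connectivity of $G[S \cup T]$ follows since $G[S]$ and $G[T]$ are each connected and share the vertex $u$. Once this lemma is in hand, both contradiction cases close and the equality follows.
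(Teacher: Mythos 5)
Your proposal is correct and follows essentially the same route as the paper's proof: the lower bound $\phi(u)\geq\rho^{*}$ is immediate, and the upper bound is obtained by contradiction, using the union $G[S\cup T]$ to force $T\subseteq S$ via maximality and then deriving a density contradiction inside $G[S]$ (the paper's proof compresses the first step into the bare assertion ``$S'\subseteq S$ because $G[S]$ is maximal and $S'\cap S\neq\emptyset$,'' which tacitly relies on exactly the union argument you spell out, and its closing edge count is the same computation that proves the ``no denser subgraph'' property you cite).

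One note on the step you flag as the main obstacle: the double-counting you worry about disappears if you choose the decomposition differently. For $W\subseteq S\cup T$ write $W=(W\cap S)\,\sqcup\,(W\cap(T\setminus S))$. The $\rho^{*}$-compactness of $G[S]$ applied to $W\cap S$ yields at least $\rho^{*}|W\cap S|$ removed edges, all lying in $E(S)$; the $\rho'$-compactness of $G[T]$ applied to $W\cap(T\setminus S)$ yields at least $\rho'|W\cap(T\setminus S)|\geq\rho^{*}|W\cap(T\setminus S)|$ removed edges, each of which has an endpoint in $T\setminus S$ and hence cannot lie in $E(S)$. The two edge sets are therefore disjoint, and summing gives at least $\rho^{*}|W|$ removed edges with no correction term needed. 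With that substitution your argument closes completely.
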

\begin{proof}
    As $G[S]$ is a maximal $\mathsf{density}(G[S])$-compact subgraph, for each $u\in S$, there exists no other subgraph $G[S']$ containing $u$ such that $G[S']$ is a $\rho$-compact subgraph with $\rho>\mathsf{density}(G[S])$. We prove the claim by contradiction. Suppose $G[S']$ is a $\rho$-compact subgraph with $\rho > \mathsf{density}(G[S])$ and $u\in S'$, we have $\mathsf{density}(S') \geq \rho > \mathsf{density}(G[S])$. First $S' \subseteq S$, because $G[S]$ is a maximal $\mathsf{density}(G[S])$-compact subgraph and $S'\cap S \neq \emptyset$. If we remove $U=S\setminus S'$ from $G[S]$, the number of edges removed is $|E(S)|-|E(S')|=\mathsf{density}(G[S])\times |S| - \mathsf{density}(G[S'])\times |S'| < \mathsf{density}(G[S])\times(|S|-|S'|)=\mathsf{density}(G[S])\times |U|$. This contradicts that $G[S]$ is $\mathsf{density}(G[S])$-compact. Hence, $\mathsf{density}(G[S])$ is the compact number of all vertices in $S$.
\end{proof}

\begin{lemma}
    \label{lem:lds-out-phi}
    Given an LDS $G[S]$ in $G$, $\forall (u,v) \in E$, if $u\in S$ and $v\in V\setminus S$, we have $\phi(u)>\phi(v)$.
\end{lemma}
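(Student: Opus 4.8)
The plan is to reduce everything to \Cref{lem:lds-in-phi} together with the maximality that is built into the definition of an LDS. Since $G[S]$ is an LDS, \Cref{lem:lds-in-phi} gives $\phi(u)=\mathsf{density}(G[S])=:\rho$ for every $u\in S$, so the claim $\phi(u)>\phi(v)$ is equivalent to $\phi(v)<\rho$. I would argue by contradiction: suppose $\phi(v)\ge\rho$. By \Cref{def:cn} there is then a $\rho$-compact subgraph $G[S']$ with $v\in S'$ (the subgraph achieving $\phi(v)$ is $\phi(v)$-compact, hence $\rho$-compact since $\rho\le\phi(v)$, directly from \Cref{def:rho-compact}); and $v\in S'\setminus S$ because $v\notin S$. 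The target is to show that $G[S\cup S']$ is $\rho$-compact. Since $S\subsetneq S\cup S'$ and $G[S\cup S']$ is a subgraph of $G$ properly containing $G[S]$, this contradicts the maximality of the $\rho$-compact subgraph $G[S]$ (which holds because an LDS is a maximal $\rho$-compact subgraph), and therefore forces $\phi(v)<\rho=\phi(u)$.

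To show $G[S\cup S']$ is $\rho$-compact I would check the two requirements of \Cref{def:rho-compact}. Connectivity is immediate: $G[S]$ and $G[S']$ are each connected (both are $\rho$-compact), and the edge $(u,v)\in E$ with $u\in S$, $v\in S'$ joins them. For the edge-removal condition, take any nonempty $T\subseteq S\cup S'$ and split it as $T=A\sqcup B$ with $A:=T\cap S$ and $B:=T\setminus S$; note $B\subseteq S'$. The key point is that the edges of $G[S]$ incident to $A$ and the edges of $G[S']$ incident to $B$ are \emph{disjoint} sets of edges of $G[S\cup S']$, each of them incident to $T$: an edge of the first kind has both endpoints in $S$, whereas an edge of the second kind is incident to a vertex of $B\subseteq S'\setminus S$, which lies outside $S$. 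Applying $\rho$-compactness of $G[S]$ to $A\subseteq S$ and of $G[S']$ to $B\subseteq S'$, deleting $T$ from $G[S\cup S']$ removes at least $\rho|A|+\rho|B|=\rho|T|$ edges, which is what is needed (the degenerate cases $A=\emptyset$ or $B=\emptyset$ being trivial).

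The only genuine subtlety — and the place I expect the argument to need care — is that $S$ and $S'$ may overlap in vertices, not merely be linked by the single edge $(u,v)$. That is precisely why I would define $B:=T\setminus S$ rather than the more symmetric $B:=T\cap S'$: pushing the whole overlap $S\cap S'$ into $A$ is what makes the two edge sets counted above share no edge, so that no correction term like $e(B,\,A\cap S')$ appears and the bound closes at exactly $\rho|T|$. In effect this step re-establishes the standard fact that the union of two $\rho$-compact subgraphs that are connected to one another is again $\rho$-compact; the rest is bookkeeping plus the appeal to \Cref{lem:lds-in-phi}.
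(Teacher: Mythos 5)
Your proof is correct and takes the same route as the paper's: assume $\phi(v)\ge\phi(u)=\mathsf{density}(G[S])$ and derive a contradiction with the maximality of $G[S]$ as a $\mathsf{density}(G[S])$-compact subgraph. The paper leaves that contradiction as a one-line assertion, whereas you supply the step it implicitly relies on — that the union of $G[S]$ with a $\rho$-compact subgraph containing $v$, linked by the edge $(u,v)$, is again $\rho$-compact — and your decomposition $A=T\cap S$, $B=T\setminus S$ correctly keeps the two counted edge sets disjoint, so the bound $\rho|A|+\rho|B|=\rho|T|$ closes.
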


\begin{proof}
    The lemma follows from the LDS definition. Suppose $\exists (u,v) \in E, u\in S, v\in V\setminus S$, $\phi(u) \leq \phi(v)$, which contradicts that $G[S]$ is a maximal $\mathsf{density}(G[S])$-compact subgraph.
\end{proof}

\Cref{lem:lds-in-phi,lem:lds-out-phi} indicate that the compact numbers of all vertices in an LDS $G[S]$ are exactly $\mathsf{density}(G[S])$ and the compact numbers of all vertices adjacent to vertices in $G[S]$ but not in $G[S]$ are less than $\mathsf{density}(G[S])$, respectively. We further illustrate the two lemmas via the following example.

\begin{example}
    Consider the LDS $G[S_{1}]$ of $G$ in \Cref{fig:graph}. We can see that $\forall u \in S_{1}$, $\phi(u) = \mathsf{density}(G[S_{1}]) = \frac{5}{2}$, which fulfills \Cref{lem:lds-in-phi}. For vertices locating outside $G[S_{1}]$ but adjacent to some vertices in $S_{1}$, i.e., {\tt g}, {\tt r}, and {\tt q}, their compact numbers satisfies \Cref{lem:lds-out-phi}: $\phi(${\tt g}$)=\frac{4}{3} < \frac{5}{2} $, $\phi(${\tt r}$)=\phi(${\tt q}$)=2 < \frac{5}{2}$.
\end{example}

According to \Cref{lem:lds-in-phi,lem:lds-out-phi}, compact numbers are powerful to extract and verify LDSs from a graph $G$. If the compact numbers of all vertices are ready, we can partition the graphs into different subgraphs, where the vertices within the same subgraph share the same compact number and are connected, based on \Cref{lem:lds-in-phi}. Then, \Cref{lem:lds-out-phi} can be used to select LDSs from all subgraphs. Hence, the efficient computation of compact numbers is a key issue. To tackle this issue, we show that compact numbers can be obtained by solving a convex program in the next subsection.

\subsection{Compact Number and CP}
\label{sec:cn:cn-cp}

In this subsection, we first review the convex program (CP) for densest subgraphs by \citeauthor{danisch2017large}~\cite{danisch2017large}. Next, we theoretically prove that compact numbers can be obtained by solving the convex program.

\begin{equation}\label{equ:cp}
    \begin{aligned}
        \mathsf{CP}(G) &  & \min                        & \sum_{u \in V} r_{u}^{2}          &                     \\
                       &  & r_{u}                       & = \sum_{(u,v)\in E} \alpha_{u,v}, & \forall u \in V     \\
                       &  & \alpha_{u,v} + \alpha_{v,u} & \geq 1,                           & \forall (u,v) \in E \\
                       &  & \alpha_{u,v}, \alpha_{v,u}  & \geq 0.                           & \forall (u,v) \in E
    \end{aligned}
\end{equation}
The intuition of \cref{equ:cp} is that each edge $(u,v) \in E$ tries to distribute its weight, i.e., $1$, between its two endpoints $u$ and $v$ such that the weight sum received by the vertices are as even as possible. Because in the DS $G[S]$ of $G$, it is possible to distribute all edge weights such that the weight sum received by each vertex in $S$ is exactly $\mathsf{density}(G[S])=\frac{|E(S)|}{|S|}$. Following the intuition, $\alpha_{u,v}$ in \cref{equ:cp} indicates the weight assigned to $u$ from edge $(u,v)$, and $r_{u}$ is the weight sum received by $u$ from its adjacent edges. \MD{\citeauthor{danisch2017large} \cite{danisch2017large} used $r$ and $\alpha$ of \cref{equ:cp} to tentatively decompose the graph into a chain of subgraphs and applied max-flow to fine-grain and confirm the partitions such that each subgraph is nested within the next one with densities in descending order.}

Before proving the compact numbers can be obtained via solving \cref{equ:cp}, we briefly review the Frank-Wolfe-based iterative algorithm proposed by \citeauthor{danisch2017large}~\cite{danisch2017large} for optimizing \cref{equ:cp}. \FW (\Cref{alg:fw}) outlines the steps to optimize \cref{equ:cp}. \FW first initializes $\bm{\alpha}$ and $\bm{r}$ \MD{(lines 2--3)}. Then, in each iteration, each edge $(u,v)\in E$ attempts to distribute its weight, i.e., $1$, to the endpoint with a smaller $r$ value (lines 4--10).

\begin{algorithm}[t]
    \caption{Frank-Wolfe-based algorithm \cite{danisch2017large}}
    \label{alg:fw}
    \algofont
    \Fn{\FW{$G=(V,E)$, $N\in \mathbb{Z}_{+}$}}{
        \lForEach{$(u,v) \in E$}{$\alpha^{(0)}_{u,v} \gets \frac{1}{2}$, $\alpha^{(0)}_{v,u} \gets \frac{1}{2}$}
        \lForEach{$u \in V$}{$r^{(0)}_{u} \gets \sum_{(u,v)\in E} \alpha^{(0)}_{u,v}$}
        \For{$i=1, \dots, N$}{
            $\gamma_{i} = \frac{2}{i+2}$\;
            \ForEach{$(u,v)\in E$}{
                \lIf{$r^{(i-1)}_{u} < r^{(i-1)}_{v}$}{$\hat{\alpha}_{u,v} \gets 1$, $\hat{\alpha}_{v,u} \gets 0$}
                \lElse{$\hat{\alpha}_{u,v} \gets 0$, $\hat{\alpha}_{v,u} \gets 1$}
            }
            $\bm{\alpha}^{(i)} \gets (1-\gamma_{i}) \cdot \bm{\alpha}^{(i-1)} + \gamma_{i} * \hat{\bm{\alpha}}$\;
            \lForEach{$u \in V$}{$r^{(i)}_{u} \gets \sum_{(u,v)\in E} \alpha^{(i)}_{u,v}$}
        }
        \Return{$(\bm{r}^{(i)}, \bm{\alpha}^{(i)})$}\;
    }
\end{algorithm}

Next, we prove that the compact numbers can be extracted from the optimal solution of \cref{equ:cp}.

\begin{theorem}\label{thm:phi-r}
    Suppose $(\bm{r}^{*}, \bm{\alpha}^{*})$ is an optimal solution of \cref{equ:cp}. Then, each $r_{u}^{*}$ in $\bm{r}^{*}$ is exactly the compact number of $u$, i.e., $\forall u \in V, r_{u}^{*}=\phi(u)$.
\end{theorem}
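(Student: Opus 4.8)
The plan is to establish the identity $r_u^* = \phi(u)$ by a two-way inequality, exploiting the variational characterization of the optimal solution of $\mathsf{CP}(G)$. The key structural fact I would extract first is a \emph{local optimality / no-improving-transfer} condition: at an optimum of \cref{equ:cp}, there is no ``augmenting path'' of edges along which weight can be shifted from a high-$r$ vertex to a low-$r$ vertex while keeping feasibility. Concretely, if $r_u^* > r_v^*$ for some edge $(u,v)$ with $\alpha_{u,v}^* > 0$, one can decrease the objective by moving an infinitesimal amount of weight from $u$ to $v$; more generally, chaining such moves shows that for any $\rho$, the ``super-level set'' $B_\rho = \{u : r_u^* \ge \rho\}$ is closed under this transfer argument. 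This is exactly the style of argument used by Danisch et al.\ for their density-friendly decomposition, so I would cite \cite{danisch2017large} for the underlying KKT/flow-decomposition machinery and adapt it.

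The first direction, $r_u^* \le \phi(u)$ (equivalently, $\phi(u) \ge r_u^*$), is proven by exhibiting a $\rho$-compact subgraph containing $u$ with $\rho = r_u^*$. Set $\rho = r_u^*$ and let $S = B_\rho = \{w : r_w^* \ge \rho\}$. I would first argue $G[S]$ is the right witness: using the feasibility constraint $\alpha_{w,x}^* + \alpha_{x,w}^* \ge 1$ and the fact that edges leaving $S$ carry essentially all their weight \emph{into} $S$ (otherwise a transfer would improve the objective, since vertices outside $S$ have strictly smaller $r^*$), every vertex in $S$ receives weight at least $\rho$ \emph{from edges internal to $S$}. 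Then for any $U \subseteq S$, the edges incident to $U$ within $G[S]$ carry total weight at least $\rho|U|$ but each such edge carries weight at most $1$, so at least $\rho|U|$ edges must be removed when $U$ is deleted — this is precisely $\rho$-compactness of $G[S]$ (modulo handling connectivity, for which one passes to the connected component of $u$ in $G[S]$, whose internal $r$-values are unaffected by the transfer argument). Hence $\phi(u) \ge \rho = r_u^*$.

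The second direction, $r_u^* \ge \phi(u)$, starts from a $\rho$-compact subgraph $G[T]$ containing $u$ with $\rho = \phi(u)$, and must show $r_u^* \ge \phi(u)$ — but since $r^*$ values within an LDS/compact region are forced to be \emph{equal} (think of $G[T]$ as nearly uniform), it suffices to show $\frac{1}{|T|}\sum_{w \in T} r_w^* \ge \mathsf{density}(G[T])$ combined with the transfer/no-augmenting argument forcing $r_u^*$ up to at least the level of the densest compact region through $u$. For the averaged bound: $\sum_{w\in T} r_w^* \ge \sum_{w\in T}\sum_{(w,x)\in E, x\in T}\alpha_{w,x}^* = \sum_{(w,x)\in E(T)}(\alpha_{w,x}^* + \alpha_{x,w}^*) \ge |E(T)| = \mathsf{density}(G[T])\cdot|T|$. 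Then $\rho$-compactness is what upgrades this average into a pointwise bound at $u$: if some vertex of $T$ had $r^* < \rho$, the min-cut/transfer structure guaranteed by $\rho$-compactness (every vertex subset of $T$ is ``well-connected'') would let us route weight to equalize, contradicting optimality. I expect this pointwise upgrade — turning the global average estimate into $r_u^* \ge \phi(u)$ for the specific vertex $u$ using $\rho$-compactness — to be the main obstacle, and the natural tool is to invoke the KKT conditions of \cref{equ:cp} together with a flow/cut argument on $G[T]$ analogous to \Cref{lem:lds-in-phi}.
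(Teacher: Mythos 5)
Your first direction ($\phi(u)\ge r_u^*$, witnessed by the super-level set $S=\{w: r_w^*\ge r_u^*\}$) is essentially the paper's argument and is sound, though one sentence is stated backwards: at optimality a crossing edge assigns its \emph{entire} unit of weight to the endpoint with the smaller $r^*$, i.e.\ to the vertex \emph{outside} $S$, so the $S$-endpoint receives zero from crossing edges. That fact (not ``weight carried into $S$'') is exactly why each $w\in S$ must collect its full $r_w^*\ge r_u^*$ from edges internal to $S$, which is what your counting step uses. Your explicit handling of connectivity by passing to the component of $u$ is a point the paper actually glosses over.

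The second direction has a genuine gap. The averaged estimate $\frac{1}{|T|}\sum_{w\in T}r_w^*\ge\mathsf{density}(G[T])\ge\rho$ is correct but cannot be upgraded to $r_u^*\ge\rho$ the way you suggest: the $r^*$ values inside a $\rho$-compact subgraph are \emph{not} forced to be equal (e.g.\ $G[S_{1}\cup S_{3}]$ in \Cref{fig:graph} is $\tfrac{3}{2}$-compact yet contains vertices with $r^*=\tfrac{5}{2}$ and $r^*=\tfrac{3}{2}$), and ``route weight to equalize, contradicting optimality'' is not justified --- $\rho$-compactness does not by itself supply an augmenting path of positive-$\alpha$ edges from a high-$r^*$ vertex down to $u$. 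The step you defer as ``the main obstacle'' is where the actual work is, and the paper closes it not by averaging over all of $T$ but by applying the compactness condition to the \emph{sub-level} set $W=\{w\in T: r_w^*\le r_u^*\}\ni u$: at optimality every edge of $E(T)$ incident to $W$ delivers its whole unit of weight to its endpoint(s) in $W$ (an edge from $T\setminus W$ into $W$ goes entirely to the low-$r^*$ end), so the number of edges removed when deleting $W$ from $G[T]$ is at most $\sum_{w\in W}r_w^*\le r_u^*\,|W|$. If $r_u^*<\rho$ this contradicts $\rho$-compactness of $G[T]$ applied to the subset $W$. Replacing your averaging step with this sub-level-set count completes the proof.
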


\begin{proof}
    For a vertex $u\in V$, let $X=\{v \in V|r_{v}^{*} > r_{u}^{*}\}$, $Y=\{v \in V|r_{v}^{*} = r_{u}^{*}\}$, and $Z=\{v\in V|r_{v}^{*} < r_{u}^{*}\}$. Clearly, $u \in Y$.

    We first prove $G[X\cup Y]$ is a $r_{u}^{*}$-compact subgraph. Removing $Y$ from $G[X \cup Y]$ will result in the removal of $r_{u}^{*}\times |Y|$ edges in $G[X \cup Y]$. The optimality of $\bm{r}^{*}$ implies that
    \begin{enumerate}
        \item $\forall (x,y) \in E\cap(X\times Y)$, $r_{x}^{*} > r_{y}^{*}$ and $\alpha_{x,y}=0$;
        \item $\forall (y,z) \in E\cap(Y\times Z)$, $r_{y}^{*} > r_{z}^{*}$ and $\alpha_{y,z}=0$.
    \end{enumerate}
    \MD{Otherwise, suppose $\exists (x,y) \in E\cap(X\times Y)$ such that $\alpha_{x,y}>0$ without loss of generality. There exists $r_{x}^{*} - r_{y}^{*}>\epsilon>0$ such that we can reduce $\alpha_{x,y}$ and increase $\alpha_{y,x}$ by $\epsilon$, respectively. Hence, $r_{x}^{*}$ is reduced and $r_{y}^{*}$ is increased by $\epsilon$, respectively. After such modification, the objective function be decreased by $2\epsilon(r_{x}^{*}-r_{y}^{*}-\epsilon)$, which contradicts the optimality of $r^{*}$.}
    Hence, \MD{$r_{u}^{*} \times |Y|=\sum_{(y,x)\in E \wedge y \in Y}{\alpha_{y,x}}=|((X\times Y) \cup (Y\times Y)) \cap E|$,} which is exactly the number of edges to be removed when removing $Y$ from $G[X\cup Y]$.
    Besides, removing any $Q \subseteq X\cup Y$ from $G[X\cup Y]$ will result in the removal of at least $r_{u}^{*}\times |Q|$ edges, \MD{because $\sum_{(s,t)\in E(X\cup Y) \wedge s \in Q} 1 \geq \sum_{(s,t)\in E \wedge s \in Q}\alpha_{s,t} \geq r_{u}^{*}\times |Q|$, where the second inequality follows from the first condition of \cref{equ:cp}.}
    Hence, $G[X\cup Y]$ is a $r_{u}^{*}$-compact subgraph.

    For any other subgraph $G[S]$ containing $u$, $G[S]$ is a $\phi$-compact subgraph, where $\phi \leq r_{u}^{*}$. Clearly, $S\cap (Y\cup Z) \neq \emptyset$. Removing $S\cap (Y\cup Z)$ from $G[S]$ will result in the removal of no more than $r_{u}^{*} \times |S\cap (Y\cup Z)|$ edges, which can be proved by contradiction analogously.
\end{proof}

\Cref{thm:phi-r} shows that given an optimal solution $(\bm{r}^{*}, \bm{\alpha}^{*})$ to \cref{equ:cp}, the weight received by each vertex $\bm{r}^{*}_{u}$ is exactly the compact number of $u$. \Cref{eg:phi-r} further depicts \Cref{thm:phi-r} concretely.

\begin{example}
    \label{eg:phi-r}
    Consider the convex program \cref{equ:cp} for $G$ in \Cref{fig:graph}. We list the optimal solution $(\bm{r}^{*}, \bm{\alpha}^{*})$ values in \Cref{tab:r-alpha-values}. Some values of $\bm{\alpha}^{*}$ are omitted, as they can be inferred from other $r^{*}_{u}$ and $\alpha^{*}_{u,v}$ values. For each $u\in V$, $r^{*}_{u}$ is exactly the compact number of $u$, $\phi(u)$.
\end{example}

\begin{table}[t]
    \centering
    \renewcommand{\arraystretch}{1.5}
    \caption{$(\bm{r}^{*}, \bm{\alpha}^{*})$ to \cref{equ:cp} for $G$ in \Cref{fig:graph}.}
    \algofont
    \begin{tabular}{c|c||c|c}
        \hline
        Vertices                  & $r^{*}_{u}$   & Edges                                                      & $\alpha^{*}_{u,v}$ \\
        \hline
        \hline
        $u\in S_{1}$              & $\frac{5}{2}$ & $(u,v)\in E(S_{1})\cup E(S_{2})$                           & $\frac{1}{2}$      \\
        \hline
        $u\in S_{2}$              & $2$           & ({\tt g}, {\tt f}), ({\tt i}, {\tt j}), ({\tt r}, {\tt e}) & $1$                \\
        \hline
        $u\in S_{3}$              & $\frac{3}{2}$ & ({\tt g}, {\tt h}), ({\tt i}, {\tt h})                     & $\frac{1}{3}$      \\
        \hline
        {\tt g}, {\tt h}, {\tt i} & $\frac{4}{3}$ & \multicolumn{2}{c}{$\cdots$}                                                    \\
        \hline
    \end{tabular}%
    \label{tab:r-alpha-values}%
\end{table}%

According to Corollary 4.9 of \cite{danisch2017large}, it may need many iterations for \FW to obtain the optimal $(\bm{r}^{*}, \bm{\alpha}^{*})$. Fortunately, we found that the approximate solution provided by \FW already helps prune the vertices not contained in LDSs and extract LDSs, which will be discussed in the next section.


\section{Our LDS Algorithm}
\label{sec:algo}

In this section, we introduce our convex programming based LDS algorithm, named {\tt LDScvx}. \Cref{fig:algo} presents the workflow of {\tt LDScvx}. First, we compute an approximate solution $(\bm{r}, \bm{\alpha})$ via \FW; next, we extract stable groups, which can be used to bound the compact numbers, from $G$ based on $(\bm{r}, \bm{\alpha})$ via {\tt ExtractSG}; afterward, we prune invalid vertices according to their compact numbers and generate LDS candidates via {\tt Pruning}; finally, we verify the LDS candidates via {\tt IsLDS}. If the verification failed, we repeat the above process to provide higher-quality $(\bm{r}, \bm{\alpha})$ and compact number estimation.

\begin{figure}[htb]
    \includegraphics[width=0.48\textwidth]{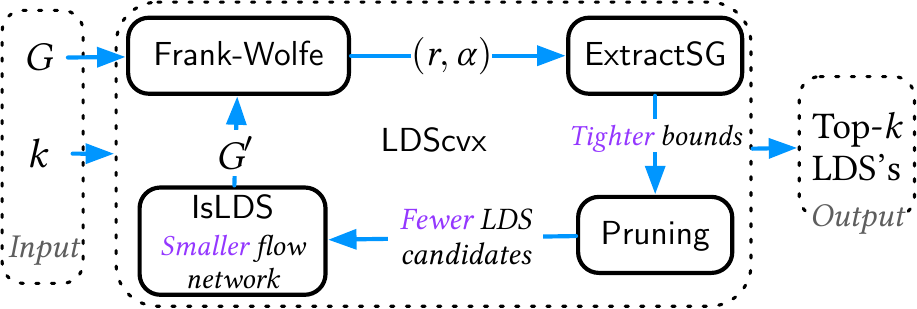}
    \caption{Algorithm workflow.}
    \label{fig:algo}
\end{figure}


\subsection{Extract Stable Groups}
\label{sec:algo:ext-sg}

In this subsection, we first introduce a new concept {\em stable group}, which can be used to provide upper and lower bounds of compact numbers, inspired by \MD{the stable subset in} \cite{danisch2017large}. Then, we discuss how to extract stable groups from the approximate solution $(\bm{r}, \bm{\alpha})$ provided by \FW.

\begin{definition}[Stable group]
    \label{def:stable-group}
    Given a feasible solution $(\bm{r}, \bm{\alpha})$ to $\mathsf{CP}(G)$, a stable group with respect to $(\bm{r}, \bm{\alpha})$ is a non-empty subset $S \in V$, if the following conditions hold.
    \begin{enumerate}
        \item For any $v \in V\setminus S$, $r_{v}$ satisfies either $r_{v} > \max_{u\in S}{r_{u}}$ or $r_{v} < \min_{u\in S}{r_{u}}$;
        \item For any $v \in V$, if $r_{v} > \max_{u\in S}{r_{u}}$, we have that $\forall (v,u) \in E\cap(\{v\}\times S)$, $\alpha_{v,u}=0$;
        \item For any $v \in V$, if $r_{v} < \min_{u\in S}{r_{u}}$, we have that $\forall (u,v) \in E\cap(S\times \{v\})$, $\alpha_{u,v}=0$.
    \end{enumerate}
\end{definition}

\begin{figure}[htb]
    \includegraphics[width=0.48\textwidth]{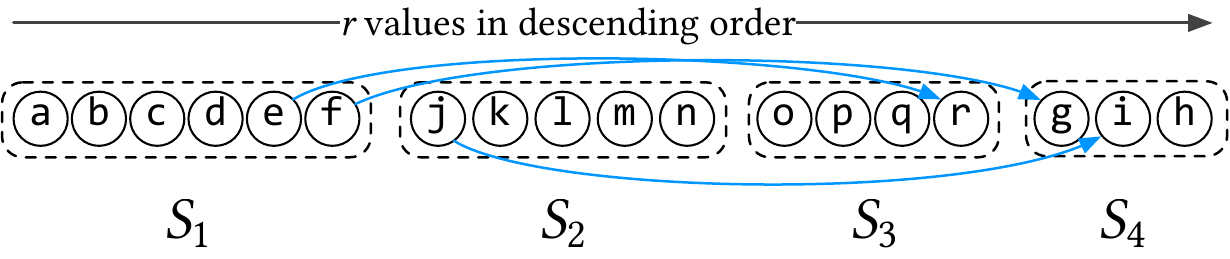}
    \caption{Stable groups w.r.t. $(\bm{r}^{*}, \bm{\alpha}^{*})$.}
    \label{fig:stable-groups}
\end{figure}

\Cref{def:stable-group} defines stable group.
We further use the stable groups w.r.t. $(\bm{r}^{*}, \bm{\alpha}^{*})$ (depicted in \Cref{fig:stable-groups}) as an example to illustrate the properties of the stable groups.
The definition indicates that if we sort all vertices $u\in V$ w.r.t. their $r$ values in descending order, we can observe the following properties:
\begin{enumerate}[leftmargin=0cm,itemindent=.5cm,labelwidth=\itemindent,labelsep=0cm,align=left]
    \item The vertices within the same stable group $S$ form a consecutive subsequence of the whole sequence. For example, the stable groups in \Cref{fig:stable-groups} give a partition to the entire sequence.
    \item The weights of edges whose endpoints fall into different stable groups are assigned to the endpoints with smaller $r$ values. In \Cref{fig:stable-groups}, we use arrows to denote weight assignments of edges across different stable groups. Note that edges within the same stable group are omitted. We can find that all arrows are pointed to the vertices with smaller $r$ values.
\end{enumerate}

Before discussing how to extract stable groups from $(\bm{r}, \bm{\alpha})$, we first prove by the following lemma that the stable groups help derive the upper and lower bounds of compact numbers.

\begin{lemma}
    \label{lem:phi-bounds}
    Given a feasible solution $(\bm{r}, \bm{\alpha})$ to \cref{equ:cp} and a stable group $S$ w.r.t. $(\bm{r}, \bm{\alpha})$, for all $u\in S$, we have that $\min_{v\in S}{r_{v}}\leq \phi(u) \leq \max_{v\in S}{r_{v}}$.
\end{lemma}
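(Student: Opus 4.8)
The plan is to bound $\phi(u)$ from above and below by constructing, in each case, an explicit $\rho$-compact subgraph (for the lower bound) and ruling out overly compact subgraphs (for the upper bound), using the three defining properties of a stable group. This mirrors the structure of the proof of \Cref{thm:phi-r}, but now works with an arbitrary feasible $(\bm{r},\bm{\alpha})$ rather than the optimum, so the clean equalities there become inequalities.

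\textbf{Lower bound.} Fix $u\in S$ and set $\rho_{\min}=\min_{v\in S} r_v$. I would consider the vertex set $W=\{v\in V : r_v \geq \rho_{\min}\}$, i.e.\ all vertices with $r$-value at least that of the lightest vertex of $S$; note $S\subseteq W$ and, by condition~1, no vertex of $V\setminus S$ has $r$-value strictly between $\min_{u\in S}r_u$ and $\max_{u\in S}r_u$, so $W$ consists of $S$ together with vertices strictly above $S$. The claim is that $G[W]$ is $\rho_{\min}$-compact: for any $Q\subseteq W$, the number of edges of $G[W]$ incident to $Q$ is at least $\sum_{v\in Q}\sum_{(v,t)\in E(W)}\alpha_{v,t}$, and I want this to be $\geq \rho_{\min}|Q|$. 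The subtlety is that an edge $(v,t)$ with $v\in Q\subseteq W$ might have $t\notin W$, so its $\alpha_{v,t}$ mass would not be counted inside $E(W)$; but condition~3 (applied to such $t$, which lies strictly below $S$ hence below $\rho_{\min}$) forces $\alpha_{v,t}=0$ whenever $v\in S$. For $v\in W\setminus S$ (strictly above $S$) I still need that edges leaving $W$ downward carry zero $\alpha$ from $v$'s side — this should follow because any neighbour $t$ below $\rho_{\min}$ is in particular below $\min_{u\in S}r_u$; I expect the stable-group conditions to be stated so that this is covered, or else one restricts to $W$ more carefully. Granting this, $\sum_{(v,t)\in E(W)}\alpha_{v,t}=r_v\geq\rho_{\min}$ for every $v\in W$, so summing over $Q$ gives $\geq\rho_{\min}|Q|$ edges removed; connectivity of $G[W]$ (or passing to the connected component containing $u$) finishes it, so $\phi(u)\geq\rho_{\min}$.

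\textbf{Upper bound.} Set $\rho_{\max}=\max_{v\in S}r_v$ and let $G[T]$ be any $\rho$-compact subgraph containing $u$; I must show $\rho\leq\rho_{\max}$. Suppose for contradiction $\rho>\rho_{\max}$. Consider $Q=T\cap(\text{vertices with }r\text{-value}\leq\rho_{\max})$, which is nonempty since $u\in S\subseteq Q$. Removing $Q$ from $G[T]$ must delete at least $\rho|Q|$ edges by $\rho$-compactness. On the other hand, each deleted edge is incident to some $v\in Q$, and its total $\alpha$-mass split across its two endpoints is exactly $1$; the mass landing on $v$ is $\alpha_{v,\cdot}\le r_v\le\rho_{\max}$ when summed over $v$'s incident edges — more precisely, the number of edges of $G[T]$ incident to $Q$ is at most $\sum_{v\in Q}(\text{number of incident edges})$, but the sharp bound comes from $\sum_{v\in Q}r_v=\sum_{v\in Q}\sum_{(v,t)\in E}\alpha_{v,t}$ together with condition~2, which kills the $\alpha$-mass on the upper side of every edge crossing out of $Q$ upward. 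So the edges incident to $Q$ within $G[T]$ number at most $\sum_{v\in Q}r_v\le\rho_{\max}|Q|<\rho|Q|$, contradiction. Hence $\phi(u)\le\rho_{\max}$.

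\textbf{Main obstacle.} The delicate point in both halves is bookkeeping the $\alpha$-mass on edges that cross the boundary of the vertex set I build ($W$ or $Q$): I need conditions~2 and~3 of the stable-group definition to guarantee that the boundary-crossing edges contribute their full weight ($1$ each) to the side I am counting, or $0$ to the side I am not counting, and this must hold not just for edges with one endpoint in $S$ but for all edges crossing the relevant threshold. If the stated conditions only directly constrain edges incident to $S$, I would need to either strengthen the set construction (take $W$ and $Q$ to be exactly $S$ together with everything strictly above/below, and then argue thresholds coincide) or invoke an auxiliary observation that a stable group's $r$-interval is ``isolated'' in the sorted order. Everything else — connectivity, the contradiction arithmetic $\rho|Q|>\rho_{\max}|Q|$ — is routine and parallels the optimal-solution proof.
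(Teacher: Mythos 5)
Your proposal takes a genuinely different route from the paper. You try to prove both bounds by explicit construction, mimicking the proof of \Cref{thm:phi-r}: build a $\min_{v\in S}r_v$-compact subgraph containing $u$ for the lower bound, and refute any $\rho$-compact subgraph with $\rho>\max_{v\in S}r_v$ for the upper bound. The paper does neither: it invokes \Cref{thm:phi-r} to identify $\phi(u)$ with $r_u^*$ and then argues by contradiction with the \emph{optimality} of $\bm{r}^*$, perturbing $\bm{\alpha}^*$ to transfer mass and strictly decrease the objective. Your approach is more concrete and self-contained in spirit, but the obstacle you flag in both halves is not a bookkeeping nuisance that the definition happens to cover --- it is a genuine gap that cannot be closed from \Cref{def:stable-group} alone.

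Concretely, conditions (2) and (3) of \Cref{def:stable-group} constrain only edges with one endpoint in $S$. In your lower bound, a vertex $v\in W\setminus S$ (with $r_v>\max_{u\in S}r_u$) may receive most of its weight $r_v$ from edges to vertices $t$ with $r_t<\min_{u\in S}r_u$; nothing forbids $\alpha_{v,t}>0$ there, so $\sum_{t\in W}\alpha_{v,t}$ can fall below $\rho_{\min}$ and $G[W]$ need not be $\rho_{\min}$-compact. Symmetrically, in the upper bound an edge from a vertex of $Q\setminus S$ (below $S$) to a vertex above $\max_{v\in S}r_v$ need not deliver its full unit to the $Q$-side, so the count of removed edges is not bounded by $\sum_{v\in Q}r_v$. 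These failures are realizable: e.g., a star whose hub $h$ retains the full weight of every pendant edge except the one to $u$ makes $S=\{u\}$ a stable group with $r_u=1$ while the graph, being a tree, contains no $1$-compact subgraph at all; a dual clique construction violates the upper bound. So the extra structure must come from outside the definition of a single stable group --- namely from the fact that \ExtSG\ partitions \emph{all} of $V$ into stable groups, which forces every edge between two distinct groups to send its entire weight to the lower endpoint; with that global property every $v\in W$ retains all of $r_v$ inside $W$ and your construction goes through. You need to either add this hypothesis explicitly or fall back on the paper's optimality argument. (A secondary point: for a merely feasible solution $\alpha_{s,t}+\alpha_{t,s}$ may exceed $1$, which independently breaks your step ``number of edges incident to $Q$ is at least the $\alpha$-mass received by $Q$''; normalize to equality, as the Frank--Wolfe iterates always satisfy.)
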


\begin{proof}
    We prove the lemma by contradiction. According to \Cref{thm:phi-r}, $\forall u \in V$, $\phi(u)=r_{u}^{*}$. Suppose there exists a vertex $u \in S$ such that $r_{u}^{*}=\phi(u) < \min_{v\in S}{r_{v}} \leq r_{u}$. There must exist another vertex $x\in V$ such that $r_{x}^{*} = \phi(x) > r_{x}$. Here $r_{x} \geq \min_{v\in S}r_{v}$ according to the 3-rd condition in \Cref{def:stable-group}. There exists $\epsilon > 0$ such that we could increase $r_{u}^{*}$ by $\epsilon$ and decrease $r_{x}^{*}$ by $\epsilon$, via manipulating the corresponding $\bm{\alpha}^{*}$ values, to strictly decrease $\Vert\bm{r}^{*}\Vert_{2}^{2}$ (i.e., the objective function). This contradicts that $\bm{r}^{*}$ is the optimal solution to $\mathsf{CP}(G)$.
\end{proof}
According to \Cref{lem:phi-bounds}, the minimum and maximum $r$ values in the stable group $S$ are the lower and upper bounds of compact numbers of vertices in $S$, respectively.

Now the key issue becomes how to extract stable groups from the approximate solution $(\bm{r}, \bm{\alpha})$ provided by \FW.
\MD{Our stable groups closely connect to the stable subsets in \cite{danisch2017large}.
    Compared to our stable group, the stable subset does not allow vertices not in the subset having larger $r$ values than vertices in the subset.
    For example, $S_{2}$ cannot be a stable subset, but $S_{1}\cup S_{2}$ is a stable subset.
    Hence, the stable subset \cite{danisch2017large} can be treated as the union of several stable groups with largest $r$ values.
    Due to the close relationship between our stable groups and stable subsets in \cite{danisch2017large}, we adapt the stable subset extraction method \cite{danisch2017large} to extract our stable groups.
    In general,}
we first extract the tentative stable groups from $\bm{r}$, and then verify or merge the tentative ones via \Cref{def:stable-group} to give the stable groups. For the optimal solution $(\bm{r}^{*}, \bm{\alpha}^{*})$ to $\mathsf{CP}(G)$, we can just group the vertices with the same $r$ values to form the stable groups (refer \Cref{tab:r-alpha-values} and \Cref{fig:stable-groups}). Although we cannot perform such aggregation based on $(\bm{r}, \bm{\alpha})$, the stable groups obtained from $(\bm{r}^{*}, \bm{\alpha}^{*})$ can still provide useful heuristic for us.

Consider the stable groups in \Cref{fig:stable-groups}. After sorting the vertices according to their $r$ values, let $V_{[1:i]}$ denote the first $i$ vertices in the sequence. We can find that the index $i$ of the last vertex in each stable group is exactly $\arg\max_{j\geq i}\mathsf{density}(G[V_{[1:j]}])$, i.e., the subgraph induced by more vertices than the first $i$ vertices cannot have a larger density. For example, the index of the last element in $S_{1}$ is 6, and the density of $G[V_{[1:6]}]$ is 3 and is the maximum among all subgraphs induced by $G[V_{[1:j]}]$, where $j \geq 6$. Hence, we use $\arg\max_{j\geq i}\mathsf{density}(G[V_{[1:j]}])$ to find indices for extracting stable group candidates. \MD{Note we break the tie via taking a larger index value for $j$, when two index values give the same density.} Then, we verify each candidate by \Cref{def:stable-group}. Specifically, we restrict all edges adjacent to the candidate stable group satisfying conditions (2) and (3) of \Cref{def:stable-group} by modifying $\bm{\alpha}$ and $\bm{r}$ and then check whether condition (1) of \Cref{def:stable-group} is fulfilled. If so, the candidate becomes a stable group. Otherwise, it may need to be merged with the next candidate.

\begin{algorithm}[tb]
    \caption{Extract stable groups from $(\bm{r}, \bm{\alpha})$}
    \label{alg:ext-sg}
    \algofont
    \Fn{\ExtSG{$G=(V,E)$, $\bm{r}$, $\bm{\alpha}$}}{
    sort vertices in $V$ according to $\bm{r}$: $r_{u_{1}} \geq r_{u_{2}} \geq \cdot \geq r_{u_{n}}$\;
    $I \gets \{i| i=\arg\max_{i\leq j \leq n} \mathsf{density}(G[V_{[1:j]}])\}$\;
    $\hat{\mathcal{S}} \gets$ partition $V$ according to $I$\;
    $\mathcal{S} \gets \emptyset$, $S\gets \emptyset$\;
    \While{$\hat{\mathcal{S}}$ is not empty}{
    $S' \gets$ pop out the first candidate from $\hat{\mathcal{S}}$\;
    $S \gets S \cup S'$\;
    {\color{teal}{\tcp{via \Cref{def:stable-group}}}}
    \If{$S$ is a stable group}{put $S$ into $\mathcal{S}$, $S \gets \emptyset$\;}
    }
    \ForEach{$S\in \mathcal{S}$}{
        \lForEach{$u \in S$}{$\overline{\phi}(u) \gets \min\{\overline{\phi}(u), \max_{v\in S} r_{v}\}$ }
        \lForEach{$u \in S$}{$\underline{\phi}(u) \gets \max\{\underline{\phi}(u), \min_{v\in S} r_{v}\}$ }
    }
    \Return{$\mathcal{S}$, $\overline{\phi}$, $\underline{\phi}$}
    }
\end{algorithm}

Based on the above discussion, we present the algorithm to extract stable groups from $(\bm{r}, \bm{\alpha})$, named \ExtSG, in \Cref{alg:ext-sg}. \ExtSG first sorts the vertices in $V$ according to their $r$ values descendingly (line 2). Then, \ExtSG finds the indices $I$ and extracts stable group candidates $\hat{\mathcal{S}}$ following the above heuristic (lines 3--4). Next, we check the candidate in $\hat{\mathcal{S}}$ one-by-one via \Cref{def:stable-group} (lines 6--10): if the candidate is a stable group, push it into the list of stable groups $\mathcal{S}$ (lines 9--10); otherwise, the current candidate $S$ will be merged with the next candidate $S'$ in the next iteration (line 8). After all stable groups in $\mathcal{S}$ are obtained, we update the upper and lower bounds of compact numbers according to \Cref{lem:phi-bounds} (lines 11--14). Finally, \ExtSG returns the stable groups $\mathcal{S}$ and updated upper and lower bounds of compact numbers (line 14). 


\subsection{Prune Invalid Vertices}
\label{sec:algo:pruning}

In this subsection, we present how to prune the vertices, which are certainly not contained by any LDS, based on compact number bounds derived in \Cref{sec:algo:ext-sg}.

We begin with a powerful corollary.

\begin{corollary}[Pruning rule 1]
    \label{cor:uv-prune}
    For any $u \in V$, if \ $\exists (u,v)\in E$, such that $\underline{\phi}(v) > \overline{\phi}(u)$, $u$ is not contained by any LDS in $G$.
\end{corollary}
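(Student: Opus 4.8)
The plan is to prove the contrapositive: suppose $u$ is contained in some LDS $G[S]$; I will show that no neighbor $v$ of $u$ can satisfy $\underline{\phi}(v) > \overline{\phi}(u)$. The argument is a short chain combining the bounds with Lemmas~\ref{lem:lds-in-phi} and~\ref{lem:lds-out-phi}. First I would recall that $\underline{\phi}(v) \le \phi(v)$ and $\phi(u) \le \overline{\phi}(u)$ always hold by definition of the bounds. So it suffices to rule out $\phi(v) > \phi(u)$ for every edge $(u,v) \in E$ when $u$ lies in an LDS; then $\underline{\phi}(v) \le \phi(v) \le \phi(u) \le \overline{\phi}(u)$ follows, contradicting the hypothesis.

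Next I would split on whether $v \in S$ or $v \in V \setminus S$. If $v \in S$, then by Lemma~\ref{lem:lds-in-phi} we have $\phi(u) = \phi(v) = \mathsf{density}(G[S])$, so certainly $\phi(v) \le \phi(u)$. If $v \in V \setminus S$, then $(u,v)$ is an edge with $u \in S$ and $v \notin S$, so Lemma~\ref{lem:lds-out-phi} gives $\phi(u) > \phi(v)$, and again $\phi(v) \le \phi(u)$. In both cases the desired inequality $\phi(v) \le \phi(u)$ holds, so chaining with the bounds yields $\underline{\phi}(v) \le \overline{\phi}(u)$, which contradicts $\underline{\phi}(v) > \overline{\phi}(u)$. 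Hence $u$ cannot belong to any LDS, which is exactly the statement.

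I do not anticipate a serious obstacle here — the corollary is essentially an immediate packaging of the two preceding lemmas together with the defining property of the upper and lower bounds. The only point requiring mild care is being explicit that $\underline{\phi}(v) \le \phi(v)$ and $\phi(u) \le \overline{\phi}(u)$ by construction (the bounds are only ever tightened toward $\phi$ in \ExtSG and in later pruning steps, so these inequalities are invariants), and that the case analysis on membership of $v$ in $S$ is exhaustive because $v \in V$. If one wanted to be maximally careful, one could also note that $G[S]$ being an LDS guarantees $u \in V$ and the edge $(u,v)$ indeed has both endpoints in $G$, so Lemma~\ref{lem:lds-out-phi} applies verbatim. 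No new machinery is needed beyond what is already in the excerpt.
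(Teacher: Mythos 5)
Your proof is correct and is essentially the paper's argument: the paper's own proof is the one-liner ``directly follows from Lemma~\ref{lem:lds-out-phi}'', and your contrapositive with the chain $\underline{\phi}(v)\le\phi(v)\le\phi(u)\le\overline{\phi}(u)$ is just that argument spelled out. Your explicit handling of the $v\in S$ case via Lemma~\ref{lem:lds-in-phi} is a welcome bit of extra care that the paper leaves implicit.
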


\begin{proof}
    The corollary directly follows \Cref{lem:lds-out-phi}.
\end{proof}

\begin{example}[Pruning rule 1]
    \label{eg:uv-prune}
    Reconsider the graph $G$ in \Cref{fig:graph} and the stable groups in \Cref{fig:stable-groups}. For vertices in $S_{3}$, we can prune {\tt r}, as shown in \Cref{fig:pruning}. Because for edge $(\mathtt{e}, \mathtt{r})$, we have $\overline{\phi}(\mathtt{r})=\frac{3}{2} < \underline{\phi}(\mathtt{e})=\frac{5}{2}$, respectively. Similarly, the two vertices {\tt g} and {\tt i} in $S_{4}$ are also pruned by Pruning rule 1 (\Cref{cor:uv-prune}).
\end{example}

\begin{figure}[htb]
    \includegraphics[width=0.4\textwidth]{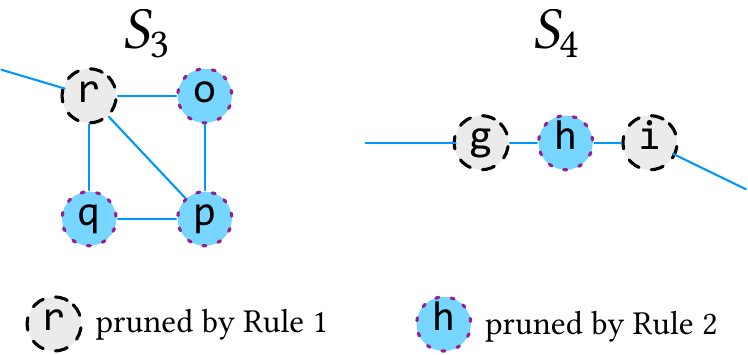}
    \caption{Pruning rules illustration.}
    \label{fig:pruning}
\end{figure}

Following \Cref{eg:uv-prune}, we can find that after removing {\tt r} and {\tt q} from $G$, denoting the graph after removal by $G'$, there is only one edge adjacent to {\tt o} and {\tt q}, as shown in \Cref{fig:pruning}. We can find that $\overline{\phi}_{G'}(\mathtt{o})=\overline{\phi}_{G'}(\mathtt{q})=1 < \underline{\phi}(\mathtt{o})=\underline{\phi}(\mathtt{q})=\frac{3}{2}$. It means that any LDS in $G$ cannot contain  {\tt o} and {\tt q}, because it needs to include some already pruned vertices, such as {\tt r}, to obtain a $\underline{\phi}(\mathtt{o})$-compact subgraph containing {\tt o} or {\tt q} in $G$. Hence, we derive another pruning rule from the above case.

\begin{lemma}[Pruning rule 2]
    \label{lem:uu-prune}
    Let $G'$ denote the graph after pruning vertices according to \Cref{cor:uv-prune} and \MD{\Cref{lem:uu-prune}}. For any vertex $u$ in $G'$, if $\overline{\phi}_{G'}(u) < \underline{\phi}(u)$, $u$ is not contained by any LDS in $G$.
\end{lemma}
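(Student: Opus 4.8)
The plan is to prove the lemma jointly with the invariant that the whole pruning procedure never discards a vertex lying in some LDS of $G$, by induction on the order in which vertices are removed. The base case and every removal performed by \Cref{cor:uv-prune} are already handled: \Cref{cor:uv-prune} (whose correctness rests on \Cref{lem:lds-out-phi}) says that rule~1 only deletes vertices outside every LDS. So I would assume inductively that $G'$ is the current graph and that every vertex already pruned — by rule~1 or rule~2 — belongs to no LDS of $G$; equivalently, $S\subseteq V(G')$ for every LDS $G[S]$ of $G$, since any $w\in S$ that had been pruned would, by the induction hypothesis, lie in no LDS, contradicting $w\in S$.

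Next I would take $u\in V(G')$ with $\overline{\phi}_{G'}(u)<\underline{\phi}(u)$ and suppose for contradiction that $u\in S$ for some LDS $G[S]$ of $G$. By the induction hypothesis $S\subseteq V(G')$, so the induced subgraph $G[S]$ is a subgraph of $G'$; since $\rho$-compactness is an intrinsic property of an induced subgraph (it only constrains edges and vertices of that subgraph, \Cref{def:rho-compact}), $G[S]$ is still a $\mathsf{density}(G[S])$-compact subgraph of $G'$ containing $u$. Applying \Cref{def:cn} inside $G'$ then gives $\phi_{G'}(u)\ge \mathsf{density}(G[S])$. Combining this with \Cref{lem:lds-in-phi}, which yields $\mathsf{density}(G[S])=\phi(u)$, with $\phi(u)\ge\underline{\phi}(u)$ by definition of the lower bound, and with the fact that $\overline{\phi}_{G'}(u)$ is a valid upper bound of $\phi_{G'}(u)$, I obtain
\[
\overline{\phi}_{G'}(u)\;\ge\;\phi_{G'}(u)\;\ge\;\mathsf{density}(G[S])\;=\;\phi(u)\;\ge\;\underline{\phi}(u),
\]
contradicting $\overline{\phi}_{G'}(u)<\underline{\phi}(u)$. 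Hence $u$ lies in no LDS of $G$, which both proves the lemma and extends the invariant to the removal of $u$.

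The main obstacle is conceptual rather than computational: the rule is self-referential ($G'$ is defined as the graph obtained after pruning with \emph{both} \Cref{cor:uv-prune} and \Cref{lem:uu-prune}), so the statement is only meaningful relative to the pruning process, and it must be established together with the ``no LDS vertex has been pruned'' invariant — this is exactly what the induction above discharges, and it is important to order the argument so that $S\subseteq V(G')$ is available \emph{before} invoking $\phi_{G'}$. A secondary point I would make explicit is why $\overline{\phi}_{G'}(u)$ is a legitimate upper bound on the compact number of $u$ within $G'$ (it never exceeds $d_{G'}(u)$, and \ExtSG only ever tightens it, never loosens it); once that is granted, the chain of inequalities is immediate.
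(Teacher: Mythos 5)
Your proof is correct and follows essentially the same approach as the paper's (much terser) argument: the paper reasons that a $\underline{\phi}(u)$-compact subgraph containing $u$ would have to use already-pruned vertices, which no LDS can contain, while you run the equivalent contrapositive by observing that an LDS containing $u$ would survive intact in $G'$ and force $\overline{\phi}_{G'}(u)\ge\underline{\phi}(u)$. Your version is more rigorous in making explicit the induction over the pruning order and the invariant that no LDS vertex is ever removed, both of which the paper leaves implicit.
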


\begin{proof}
    $\overline{\phi}_{G'}(u) < \underline{\phi}(u)$ means that only relying on the vertices in $G'$ cannot form a $\underline{\phi}(u)$-compact subgraph containing $u$, which means that some already pruned vertices are needed. Hence, $u$ cannot be contained by any LDS in $G$.
\end{proof}

To efficiently compute $\overline{\phi}_{G'}(u)$ in $G'$, we use $k$-core \cite{seidman1983network}, which is a cohesive subgraph model, following \cite{qin2015locally}.

\begin{definition}[$k$-core and core number \cite{seidman1983network}]
    The $k$-core of $G$ is the maximal subgraph $G[S]$ such that for any $u\in S$, $d_{G[S]}(u) \geq r$. For any $u \in V$, the {\em core number} of $u$, denoted by $\mathsf{core}_{G}(u)$, is the largest $k$ such that $u$ is contained in the $k$-core of $G$.
\end{definition}

\begin{lemma}\label{lem:upper-bound-cp-updated}
    Let $G'$ denote the graph after pruning invalid vertices. $\mathsf{core}_{G'}(u)$ provides an upper bound of $\phi_{G'}(u)$.
\end{lemma}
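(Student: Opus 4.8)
The plan is to show that any $\rho$-compact subgraph containing $u$ in $G'$ forces $u$ to have core number at least $\rho$ in $G'$; taking $\rho = \phi_{G'}(u)$ then gives $\mathsf{core}_{G'}(u) \geq \phi_{G'}(u)$. First I would recall from \Cref{def:cn} that $\phi_{G'}(u)$ is the largest $\rho$ such that $u$ lies in some $\rho$-compact subgraph of $G'$; fix such a witnessing subgraph $G'[S]$ with $u \in S$ and $\rho = \phi_{G'}(u)$. The key step is to argue that $G'[S]$ has minimum degree at least $\rho$: indeed, for any single vertex $w \in S$, \Cref{def:rho-compact} applied to the singleton $\{w\}$ says that removing $w$ deletes at least $\rho \cdot 1 = \rho$ edges, and the number of edges deleted is exactly $d_{G'[S]}(w)$. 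Hence $d_{G'[S]}(w) \geq \rho$ for every $w \in S$, i.e. $d_{G'[S]}(w) \geq \lceil \rho \rceil$ since degrees are integers.

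Next I would invoke the definition of $k$-core: $G'[S]$ is a subgraph of $G'$ in which every vertex has degree at least $\lceil \rho \rceil$, so $G'[S]$ is contained in the $\lceil \rho \rceil$-core of $G'$. Consequently $u$, being in $S$, belongs to the $\lceil \rho \rceil$-core of $G'$, which by the definition of core number yields $\mathsf{core}_{G'}(u) \geq \lceil \rho \rceil \geq \rho = \phi_{G'}(u)$. This is exactly the claimed upper bound $\phi_{G'}(u) \leq \mathsf{core}_{G'}(u)$.

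I do not expect a serious obstacle here; the one subtlety worth a sentence in the write-up is the connectedness requirement baked into \Cref{def:rho-compact} (a $\rho$-compact subgraph must be connected), but this only makes $G'[S]$ more restricted and does not interfere with the degree argument — the singleton-removal bound holds regardless. A second minor point is the integrality rounding: since $\rho$ may be fractional (e.g. $\tfrac52$ in the running example) while core numbers are integers, the bound we actually prove is $\mathsf{core}_{G'}(u) \geq \lceil \phi_{G'}(u) \rceil$, which is a fortiori an upper bound on $\phi_{G'}(u)$; I would state it in the weaker form $\phi_{G'}(u) \leq \mathsf{core}_{G'}(u)$ to match how the bound is used downstream in Pruning rule 2.
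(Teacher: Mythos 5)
Your proof is correct. The paper itself does not spell out an argument here---it simply defers to Lemma~4.7 of Qin et al.~\cite{qin2015locally}---and the singleton-removal argument you give (apply \Cref{def:rho-compact} to $S=\{w\}$ to get minimum degree at least $\rho$ in the witnessing $\rho$-compact subgraph, hence containment in the $\lceil\rho\rceil$-core) is exactly the standard reasoning behind that cited lemma, so your write-up is a faithful, self-contained version of the same approach rather than a genuinely different route. Your two caveats (connectedness is harmless; the integrality rounding only strengthens the bound) are both handled correctly.
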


\begin{proofsketch}
    The lemma follows from Lemma 4.7 of \cite{qin2015locally}.
\end{proofsketch}

Following the above discussion about \Cref{eg:uv-prune}, \Cref{lem:upper-bound-cp-updated} provides a useful approach to obtain the upper bounds of compact numbers of {\tt o}  and {\tt p} after {\tt r} and {\tt q} are removed.

\begin{example}[Pruning rule 2]
    After {\tt r} is pruned from $G$ in \Cref{eg:uv-prune}, we obtain the upper bounds of compact numbers of {\tt o}, {\tt q}, and {\tt p} in the residual graph $G'$ via \Cref{lem:upper-bound-cp-updated}: $\overline{\phi}_{G'}(\mathtt{o})=\overline{\phi}_{G'}(\mathtt{q})=\overline{\phi}_{G'}(\mathtt{p})=1$. Then, we apply Pruning rule 2 (\Cref{lem:uu-prune}) to remove {\tt o}, {\tt q}, and {\tt p} from the graph, as shown in \Cref{fig:pruning}. Analogically, {\tt h} in $S_{4}$ is also pruned.
\end{example}

Following the above two examples, we further compare our pruning rules with those in {\tt LDSflow} \cite{qin2015locally}. {\tt LDSflow} mainly used core numbers for pruning: for vertex $u$, if $(u,v)\in E$ and $\mathsf{core}_{G}(u)<\frac{\mathsf{core}_{G}(v)}{2}$, or $\mathsf{core}_{G'}(u)<\frac{\mathsf{core}_{G}(u)}{2}$, $u$ can be pruned, where $G'$ denotes the graph with some vertices already pruned. From the perspective of compact numbers, the rationale behind the pruning in {\tt LDSflow} is that they actually used core numbers to provide relatively loose upper and lower bounds for compact numbers.

\begin{algorithm}[htb]
    \caption{Prune invalid vertices}
    \label{alg:pruning}
    \algofont
    \Fn{\Pruning{$G=(V,E)$, $\mathcal{S}$, $\overline{\phi}$, $\underline{\phi}$}}{
        $G'=(V', E') \gets G$\;
        \ForEach{$(u, v)\in E$}{\lIf{$\overline{\phi}(u)<\underline{\phi}(v)$}{remove $u$ from $G'$}}
        compute $\mathsf{core}_{G'}(u)$ for all vertices in $G'$\;
        \While{$\exists u \in V'$, $\mathsf{core}_{G'}(u) < \underline{\phi}(u) $}{
            remove $u$ from $G'$\;
            update core numbers of vertices adjacent to $u$\;
        }
        \lForEach{stable group $S \in \mathcal{S}$}{$S \gets S \cap V'$}
        \Return{$\mathcal{S}$\;}
    }
\end{algorithm}

Based on the two pruning rules, i.e., \Cref{cor:uv-prune,lem:uu-prune}, we present our pruning algorithm, named {\tt Pruning}, in \Cref{alg:pruning}. We first replicate the graph $G$ to $G'$ (line 1). Then, we apply Pruning rule 1 (\Cref{cor:uv-prune}) to remove invalid vertices (lines 3--4). Next, we compute the core numbers for all vertices in $G'$ (line 5). Afterwards, Pruning rule 2 (\Cref{lem:uu-prune}) is applied (lines 6--8). Finally, \Pruning updates the stable groups by intersecting them with the vertices not been pruned (line 9), and returns the updated stable groups (line 10).

The following subsection will introduce how to extract and verify LDS from the updated stable groups.


\subsection{Extract and Verify LDS}
\label{sec:algo:verify}

Here, we discuss how to extract and verify the LDS from the stable groups after pruning. First, we can find that the vertices within the stable group $S$ with largest $r$ values in $\mathcal{S}$ satisfy \Cref{lem:lds-out-phi} w.r.t. the current valid vertices, otherwise they are pruned in \Pruning. But, we are not sure whether $G[S]$ is the densest among all its subgraphs (i.e., $G[S]$ is self-densest) and whether there exists another larger $\mathsf{density}(G[S])$-compact subgraph of $G$ containing $G[S]$, according to \Cref{def:lds}.

We first examine whether $G[S]$ is self-densest because the computation cost for self-densest examination is smaller than checking whether it is a maximal $\mathsf{density}(G[S])$-compact subgraph.

\MD{Verifying whether $G[S]$ is the densest among all subgraphs of $G[S]$ is one step in the binary search of computing densest subgraph \cite{goldberg1984finding}, i.e., checking whether there is a subgraph with higher density than $\mathsf{density}(G[S])$. Generally, we use {\tt IsDensest} to verify the self-densest via computing the max-flow on the flow network generated based on $G[S]$ following \cite{sun2020kclist++}.
}

If $G[S]$ is the DS of itself (i.e., {\tt IsDensest} returns {\sf True}), we need to further verify whether $G[S]$ is the maximal $\mathsf{density}(G[S])$-compact subgraph in $G$. We first review how $G[S]$ is verified as an LDS in \cite{qin2015locally}, and next we give our improved verification algorithm.

\citeauthor{qin2015locally} \cite{qin2015locally} first use breadth-first-search starting from $G[S]$ to traverse each vertice $u$ with $\overline{\phi}(u) \geq \mathsf{density}(G[S])$. Recall that they use $\mathsf{core}_{G}(u)$ as $\overline{\phi}(u)$ (briefed in \Cref{sec:algo:pruning}). We use $G^{t}$ to denote the subgraph traversed. If there does not exist an already computed LDS in $G^{t}$, $G[S]$ is an LDS. Otherwise, \citeauthor{qin2015locally} construct a flow network based on $G^{t}$, then use the min-cut algorithm to find all maximal $\mathsf{density}(G[S])$-compact subgraphs in $G^{t}$, and check whether $G[S]$ is maximal.

We can observe that the verification algorithm in \cite{qin2015locally} needs to compute the min-cut on the flow-network based on the vertices with $\overline{\phi}(u) \geq \mathsf{density}(G[S])$. We will show that only the vertices with $\overline{\phi}(u) \geq \mathsf{density}(G[S])$ and $\underline{\phi}(u) \leq \mathsf{density}(G[S])$, which form a subset of the set of vertices needed in \cite{qin2015locally}, are needed to verify whether $G[S]$ is an LDS of $G$.

\Cref{alg:is-lds} presents our improved verification algorithm, named \IsLDS. \IsLDS first initializes an empty queue $Q$, an empty vertex set $U$, an empty edge set $L$, and $\rho$ with $\mathsf{density}(G[S])$ (lines 2--3). Next, the algorithm performs a breadth-first search starting from $S$ (lines 4--13). Specifically, \IsLDS uses $Q$ to store the vertices to be traversed. Each time, it pops out the first vertex $v$ from $Q$ (line 5), and iterates all neighbors of $v$ (lines 8--13). For each neighbor $w$, if $\underline{\phi}(w) > \rho$, $w$ will not be added to $U$ and $Q$, but a self loop of $v$ is added to $L$ (lines 10--11). If $\underline{\phi}(w) \leq \rho \leq \overline{\phi}(w)$, $w$ is added into $Q$ and $U$ (lines 12--13).
If \IsLDS does not encounter a vertex with $\underline{\phi}(w) > \rho$ during the traversal, we can return {\sf True} (line 14), which means that there does not exist an already computed LDS in the traversed subgraph.
Otherwise, we construct a subgraph $G^{t}$ with all edges induced by $U$ and self loops in $L$ (lines 15). Afterward, we compute all $\rho$-compact subgraphs in $G^{t}$ via min-cut following \cite{qin2015locally} (line 16). Finally, we return {\sf True} if $G[S]$ is maximal $\rho$-compact; otherwise, {\sf False} is returned (line 17).
We can observe that $G^{t}$ only contains vertices with $\overline{\phi}(w) \geq \mathsf{density}(G[S]) \geq \underline{\phi}(w)$. Hence, the flow network generated in our algorithm is much smaller than that generated in \cite{qin2015locally}.

\begin{algorithm}[tb]
	\caption{Check whether $G[S]$ is an LDS of $G$}
	\label{alg:is-lds}
	\algofont
	\Fn{\IsLDS{$S$, $\overline{\phi}$, $\underline{\phi}$, $G=(V,E)$}}{
		$Q \gets $ an empty queue, $\rho \gets \mathsf{density}(G[S]) $\;
		$U \gets \emptyset$, $L \gets \emptyset$, $needFlow\gets \mathsf{False}$\;
		\ForEach{$u \in S$} {
			\lIf{$u \notin U$ }{push $u$ to $Q$, insert $u$ into $U$}
			\While{$Q$ is not empty}{
				$v \gets$ pop out the front vertex in $Q$\;
				\ForEach{$(v,w) \in E$}{
					\If{$w \notin U$ }{
						\If{$\underline{\phi}(w)>\rho$}{add edge $(v, v)$ to $L$, $needFlow\gets \mathsf{True}$\;}
						\ElseIf{$\overline{\phi}(w)>\rho$}{push $w$ to $Q$, add $w$ into $U$\;}
					}
				}
			}
		}
		\lIf{not $needFlow$}{\Return{{\sf True}}}
		$G^{t} \gets (U, E(U)\cup L)$\;
		$G' \gets$ all $\rho$-compact subgraphs in $G^{t}$ via min-cut\;
		\Return{$G[S]$ is a connected component in $G'$}\;
	}

\end{algorithm}

Before proving the correctness of \Cref{alg:is-lds}, we use an example to illustrate the traversed subgraph $G^{t}$.

\begin{figure}[ht]
	\includegraphics[width=0.46\textwidth]{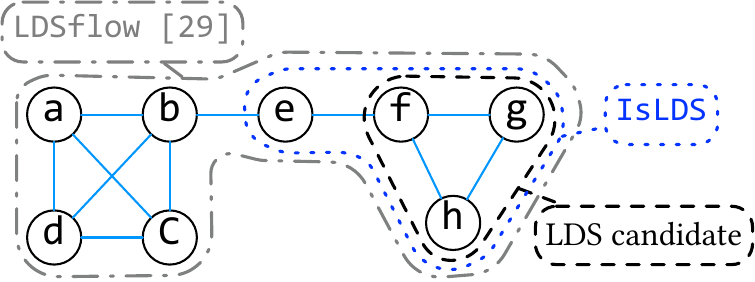}
	\caption{LDS verification illustration.}
	\label{fig:verify}
\end{figure}

\begin{example}
	Consider the graph in \Cref{fig:verify}. Suppose $S=\{\mathtt{f}, \mathsf{g}, \mathsf{h}\}$ and we want to verify whether $G[S]$ is an LDS of $G$. Clearly, $G[S]$ is the DS of itself. We illustrate the scope of the subgraph $G^{t}$ in \Cref{alg:is-lds}. Following \Cref{alg:is-lds}, $U$ contains {\tt f}, {\tt g}, {\tt h}, {\tt e} and $L$ consists of edge ({\tt e}, {\tt e}) because $\underline{\rho}(${\tt b}$) > 1=\mathsf{density}(G[S])$. Hence, $G^{t}$ in our \IsLDS contains 4 vertices and 5 edges, while the traversed subgraph in {\tt LDSflow} \cite{qin2015locally} contains all eight vertices (i.e., {\tt a}, {\tt b}, {\tt c}, {\tt d}, {\tt e}, {\tt f}, {\tt g}, {\tt h}) as shown in \Cref{fig:verify}, because core numbers of all vertices are larger than 1, i.e., $\mathsf{density}(G[S])$.
\end{example}


\begin{theorem}
	\label{thm:isLDS}
	Given a graph $G$ and a subgraph $G[S]$, where $G[S]$ is the DS of itself, $G[S]$ is an LDS of $G$ if and only if \Cref{alg:is-lds} returns {\sf True}.
\end{theorem}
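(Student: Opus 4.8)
\emph{Setup and reduction.} Write $\rho:=\mathsf{density}(G[S])$ and recall from \Cref{def:rho-compact} that a connected graph of density $\rho$ is $\rho$-compact iff every subgraph of it has density at most $\rho$. Since $G[S]$ is the densest subgraph of itself (we may assume $G[S]$ connected, which is required for it to be $\rho$-compact at all; a disconnected $S$ is verified component-wise), $G[S]$ is itself $\rho$-compact. Hence by \Cref{def:lds}, $G[S]$ is an LDS iff it is a \emph{maximal} $\rho$-compact subgraph, i.e.\ iff there is no $S'\supsetneq S$ with $G[S']$ $\rho$-compact. I would then show that if such $S'$ exists, any inclusion-maximal $\rho$-compact $G[M]\supseteq G[S]$ has density exactly $\rho$: deleting $M\setminus S$ from the $\rho$-compact $G[M]$ removes at least $\rho\,|M\setminus S|$ edges, so $|E(M)|-|E(S)|\ge\rho(|M|-|S|)$, giving $|E(M)|\ge\rho|M|$, while $\rho$-compactness gives the reverse inequality. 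Thus $G[M]$ is a maximal $\rho$-compact subgraph of density $\rho$, hence an LDS; it is moreover the unique LDS containing $S$, since two such would share a vertex of $S$, contradicting \Cref{lem:disjoint}. By \Cref{lem:lds-in-phi,lem:lds-out-phi}, every $w\in M$ has $\phi(w)=\rho$ and every $w\notin M$ adjacent to $M$ has $\phi(w)<\rho$. So it suffices to prove: \Cref{alg:is-lds} returns {\sf True} iff $M=S$.

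\emph{What the BFS sees.} Using only validity of the bounds, $\underline\phi(w)\le\phi(w)\le\overline\phi(w)$: for $w\in M$ we have $\underline\phi(w)\le\rho\le\overline\phi(w)$, so line~10 never triggers at $w$, and — modulo the threshold case $\overline\phi(w)=\rho$, where $\phi(w)=\rho$ forces $w$ into the same stable group as $S$, hence into $S$ — line~12 admits $w$ into $U$; since $M$ is connected and contains $S$, an induction on BFS layers gives $M\subseteq U$. Conversely, a neighbour $w$ of an explored vertex with $\overline\phi(w)\le\rho$ has $\phi(w)\le\rho$, hence by uniqueness of $M$ cannot lie in $M$ and is correctly dropped; and a neighbour with $\underline\phi(w)>\rho$ has $\phi(w)>\rho$, hence lies in a strictly denser LDS and again lies outside $M$ by \Cref{lem:lds-out-phi}. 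These last vertices are precisely those firing line~11, which records one self-loop on the current frontier vertex $v$ per escaping edge. Thus $G^{t}$ retains every vertex that could belong to a $\rho$-compact supergraph of $G[S]$, with self-loops bookkeeping the edges that leave $U$ towards strictly denser territory.

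\emph{The two return branches.} If $needFlow$ is never set (line~14), then no vertex adjacent to $U$ lies in a strictly denser LDS, which together with the previous paragraph forces $M=S$; so $G[S]$ is an LDS and returning {\sf True} is correct. Otherwise line~16 is reached, and the key claim is that $G[S]$ is a maximal $\rho$-compact subgraph of $G$ iff it is a maximal $\rho$-compact subgraph of $G^{t}$. For ``$\Leftarrow$'': any $\rho$-compact $G[S']$ with $S'\supsetneq S$ is contained in $M\subseteq U$, while every self-loop sits on a frontier vertex, which by \Cref{lem:lds-out-phi} is not in $M$, so the self-loops do not create a $\rho$-compact extension of $G[S]$ absent from $G$ — maximality transfers. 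For ``$\Rightarrow$'': passing from $G$ to $G^{t}$ deletes only edges leaving $U$, and for each such edge the matching self-loop restores its frontier endpoint's contribution to the compactness inequalities, so any $\rho$-compact extension of $G[S]$ inside $G^{t}$ yields one inside $G$. Given this equivalence, the correctness of the min-cut enumeration of $\rho$-compact subgraphs — inherited from \cite{qin2015locally} — lets line~17 decide maximality of $G[S]$ in $G^{t}$, hence in $G$, hence LDS-membership, correctly. For the forward direction ($G[S]$ an LDS $\Rightarrow$ {\sf True}) one instantiates $M=S$: \Cref{lem:lds-out-phi} already blocks $needFlow$ on the immediate boundary of $S$, and although the traversal may wander through vertices with a loose $\overline\phi>\rho$, by the equivalence the min-cut on $G^{t}$ — if ever invoked — still certifies maximality. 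The backward direction is the contrapositive of the above.

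\emph{Main obstacle.} The delicate point, where I expect to spend the real effort, is the equivalence ``maximal $\rho$-compact in $G$ $\Leftrightarrow$ maximal $\rho$-compact in the self-looped $G^{t}$'': one must argue that exactly \emph{one} self-loop per escaping edge is the correct amount of bookkeeping so that the min-cut on the small instance $G^{t}$ reproduces the dense/non-dense partition of the min-cut on $G$, and that the two classes of discarded vertices — those with $\overline\phi\le\rho$ and those with $\underline\phi>\rho$ — are provably irrelevant to whether $G[S]$ extends to a $\rho$-compact supergraph. The threshold cases $\overline\phi(w)=\rho$ and $\underline\phi(w)=\rho$, sitting exactly on the strict tests of lines~10 and~12, likewise need the stable-group argument above to be nailed down, as does the tacit assumption that the $S$ handed to \IsLDS is connected.
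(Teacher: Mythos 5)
There is a genuine gap, and it sits at the load-bearing step. Your structural claim that an inclusion-maximal $\rho$-compact $G[M]\supseteq G[S]$ has density exactly $\rho$ (and is therefore an LDS with $\phi\equiv\rho$ on $M$ and $\phi<\rho$ on its boundary) is false: $\rho$-compactness does not give the ``reverse inequality'' $|E(M)|\le\rho|M|$. Taking $Q=M$ in \Cref{def:rho-compact} gives $|E(M)|\ge\rho|M|$, i.e.\ the \emph{same} direction as your first bound. Concretely, let $S=\{u,v\}$ be a single edge and attach $u$ to one vertex of a $K_5$. Then $G[S]$ is the DS of itself with $\rho=\tfrac12$; the whole graph is $\tfrac12$-compact (removing any $Q$ deletes at least $\tfrac12\sum_{v\in Q}d(v)\ge\tfrac12|Q|$ edges, since the minimum degree is $1$), so $M$ is the whole graph, with density $\tfrac{12}{7}$, and the $K_5$-vertices of $M$ have compact number $2>\rho$. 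Hence $M$ is not an LDS, $\phi$ is not constantly $\rho$ on $M$, and --- crucially --- $M\not\subseteq U$: the high-$\phi$ vertices of $M$ are exactly the ones that fire line~10 of \Cref{alg:is-lds} and get replaced by self-loops.

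Once $M\subseteq U$ fails, both halves of your ``two return branches'' paragraph collapse. Your ``$\Leftarrow$'' direction rests on ``any $\rho$-compact extension is contained in $M\subseteq U$'' and on ``self-loops sit on vertices not in $M$'', neither of which holds in general; your ``$\Rightarrow$'' direction asserts that a self-loop ``restores its frontier endpoint's contribution'', but lifting a self-loop-using $\rho$-compact subgraph of $G^{t}$ back to $G$ actually requires re-attaching the excluded neighbours with $\underline{\phi}>\rho$ together with $(>\rho)$-compact subgraphs containing them, and verifying that the resulting union is still connected and $\rho$-compact. That lifting argument is precisely what the paper's proof is built around, and it is the real content of the theorem; your proof assumes it away by (falsely) deriving that no vertex of a $\rho$-compact extension is ever excluded from $U$. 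To repair the argument you must drop the structural lemma entirely and prove directly that (i) every $\rho$-compact supergraph of $G[S]$ in $G$, intersected with $U$ and augmented with the recorded self-loops, remains $\rho$-compact in $G^{t}$, and (ii) every $\rho$-compact supergraph of $G[S]$ in $G^{t}$ lifts to one in $G$ via the neighbour-reattachment argument; your correct observation that \Cref{lem:lds-out-phi} blocks self-loops on the immediate boundary of an LDS is useful for (ii) but does not substitute for it.
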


\begin{proof}
	First, if $G[S]$ is an LDS, \Cref{alg:is-lds} returns {\sf True}. Because only the loops in $L$ might increase the compact numbers in $G^{t}$ compared to the compact numbers in $G$. Hence, $G[S]$ is still an LDS in $G^{t}$. Otherwise the maximal $\mathsf{density}(G[S])$-compact subgraph containing $G[S]$ must contain a vertex $u$ with self loop, and then we can construct a larger $\mathsf{density}(G[S])$-compact subgraph in $G$ by including vertices with $\underline{\phi}(w) > \mathsf{density}(G[S])$ connected to $u$. Hence, the contradiction proves the claim.

	On the other direction, if $G[S]$ is not an LDS of $G$, we will find a larger $\mathsf{density}(G[S])$-compact subgraph containing $G[S]$ in $G^{t}$. Hence, $G[S]$ is also not an LDS in $G^{t}$. Thus, the algorithm returns {\sf False}.
\end{proof}

By now, we have introduced all building blocks of our LDS algorithm. In the next subsection, we will present our LDS algorithm {\tt LDScvx} by combining these components.

\subsection{The Overall Algorithm {\tt LDScvx}}
\label{sec:algo:lds}

Combining \Cref{alg:fw,alg:pruning,alg:ext-sg,alg:is-lds} with reference to \Cref{fig:algo}, we will obtain our LDS algorithm, named {\tt LDScvx} in \Cref{alg:LDScvx}.

\begin{algorithm}[tb]
    \caption{Our LDS algorithm, {\tt LDScvx}}
    \label{alg:LDScvx}
    \algofont
    \Input{A graph $G=(V,E)$ and two integers $k$ and $N$}
    \Output{LDSs with top-$k$ densities}
    preprocess $G$ by pruning vertices via core numbers\;
    $G' \gets G$\;
    $stk \gets$ an empty stack\;
    \While{$k>0$} {
        $(\bm{r},\bm{\alpha}) \gets$ \FW{$G'$, $N$}\;
        $\mathcal{S}, \overline{\phi}, \underline{\phi} \gets$ \ExtSG{$G'$, $\bm{r}$,$\bm{\alpha}$}\;
        $\mathcal{S} \gets$ \Pruning{$G'$, $\mathcal{S}$, $\overline{\phi}$, $\underline{\phi}$}\;
        \lForEach{$S \in \mathcal{S}$ reversely}{push $S$ into $stk$}
        $S \gets $ pop out the top stable group from $stk$\;
        \If{\IsDensest{$G[S]$}}{
            \lIf{\IsLDS{$S$, $\overline{\phi}$, $\underline{\phi}$, $G$}}{output $G[S]$, $k\gets k-1$}
            \lIf{$stk$ is empty}{break}
            $ S\gets $ pop out the top stable group from $stk$\;
        }
        $G'\gets G[S]$\;
    }
\end{algorithm}

In {\tt LDScvx}, we first assign $G$ to $G'$ (line 1) and initialize an empty stack $stk$ (line 2). Next, we extract the stable groups from the graph $G'$ via \FW, \ExtSG, and \Pruning (lines 4-6). Then, the algorithm pushes the stable groups in $\mathcal{S}$ reversely into $stk$ (line 7). For stable groups in $stk$, the corresponding $\underline{\phi}$ value is decremented from top to bottom. Afterward, the first stable group in $stk$, which is also the one with the highest $\underline{\phi}$ value, is poped out (line 8) and is examined by \IsDensest and \IsLDS (line 9--10). If $G[S]$ is an LDS, we output it and decrease $k$ by 1 (line 10). If $G[S]$ is not an LDS but is the DS of itself, we update $S$ as the top stable group from $stk$ (line 12). Next, we assign $G[S]$ to $G'$ for the next iteration (line 13). The above process is repeated until top-$k$ LDSs are found (line 3), or the stack is empty (line 11).

Next, we use an example to explain further the overall procedure of {\tt LDScvx} (\Cref{alg:LDScvx}).
\begin{example}
    We still use the graph $G$ in \Cref{fig:graph} as the example. Suppose we want to find top-2 LDSs from $G$. Assume we obtain $(\bm{r}^{*}, \bm{\alpha}^{*})$ in \Cref{tab:r-alpha-values} after \FW. Then, we will obtain the stable groups shown in \Cref{fig:stable-groups}, as well as the upper and lower bounds of compact numbers via \ExtSG. Next, in the \Pruning process, the vertices in $S_{3}$ and $S_{4}$ will be pruned according to \Cref{cor:uv-prune} and \Cref{lem:uu-prune}, which means that $\mathcal{S}$ contains $S_{1}$ and $S_{2}$. Afterward, $S_{2}$ and $S_{1}$ will be pushed into the stack $stk$.
    Now, the stable groups in $stk$ satisfy that the compact numbers of vertices in the stable group higher in $stk$ are larger than those in the stable group lower in $stk$.
    Next, we pop out the top stable group $S_{1}$ from $stk$ and verify that it is an LDS via \IsDensest and \IsLDS. We output $G[S_{1}]$ as the first LDS, pop out $S_{2}$ from $stk$, and repeat the above process. After $S_{2}$ is verified as an LDS, $stk$ is empty, we break while loop (line 10 in \Cref{alg:LDScvx}). In the end, we obtain two LDSs, $G[S_{1}]$ and $G[S_{2}]$.
\end{example}

{\bf Complexity.} The time complexity of {\tt LDScvx} is $O((N_{\mathsf{FW}}+N_{\mathsf{SG}})\cdot (n+m) + N_{\mathsf{Flow}}\cdot t_{\mathsf{Flow}})$, where $N_{\mathsf{FW}}$ is number of iterations that \FW needs, and $N_{\mathsf{SG}} \leq n$ is the number of stable groups in total, $N_{\mathsf{Flow}}$ is number of times \IsLDS and \IsDensest are called, and $t_{\mathsf{Flow}}$ denotes the time complexity of max-flow computation. Note that an iteration in \FW and verifying a stable group in \ExtSG both take $O(n+m)$ time cost. The memory complexity is $O(n+m)$.


{\color{blackkkk} \section{The LTDS Problem and Our Solution}
\label{sec:ltds}
\citeauthor{samusevich2016local} \cite{samusevich2016local} extended the LDS model from edge-based density to triangle-based density, termed locally triangle-densest subgraph (LTDS), and proposed a max-flow-based solution, named \texttt{LTDSflow}. We now study the LTDS problem and show how our previous {\tt LDScvx} can be adapted.

\subsection{The LTDS Problem}
Following the classic definition \cite{tsourakakis2014novel,tsourakakis2015k,fang2019efficient,sun2020kclist++,zhou2024counting}, the triangle-based density of a graph $G = (V, E)$, denoted by $\mathsf{tr}\text{-}\mathsf{density}(G)$, is defined as:
\begin{equation}
    \mathsf{tr}\text{-}\mathsf{density}(G) = \frac{|T|}{|V|},
    \label{eq:density}
\end{equation}
where $T$ is the set of triangles in $G$.
Based on the triangle-based density, we formally introduce the definition of LTDS below.\\

\begin{definition}
    [$\rho$-tr-compact \cite{samusevich2016local}]
    \label{def:compact}
    A graph $G = (V, E)$ is $\rho$-tr-compact if and only if $G$ is connected, and removing any subset of vertices $S \subseteq V$ will result in the removal of at least $\rho \times |S|$ triangles in $G$, where $\rho$ is a nonnegative real number.
\end{definition}

\begin{definition}
    [Maximal $\rho$-tr-compact subgraph \cite{samusevich2016local}]
    \label{def:maximal}
    A $\rho$-tr-compact subgraph $G[S]$ of $G$ is a maximal $\rho$-tr-compact subgraph of $G$ if and only if there does not exist a supergraph $G[S']$ of $G[S]$ with $S \subset S'$ in $G$ such that $G[S']$ is also $\rho$-tr-compact.
\end{definition}

\begin{definition}
    [Locally triangle-densest subgraph \cite{samusevich2016local}]
    \label{def:ltds}
    A subgraph $G[S]$ of $G$ is a locally triangle-densest subgraph (LTDS) of $G$ if and only if $G[S]$ is a maximal $\mathsf{tr\text{-}density}(G[S])$-tr-compact subgraph in $G$.
\end{definition}

\Cref{def:compact,def:maximal,def:ltds} naturally follow from \Cref{def:rho-compact,def:maximal-rho,def:lds}, indicating that LTDS shares many properties with LDS. We list some important properties in \Cref{lem:ltds}. This lemma ensures that LTDS is indeed the triangle-densest in its local region.
\begin{lemma}\label{lem:ltds}
    Given a graph $G=(V,E)$, an LTDS $G[S]$ of G has following properties:
    \begin{enumerate}
        \item any subgraph $ G[S'] $ of an LTDS $ G[S] $ can not have a larger tr-density than $G[S]$;
        \item any supergraph $ G[S'] $ of an LTDS $ G[S] $, $ G[S'] $ is not $ \rho $-tr-compact for any $ \rho \geq \mathsf{tr\text{-}density}(G[S]) $.
    \end{enumerate}
\end{lemma}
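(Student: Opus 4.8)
To prove \Cref{lem:ltds}, the plan is to transcribe --- almost line for line --- the two short arguments given earlier for edge-based LDS (the ``third'' and ``fourth'' properties listed right after \Cref{def:lds}), replacing edges everywhere by triangles. Throughout, write $\rho := \mathsf{tr\text{-}density}(G[S])$, and for a vertex set $W$ let $T(W)$ denote the set of triangles of $G$ with all three vertices in $W$; thus $|T(S)| = \rho\,|S|$, and for $W \subseteq S$ the triangles destroyed by deleting $W$ from $G[S]$ are precisely those meeting $W$, so their number is $|T(S)| - |T(S\setminus W)|$. With this bookkeeping in hand, both parts are immediate consequences of \Cref{def:compact,def:maximal,def:ltds}.

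For Part 1 I would argue by contradiction. Suppose some $\emptyset \neq S' \subseteq S$ has $\mathsf{tr\text{-}density}(G[S']) > \rho$, and delete $U := S \setminus S'$ from $G[S]$, so that $S \setminus U = S'$ and $|U| = |S| - |S'|$. The number of triangles removed is $|T(S)| - |T(S')| = \rho\,|S| - \mathsf{tr\text{-}density}(G[S'])\,|S'|$, which is strictly smaller than $\rho\,|S| - \rho\,|S'| = \rho\,|U|$; this contradicts that $G[S]$ is $\rho$-tr-compact (\Cref{def:compact}, applied through \Cref{def:ltds}: deleting any vertex subset must destroy at least $\rho$ times its size in triangles). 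Hence no such $S'$ exists.

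For Part 2 I would first record a monotonicity fact: if $G[S']$ is $\rho'$-tr-compact and $\rho \le \rho'$, then $G[S']$ is also $\rho$-tr-compact, since it is connected and deleting any $W \subseteq S'$ removes at least $\rho'|W| \ge \rho|W|$ triangles. Then, supposing for contradiction that some proper supergraph $G[S']$ with $S \subset S'$ were $\rho'$-tr-compact for some $\rho' \ge \rho$, the monotonicity fact makes $G[S']$ a $\mathsf{tr\text{-}density}(G[S])$-tr-compact strict supergraph of $G[S]$, contradicting maximality of $G[S]$ (\Cref{def:maximal,def:ltds}). This settles Part 2. (The statement is of course understood for proper supergraphs $S \subset S'$, matching the convention of \Cref{def:maximal}; for $S' = S$ it would be vacuously false at $\rho = \rho$.)

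The only step deserving a moment of care --- and about as close as this gets to a ``hard part'' --- is the counting identity for the number of triangles killed by deleting $W$ from $G[S]$: one must observe that every triangle of $G[S\setminus W]$ is a triangle of $G[S]$ (since $S\setminus W \subseteq S$), and that a triangle of $G[S]$ survives the deletion iff none of its three vertices lies in $W$, i.e.\ iff it lies in $G[S\setminus W]$. Granting that, no idea beyond the LDS proofs is needed: triangle count, like edge count, is monotone under vertex deletion and additive over the surviving induced subgraph, so all the arithmetic carries over verbatim.
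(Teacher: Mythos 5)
Your proposal is correct and follows essentially the same route as the paper's own proof: Part 1 is the identical triangle-counting contradiction (removing $S\setminus S'$ destroys $\rho|S|-\rho'|S'| < \rho|S\setminus S'|$ triangles, contradicting $\rho$-tr-compactness), and Part 2 is derived from maximality via \Cref{def:maximal}, where your explicit monotonicity remark ($\rho'$-tr-compact implies $\rho$-tr-compact for $\rho\le\rho'$) merely spells out the step the paper leaves implicit. No substantive differences.
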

\begin{proof}
    Please refer to the appendix.
\end{proof}

\begin{problem}
[LTDS problem \cite{samusevich2016local}]
Given a graph $G$ and an integer $k$, the LTDS problem is to compute the top-$k$ LTDS's with the largest density in $G$.
\end{problem}

\begin{example}[LTDS]
    Consider the graph $G$ shown in \Cref{fig:graph}. The subgraph $G[S_{1}]$ with tr-density $\frac{10}{3}$ is a maximal $\frac{10}{3}$-tr-compact subgraph. Hence, $G[S_{1}]$ is an LTDS. Similarly, $G[S_{2}]$ with tr-density $2$ is also an LTDS as it is a maximal $2$-tr-compact subgraph.
    The subgraph $G[S_{3}]$ with tr-density $\frac{1}{2}$ is a $\frac{1}{2}$-tr-compact subgraph. But $G[S_{3}]$ is not an LTDS because it is contained in $G[S_{1}\cup S_{3}]$ which is also $\frac{1}{2}$-tr-compact.
    $G[S_{1}\cup S_{3}]$ is also not an LTDS, because its tr-density is $\frac{11}{5}$ but it is not a $\frac{11}{5}$-tr-compact subgraph. The tr-compactness of $G[S_{1}\cup S_{3}]$ is $\frac{1}{2}=\frac{2}{4}$, because removing $S_{3}$ from $G[S_{1}\cup S_{3}]$ will result in removing of 2 triangles.
\end{example}

\subsection{Our LTDS Algorithm}
\textbf{Frank-Wolfe-based algorithm.} \Cref{equ:trcp} presents the convex program for triangle-densest subgraphs \cite{sun2020kclist++}. This formulation is a natural extension of \Cref{equ:cp}. The key difference between \Cref{equ:trcp} and \Cref{equ:cp} lies in their objectives: the former focuses on distributing the weights of triangles, whereas the latter focuses on the weights of edges. Similarly, the Frank-Wolfe-based algorithm for \Cref{equ:trcp} is also a straightforward generalization of \Cref{alg:fw}. The pseudo-code is presented in \Cref{alg:trfw}.
\begin{equation}\label{equ:trcp}
    \begin{aligned}
        \mathsf{CP}(G) &  & \min\                             & \sum\limits_{u\in V}r_u^2                       &                                  \\
                       &  & r_u                               & = \sum\limits_{u \in t, t\in T } \alpha_{u, t}, & \forall u \in V                  \\
                       &  & \sum\limits_{u \in t}\alpha_{u,t} & \ge 1,                                          & \forall t \in T                  \\
                       &  & \alpha_{u, t}                     & \ge 0,                                          & \forall u \in V, \forall t \in T \\
    \end{aligned}
\end{equation}

\begin{algorithm}[t]
    \caption{Frank-Wolfe-based algorithm \cite{danisch2017large}}
    \label{alg:trfw}
    \algofont
    \Fn{\FW{$G=(V,E,T)$, $N\in \mathbb{Z}_{+}$}}{
        \lForEach{$t \in T$}{$\alpha_{u, t}^{(0)} \leftarrow \frac{1}{3},\forall u \in t$}
        \lForEach{$u \in V$}{$r^{(0)}_{u} \gets \sum_{t\in T \wedge u\in t} \alpha^{(0)}_{u,t}$}
        \For{$i=1, \dots, N$}{
            $\gamma_{i} = \frac{2}{i+2}$\;
            \ForEach{$u \in t$}{
                $\hat{\alpha}_{u,t}\leftarrow 1$ if $u=x$ and $0$ otherwise
            }
            $\bm{\alpha}^{(i)} \gets (1-\gamma_{i}) \cdot \bm{\alpha}^{(i-1)} + \gamma_{i} * \hat{\bm{\alpha}}$\;
            \lForEach{$u \in V$}{$r^{(i)}_{u} \gets \sum_{t\in T \wedge u\in t} \alpha^{(i)}_{u,t}$}
        }
        \Return{$(\bm{r}^{(i)}, \bm{\alpha}^{(i)})$}\;
    }
\end{algorithm}

To bridge LTDS and \Cref{equ:trcp}, we introduce the definition of tr-compact number, generalized from \Cref{def:cn}.
\begin{definition}
    [Tr-compact number]
    \label{def:trcn}
    Given an undirected graph $G = (V, E)$, the tr-compact number of each vertex $u \in V$, denoted by $\phi_{tr}(u)$, is the largest $\rho$, such that $u$ is contained in a $\rho$-tr-compact subgraph of $G$.
\end{definition}

\begin{theorem}
    \label{th:cnltds}
    Any connected subgraph $G[S]$ is an LTDS in $G$ if and only if the following conditions are satisfied:
    \begin{enumerate}
        \item $\forall u \in S$, $\phi_{tr}(u) = \mathsf{tr\text{-}density}(G[S])$;
        \item $\forall (u, v) \in E$, such that $u \in S$ and $v \in V \setminus S$, it holds that $\phi_{tr}(u) > \phi_{tr}(v)$.
    \end{enumerate}
\end{theorem}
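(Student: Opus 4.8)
This is the triangle counterpart of combining Lemmas~\ref{lem:lds-in-phi} and~\ref{lem:lds-out-phi} with their converse, so the plan is to mirror those arguments with triangles replacing edges, invoking Lemma~\ref{lem:ltds} wherever the edge proofs used the analogous facts about $\rho$-compact subgraphs. First I would isolate a \emph{union-closure lemma}: if $G[A]$ and $G[B]$ are $\rho$-tr-compact and $G[A\cup B]$ is connected (in particular whenever $A\cap B\neq\emptyset$, or some edge joins $A$ and $B$), then $G[A\cup B]$ is $\rho$-tr-compact. This holds because $G[C]$ is $\rho$-tr-compact exactly when $G[C]$ is connected and $C$ maximizes $f(P):=|T(P)|-\rho|P|$ over subsets of $C$, and $P\mapsto|T(P)|$ --- hence $f$ --- is supermodular (a triangle lies in $T(X)\cap T(Y)$ only if all three of its vertices lie in $X\cap Y$, and in $T(X)\cup T(Y)$ only if they lie in $X\cup Y$); supermodularity then yields $f(P)\le f(A\cup B)$ for every $P\subseteq A\cup B$. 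I would also record that $\rho$-tr-compactness implies $\rho'$-tr-compactness for every $\rho'\le\rho$.

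For the ``only if'' direction, let $G[S]$ be an LTDS and $\rho_S:=\mathsf{tr\text{-}density}(G[S])$; then $G[S]$ is a maximal $\rho_S$-tr-compact subgraph. For Condition~1, $\phi_{tr}(u)\ge\rho_S$ for every $u\in S$ since $G[S]$ itself is a witness; and if some $\rho$-tr-compact $G[S']$ with $u\in S'$ had $\rho>\rho_S$, then $G[S']$ would be $\rho_S$-tr-compact and share $u$ with $G[S]$, so $G[S\cup S']$ would be $\rho_S$-tr-compact, forcing $S'\subseteq S$ by maximality of $G[S]$, and then $\mathsf{tr\text{-}density}(G[S'])\ge\rho>\rho_S$ would contradict Lemma~\ref{lem:ltds}. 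For Condition~2, given $(u,v)\in E$ with $u\in S$, $v\notin S$, and assuming for contradiction $\phi_{tr}(v)\ge\phi_{tr}(u)=\rho_S$, pick a $\rho_S$-tr-compact $G[S'']$ with $v\in S''$; since $(u,v)$ joins $S$ and $S''$, the union-closure lemma makes $G[S\cup S'']$ a $\rho_S$-tr-compact strict supergraph of $G[S]$, contradicting maximality.

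For the ``if'' direction, suppose $G[S]$ is connected and satisfies Conditions~1 and~2, and set $\rho_S:=\mathsf{tr\text{-}density}(G[S])$. I would first show $G[S]$ is its own densest subgraph: otherwise, take a subset of $S$ of maximum tr-density $\rho'>\rho_S$ and a connected component $C$ of it; $G[C]$ is connected with tr-density $\rho'$, and removing any $Q\subseteq C$ deletes $|T(C)|-|T(C\setminus Q)|\ge\rho'|Q|$ triangles (since no subset of $S$ has tr-density above $\rho'$), so $G[C]$ is $\rho'$-tr-compact and every $w\in C\subseteq S$ has $\phi_{tr}(w)\ge\rho'>\rho_S$, contradicting Condition~1. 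This then gives, for any $Q\subseteq S$, that removing $Q$ from $G[S]$ deletes $|T(S)|-|T(S\setminus Q)|\ge\rho_S|Q|$ triangles, so $G[S]$ is $\rho_S$-tr-compact. Finally, were some $\rho_S$-tr-compact $G[S']$ to satisfy $S\subsetneq S'$, connectedness of $G[S']$ would give an edge $(a,b)$ with $a\in S$ and $b\in S'\setminus S$; then $\phi_{tr}(a)=\rho_S$ by Condition~1, while $\phi_{tr}(b)\ge\rho_S$ since $b$ lies in the $\rho_S$-tr-compact $G[S']$, contradicting Condition~2. Hence $G[S]$ is a maximal $\rho_S$-tr-compact subgraph, i.e., an LTDS.

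I expect the main obstacle to be the ``if'' direction --- in particular, recognizing that Condition~1 already pins $G[S]$ down as its own densest subgraph --- together with getting the union-closure lemma exactly right, since the supermodularity argument and the connectedness caveat are what legitimize the repeated ``merge two compact pieces'' steps; once those are in hand, the crossing-edge arguments are routine triangle translations of the proofs of Lemmas~\ref{lem:lds-in-phi} and~\ref{lem:lds-out-phi}.
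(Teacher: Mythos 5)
Your proof is correct and fills in the two places where the paper's own argument is terse, while keeping the same overall skeleton. For the forward direction (LTDS $\Rightarrow$ conditions) you reproduce the paper's \Cref{lemma:cnltds1} and \Cref{lemma:cnltds2}, but you make explicit the fact both of those proofs rely on silently: that the union of two $\rho$-tr-compact subgraphs that overlap, or are joined by an edge, is again $\rho$-tr-compact. Your derivation of this from the supermodularity of $P \mapsto |T(P)|$ (so that $\rho$-tr-compactness is exactly maximizing $|T(P)|-\rho|P|$ over subsets of the vertex set, a property preserved under unions) is the right justification, and it generalizes verbatim to $k$-clique density. For the converse, the paper only asserts that sufficiency ``can be concluded from \Cref{lem:ltds}''; you supply the actual argument --- Condition~1 forces $G[S]$ to be its own triangle-densest subgraph (any denser subset would yield a $\rho'$-tr-compact component, lifting some $\phi_{tr}(w)$ above $\rho_S$), which in turn gives $\rho_S$-tr-compactness of $G[S]$, and Condition~2 rules out any connected $\rho_S$-tr-compact strict supergraph via a crossing edge. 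One small point worth stating where you pick ``a connected component'' of a maximum-tr-density subset: every component of such a subset attains density exactly $\rho'$ (each is at most $\rho'$ by maximality and their weighted average is $\rho'$), so the choice is immaterial and your subsequent compactness computation goes through. Net, this is a more careful and self-contained version of the paper's proof rather than a genuinely different one.
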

\begin{proof}
    Please refer to the appendix.\qed
\end{proof}

\begin{theorem}
    \label{th:cp}
    Suppose $({\bm{\mathbf{r^*}}}, \bm{\alpha}^*)$ is an optimal solution of \Cref{equ:trcp}. Then, each $r_u^*$ in $\bm{\mathbf{r^*}}$ is exactly the tr-compact number of $u$, i.e., $\forall u \in V$, $r_u^* = \phi_{tr}(u)$.
\end{theorem}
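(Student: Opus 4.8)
The plan is to mirror the proof of \Cref{thm:phi-r}, replacing edges by triangles throughout and re-deriving the optimality conditions for the enlarged constraint system of \eqref{equ:trcp}. Fix $u \in V$ and partition the vertex set according to the optimal $\bm{r}^{*}$: let $X = \{v : r^{*}_{v} > r^{*}_{u}\}$, $Y = \{v : r^{*}_{v} = r^{*}_{u}\}$, $Z = \{v : r^{*}_{v} < r^{*}_{u}\}$, so $u \in Y$. First I would establish two structural consequences of optimality: (i) for every triangle $t \in T$, $\sum_{v \in t}\alpha^{*}_{v,t} = 1$ — otherwise some positive $\alpha^{*}_{v,t}$ could be shrunk, strictly lowering $\sum_{v}(r^{*}_{v})^{2}$; and (ii) if $\alpha^{*}_{v,t} > 0$ then $r^{*}_{v} = \min_{w \in t} r^{*}_{w}$ — otherwise, shifting a small $\epsilon$ of weight from $v$ to a strictly lighter $w \in t$ changes the objective by $-2\epsilon(r^{*}_{v} - r^{*}_{w}) + 2\epsilon^{2} < 0$, a contradiction. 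These are the triangle analogues of the exchange argument used for \eqref{equ:cp}.

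Using (i)--(ii), I would show $G[X \cup Y]$ is $r^{*}_{u}$-tr-compact. The key bookkeeping: by (ii), a triangle $t$ carries positive weight on a vertex of $Y$ only if $t \subseteq X \cup Y$ and $t \cap Y \neq \emptyset$ (if $t$ meets $Z$, all its weight sits on $Z$-vertices), and then, by (i), its whole unit of weight lies on $t \cap Y$. More generally, for any $Q \subseteq X \cup Y$ and $v \in Q$, only triangles contained in $X \cup Y$ contribute to $r^{*}_{v}$; hence $\sum_{v \in Q} r^{*}_{v} = \sum_{t \subseteq X \cup Y,\, t \cap Q \neq \emptyset} \sum_{v \in Q \cap t}\alpha^{*}_{v,t} \le \sum_{t \subseteq X \cup Y,\, t \cap Q \neq \emptyset} 1$, the number of triangles of $G[X \cup Y]$ incident to $Q$, i.e., the number destroyed by deleting $Q$. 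Combined with $r^{*}_{v} \ge r^{*}_{u}$ on $X \cup Y$, deleting $Q$ destroys at least $r^{*}_{u}\,|Q|$ triangles. If $G[X \cup Y]$ is disconnected, I pass to the connected component containing $u$; the tr-compactness inequality is inherited componentwise since triangles live inside a single component. This gives a connected $r^{*}_{u}$-tr-compact subgraph containing $u$, so $\phi_{tr}(u) \ge r^{*}_{u}$.

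For the reverse inequality, let $G[S]$ be any $\rho$-tr-compact subgraph with $u \in S$. Since $u \in Y$, the set $S \cap (Y \cup Z)$ is nonempty; deleting it from $G[S]$ leaves $G[S \cap X]$. Every triangle of $G[S]$ destroyed by this deletion meets $Y \cup Z$, so its lightest vertex lies in $S \cap (Y \cup Z)$, and by (ii) all of its weight does too; summing and using (i) shows the number destroyed is at most $\sum_{v \in S \cap (Y \cup Z)} r^{*}_{v} \le r^{*}_{u}\,|S \cap (Y \cup Z)|$, using $r^{*}_{v} \le r^{*}_{u}$ on $Y \cup Z$. Comparing with the defining bound $\rho\,|S \cap (Y \cup Z)|$ of $\rho$-tr-compactness forces $\rho \le r^{*}_{u}$, hence $\phi_{tr}(u) \le r^{*}_{u}$. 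Together with the previous paragraph, $\phi_{tr}(u) = r^{*}_{u}$, as claimed.

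The routine parts are the $\epsilon$-perturbation inequalities establishing (i) and (ii). The step needing the most care is the triangle bookkeeping in the forward direction: tracking exactly which triangles of $G[X \cup Y]$ are incident to a deleted set and confirming (ii) pins their entire weight onto that set — unlike the edge case, a triangle may straddle up to all three classes $X$, $Y$, $Z$, and one must be careful that triangles meeting $Z$ contribute nothing to $r^{*}_{v}$ for $v \in X \cup Y$. I expect this, together with the minor connectedness caveat, to be the main obstacle; everything else transfers essentially verbatim from the proof of \Cref{thm:phi-r}.
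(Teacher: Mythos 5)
Your proposal is correct and follows essentially the same route as the paper's proof: the same partition into $X$, $Y$, $Z$, the same $\epsilon$-perturbation argument showing each triangle's unit weight sits entirely on its minimum-$r^*$ vertices (the paper's Lemma~\ref{lem:cp}), the same bookkeeping to show $G[X\cup Y]$ is $r_u^*$-tr-compact componentwise (Lemma~\ref{lem:cp2}), and the same reverse bound via deleting $S\cap(Y\cup Z)$. Your explicit tightness condition (i) and the connectivity remark are minor clarifications the paper leaves implicit, but the argument is the same.
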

\begin{proof}
    Please refer to the appendix.\qed
\end{proof}

\Cref{th:cnltds} enables us to efficiently extract and verify LTDS's using tr-compact numbers, while \Cref{th:cp} ensures that \Cref{alg:trfw} can correctly compute feasible tr-compact numbers.

\noindent \textbf{Extract stable groups and pruning.} Inspired by the idea of $k$-core \cite{seidman1983network}, \citeauthor{samusevich2016local} proposed tr-$k$-core to optimize the pruning process of \texttt{LTDSflow}.
\begin{definition}[tr-$k$-core and tr-core number \cite{samusevich2016local}]
    The tr-$k$-core of $G$ is the maximal subgraph $G[S]$ such that for any $u\in S$, $u$ is contained in at least $k$ triangles. For any $u \in V$, the {\em tr-core number} of $u$, denoted by $\mathsf{tr\text{-}core}_{G}(u)$, is the largest $k$ such that $u$ is contained in the tr-$k$-core of $G$.
\end{definition}
We borrow the idea of tr-core numbers to compute the initial bounds for tr-compact numbers. After initialization, \Cref{def:stable-group} and \ExtSG can be directly adapted to the triangle-based setting by substituting edge weights with triangle weights and replacing edge-based density with triangle-based density.
\Cref{def:trsg} gives a generalized definition of \Cref{def:stable-group}.
\begin{definition}[Triangle-based stable groups]
    \label{def:trsg}
    Given a feasible solution $(\bm{r}, \bm{\alpha})$ to \Cref{equ:trcp}, stable group $S \in V$ with respect to $(\bm{r}, \bm{\alpha})$ is a non-empty subset if the following conditions hold:
    \begin{enumerate}
        \item For any $v \in V \setminus S$, $r_v$ satisfies either $r_v > \max_{u \in S} r_u$ or $r_v < \min_{u \in S} r_u$;
        \item For any $t \in T$, if $t \cap S \neq \emptyset$, $\sum_{r_u > \max_{v \in S} r_v} \alpha_{u, t} = 0$;
        \item For any $t \in T$, if $t \cap S \neq \emptyset$, $\sum_{r_u < \min_{v \in S} r_v} \alpha_{u, t} = 1$.
    \end{enumerate}
\end{definition}
The modified \ExtSG provides tighter bounds for tr-compact numbers compared to tr-core numbers \cite{samusevich2016local}. The correctness of the bounds is guaranteed by \Cref{lem:tr-phi-bounds}. Similarly, \Pruning can also be adapted.
\begin{lemma}
    \label{lem:tr-phi-bounds}
    Given a feasible solution $(\bm{r}, \bm{\alpha})$ to \Cref{equ:trcp} and a triangle-based stable group $S$ w.r.t. $(\bm{r}, \bm{\alpha})$, for all $u\in S$, we have that $\min_{v\in S}{r_{v}}\leq \phi_{tr}(u) \leq \max_{v\in S}{r_{v}}$.
\end{lemma}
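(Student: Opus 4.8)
The plan is to mirror the proof of the edge-based version, \Cref{lem:phi-bounds}, with \Cref{th:cp} taking the role of \Cref{thm:phi-r} and the triangle-based stable-group conditions of \Cref{def:trsg} replacing those of \Cref{def:stable-group}. First I would fix an optimal solution $(\bm{r}^{*},\bm{\alpha}^{*})$ of \Cref{equ:trcp}; by \Cref{th:cp} we have $\phi_{tr}(v)=r^{*}_{v}$ for every $v\in V$, so it suffices to prove $\min_{v\in S} r_{v}\le r^{*}_{u}\le \max_{v\in S} r_{v}$ for each $u\in S$. By symmetry I would argue only the lower bound, assuming for contradiction that some $u\in S$ has $r^{*}_{u}<m:=\min_{v\in S} r_{v}\le r_{u}$.

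The next step is a conservation argument. We may assume every feasible solution in play is tight, i.e.\ $\sum_{w\in t}\alpha_{w,t}=1$ for all $t\in T$ (excess can always be shed, and \Cref{alg:trfw} already outputs tight solutions), so $\bm{r}$ and $\bm{r}^{*}$ distribute the same total mass: $\sum_{v\in V} r_{v}=\sum_{v\in V} r^{*}_{v}=|T|$. Since $r^{*}_{u}<r_{u}$, some vertex $x$ must have $r^{*}_{x}>r_{x}$. I would then invoke conditions~(2) and~(3) of \Cref{def:trsg} -- which pin down how each triangle meeting $S$ distributes its weight under $\bm{\alpha}$ relative to the $r$-values of $S$ -- together with the consecutiveness property of condition~(1) (each $v\notin S$ satisfies $r_{v}>\max_{v'\in S} r_{v'}$ or $r_{v}<m$), to locate such an $x$ with $r_{x}\ge m$; then $r^{*}_{x}>r_{x}\ge m>r^{*}_{u}=\phi_{tr}(u)$. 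Finally, transferring $\epsilon>0$ units of triangle weight from $x$ to $u$ in $\bm{\alpha}^{*}$ -- routed along an augmenting structure in the difference $\bm{\alpha}-\bm{\alpha}^{*}$, which connects an under-filled vertex to an over-filled one through triangles incident to both -- decreases the objective $\sum_{v}(r^{*}_{v})^{2}$ by $2\epsilon(r^{*}_{x}-r^{*}_{u}-\epsilon)>0$, contradicting optimality of $\bm{r}^{*}$ for \Cref{equ:trcp}. The upper bound follows symmetrically, using $\max_{v\in S} r_{v}$.

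The step I expect to be the main obstacle is making the weight transfer rigorous in the triangle (3-uniform) setting. In \Cref{lem:phi-bounds} a single edge joins exactly two vertices, so one rebalancing move along one edge suffices; here each triangle spreads its unit weight over three vertices, so one must show that $\bm{\alpha}-\bm{\alpha}^{*}$ decomposes into paths and cycles on the vertex--triangle incidence structure along which a consistent $\epsilon$-shift keeps every $\alpha_{w,t}\ge0$ and every $\sum_{w\in t}\alpha_{w,t}=1$, and along which the $r^{*}$-values fall monotonically enough for the objective to strictly drop. A route that avoids the explicit augmenting-path bookkeeping is to argue with sets: put $A=\{v\in V:r^{*}_{v}<m\}\ni u$; the optimality structure used in the proof of \Cref{th:cp} (each triangle sends weight only to its minimum-$r^{*}$ vertices) yields $\sum_{v\in A} r^{*}_{v}=|\{t\in T:t\cap A\neq\emptyset\}|$, while feasibility together with conditions~(2)--(3) of \Cref{def:trsg} bounds $\sum_{v\in A} r_{v}$ from the opposite side. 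The delicate point -- and exactly where conditions~(2)--(3), not just the partition condition~(1), become indispensable -- is to sharpen these two estimates so that, combined with $r^{*}_{u}<r_{u}$, they genuinely conflict rather than being simultaneously satisfiable.
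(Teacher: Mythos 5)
Your plan follows essentially the same route as the paper's proof: identify $\phi_{tr}(u)$ with $r^{*}_{u}$ via \Cref{th:cp}, use the stable-group conditions together with a conservation argument to produce a vertex $x$ with $r^{*}_{x}>r_{x}\geq \min_{v\in S}r_{v}>r^{*}_{u}$, and derive a contradiction via an $\epsilon$-rebalancing that decreases the objective by $2\epsilon(r^{*}_{x}-r^{*}_{u}-\epsilon)$. The obstacle you flag --- rigorously realizing the $\epsilon$-transfer through the vertex--triangle incidence structure while preserving feasibility --- is genuine, but the paper's own proof glosses over exactly the same step (it simply asserts that $r^{*}_{u}$ can be increased and $r^{*}_{x}$ decreased), so your proposal is, if anything, more explicit about where the care is needed.
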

\begin{proofsketch}
    We can prove the lemma by generalizing \Cref{lem:phi-bounds} for supporting triangles.
\end{proofsketch}

\noindent \textbf{Extract and verify LTDS.} To verify the candidates returned by \Pruning, we need to examine them from two perspectives based on the properties in \Cref{lem:ltds}. Since \IsDensest effectively verifies self-densest property, the only remaining task is to determine whether the candidate subgraph $G[S]$ is a maximal $\mathsf{tr\text{-}density}(G[S])$-tr-compact subgraph. 
To efficiently solve this problem, we propose a different flow network construction method, named \ExtMaximal, and a new LTDS algorithm, named \IsLTDS, based on the original \IsLDS, to address these issues.

\begin{algorithm}[tb]
    \caption{Check whether $G[S]$ is an LTDS of $G$}
    \label{alg:is-ltds}
    \algofont
    \Fn{\IsLTDS{$S$, $\overline{\phi}_{tr}$, $\underline{\phi}_{tr}$, $G=(V,E)$}}{
        $Q \gets $ an empty queue, $\rho \gets \mathsf{tr\text{-}density}(G[S]) $\;
        $U \gets \emptyset$, $P \gets \emptyset$, $T_{visited}\gets \emptyset$, $needFlow\gets \mathsf{False}$\;
        \ForEach{$u \in S$}{
            push $u$ to $Q$, insert $u$ into $U$\;
        }

        \While{$Q$ is not empty}{
            $v \gets$ pop out the front vertex in $Q$\;
            \ForEach{$t \in T\wedge v \in t\wedge t\notin T_{visited}$}{
                $valid\gets \textsf{True}$, $T_{visited}\gets T_{visited}\cup t$\;
                \ForEach{$w\in t$}{
                    \lIf{$\overline{\phi}_{tr}(w)< \rho$}{$valid\gets \textsf{False}$}
                }
                \lIf{not valid}{continue}
                $M\gets \emptyset$\;
                \ForEach{$w\in t$}
                {
                    \If{$\underline{\phi}_{tr}(w)\leq \rho$}{
                        add $w$ into $M$\;
                        \lIf{$w\notin U$}{
                            push $w$ to $Q$, add $w$ into $U$
                        }
                    }
                    \Else{
                        $needFlow\gets \mathsf{True}$\;
                    }
                }
                \lIf{$M\neq \emptyset$}{$P\gets P\cup \{M \}$}
            }
            \ForEach{$(v,w)\in E$}{
                \If{$w \notin U$ }{
                    \lIf{$\underline{\phi}_{tr}(w)>\rho$}{$needFlow\gets \mathsf{True}$}
                    \ElseIf{$\overline{\phi}_{tr}(w)>\rho$}{push $w$ to $Q$, add $w$ into $U$\;}
                }
            }
        }
        \lIf{not $needFlow$}{\Return{{\sf True}}}
        $S'\gets$\ExtMaximal($U$, $P$, $\rho$)\;
        \Return{$G[S]$ is a connected component in $G[S']$}
    }

\end{algorithm}

\begin{algorithm}[tb]
    \caption{Extract maximal $\rho$-tr-compact subgraphs}
    \label{alg:is-maximal}
    \algofont
    \Fn{\ExtMaximal{$S$, $P$, $\rho$}}{
        initialize a flow network $\mathcal{F}=(\mathcal{V},\mathcal{E})$ with a source vertex $s$ and a sink vertex $t$\;
        \ForEach{$u\in S$}{
            add a node $\mathsf{u}$ into $\mathcal{V}$\;
            add an edge $(\mathsf{u}, t)$ with capacity $\rho$ into $\mathcal{E}$\;
        }
        \ForEach{$p \in P$}{
            add a node $\mathsf{p}$ into $\mathcal{V}$\;
            add an edge $(s, \mathsf{p})$ with capacity $1$ into $\mathcal{E}$\;
            \ForEach{$u\in p$}{
                add an edge $(\mathsf{p},\mathsf{u})$ with capacity $1$ into $\mathcal{E}$\;
            }
        }
        $\mathcal{S},\mathcal{T}\gets$ minimal s-t cut in $\mathcal{F}$\;

        \Return{$(\mathcal{S}\cap S)\setminus {s}$;}
    }

\end{algorithm}

\Cref{alg:is-ltds} presents our verification process for LTDS.
From a high-level perspective, \IsLTDS performs a breadth-first search starting from the candidate subgraph $G[S]$, aiming to identify all vertices and triangles that could be part of a $\mathsf{tr\text{-}density}(G[S])$-tr-compact subgraph. Afterward, with the help of \ExtMaximal, we further verify whether $G[S]$ is maximal.

Specifically, an empty queue $Q$, an empty vertex set $U$, empty pattern sets $P$, $T_{visited}$ and tr-density $\rho$ are first initialized (lines 2-3).
Here, $U$ and $P$ are used to collect all vertices and triangles that would be included in a $\rho$-tr-compact subgraph, while $T_{visited}$ is used to avoid repeated triangle visits during the traversal.
We use an additional boolean parameter $needFlow$ to indicate whether the flow-based function \ExtMaximal should be executed.
 Then, we push each vertex $u\in S$ into $Q$ (lines 4-5). Next, we pop out the front vertex $v$ from $Q$ and begin our twofold verification (lines 7-25). In the first aspect, we search every unvisited triangle containing $v$ to find valid triangles. 
 A valid triangle consists of three vertices, each with an upper bound of the tr-compact number greater than or equal to  $\rho$.
 {\em Here, only if the triangle is valid, it can be part of the maximal $\mathsf{tr\text{-}density}(G[S])$-tr-compact subgraph, which helps reduce the search range and computational costs. }
For each valid triangle $t$, we extend the vertex set $U$ by adding vertices $w \in t$ that satisfy $\underline{\phi}_{tr}(w) \leq \rho$. 
A new vertex set $M$ is maintained to store these vertices. If $M$ is not empty, we add it to the pattern set $P$ to extend it.
{\em Vertices with $\underline{\phi}_{tr}(w) > \rho$ in $t$ are excluded here to speed up the maximality verification. The proof of correctness can be found in the appendix.}
In the second aspect, \IsLTDS iterates all neighbors of $v$ to further extend $U$ (lines 21-25).
If \IsLTDS does not encounter a vertex with a lower bound of the tr-compact number greater than  $\rho$, we directly return $\mathsf{True}$ (line 26).
Otherwise, we must construct a well-designed flow network and use the max-flow algorithm to find the maximal $\rho$-tr-compact subgraph from $G[U]$. Finally, \IsLTDS returns $\mathsf{True}$ if $G[S]$ is a connected component of $G[S']$.

\Cref{alg:is-maximal} describes our new flow network construction method. The intuition behind \ExtMaximal is to distribute the weights of each valid triangle to vertices in $U$, which is quite similar to the intuition behind \Cref{equ:trcp}. Specifically, each pattern $p\in P$ is connected to source vertex $s$ through an edge with capacity $1$, representing the weight of a triangle. At the same time, all vertices in $p$ are also connected to $p$ with a capacity of $1$. Moreover, each vertex is linked to the sink vertex $t$ with a capacity of $\rho$. The correctness of \ExtMaximal and \IsLTDS is stated in \Cref{lem:network} and \Cref{th:isltds}, respectively.

\begin{lemma}
    \label{lem:network}
    Given a tr-density value $\rho$, along with the corresponding vertex set $U$ and pattern set $P$ generated in lines 6-25 of \Cref{alg:is-ltds}, \Cref{alg:is-maximal} returns all vertices $u$ whose tr-compact number is larger than or equal to $\rho$.
\end{lemma}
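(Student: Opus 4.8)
The plan is to mimic the classical Goldberg-style min-cut analysis of (triangle-)densest subgraphs, now applied to the restricted network $\mathcal{F}$ that \IsLTDS feeds into \ExtMaximal, and then read the claim off the structural characterization of tr-compact numbers. Throughout, write $W:=\mathcal{S}\cap U$ for the set of vertex nodes that a minimum $s$-$t$ cut $(\mathcal{S},\mathcal{T})$ of $\mathcal{F}$ places on the source side; since pattern nodes lie outside $U$, \ExtMaximal returns exactly $W$, so it suffices to understand $W$.

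\emph{Step 1: reduce the cut to a combinatorial optimum.} For a fixed $W$, a pattern node $p\in P$ is cheapest on the source side when $p\subseteq W$ (cost $0$), and otherwise contributes $\min(1,|p\setminus W|)=1$ no matter which side $p$ is assigned to; each vertex of $W$ contributes its capacity-$\rho$ arc to $t$. Hence the cut value equals $\rho|W|+|P|-\lvert\{p\in P:p\subseteq W\}\rvert$, and a minimum cut corresponds to a subset $W$ that maximizes $h(W):=\lvert\{p\in P:p\subseteq W\}\rvert-\rho|W|$. The map $W\mapsto\lvert\{p\in P:p\subseteq W\}\rvert$ counts the ``hyperedges'' of $P$ contained in $W$ and thus has nondecreasing marginal gains, i.e.\ it is supermodular, so $h$ is supermodular; its maximizers form a lattice, and the minimum cut with inclusion-maximal source side (the cut the algorithm is meant to compute) selects the unique largest maximizer $W^{*}$, with $h(W^{*})\ge h(\emptyset)=0$.

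\emph{Step 2: identify $W^{*}$ with $\{u\in U:\phi_{tr}(u)\ge\rho\}$.} Let $F$ denote the ``forced'' vertices encountered by \IsLTDS with $\underline{\phi}_{tr}>\rho$; these are kept out of $U$, out of every $M$, and out of $\mathcal{F}$. Using the two correctness facts that \IsLTDS defers to the appendix --- that a valid triangle is the only kind that can occur in a $\rho$-tr-compact subgraph, and that every vertex with $\underline{\phi}_{tr}>\rho$ must already lie in the maximal $\rho$-tr-compact subgraph --- contracting $F$ is lossless, and each $p\in P$ is exactly the ``free'' part of a valid triangle all of whose other vertices lie in $F$. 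Consequently $h(W)$ equals, up to an additive constant independent of $W$, the number of triangles of $G[W\cup F]$ inside the BFS-explored region minus $\rho|W|$; maximizing $h$ is therefore precisely the standard flow test for the maximal $\rho$-tr-compact subgraph of $G[U\cup F]$. Now I would prove two inclusions. (i) If $u\in U$ has $\phi_{tr}(u)\ge\rho$, then by \Cref{th:cp} (for $G$) $u$ belongs to $G[\{v:r_{v}^{*}\ge\rho\}]$, which the appendix proof of \Cref{th:cp} --- structured like that of \Cref{thm:phi-r} --- shows is $\rho$-tr-compact on each connected component; by condition~(2) of \Cref{th:cnltds} that component has a boundary of strictly smaller tr-compact numbers, so the BFS from $S$ gathers the whole component into $U\cup F$, and adjoining it to $W$ cannot decrease $h$ because removing any subset of a $\rho$-tr-compact graph destroys at least $\rho$ triangles per removed vertex; hence $u\in W^{*}$. (ii) If $u\in W^{*}$, supermodularity and maximality give $h(W^{*})\ge h(W^{*}\setminus Q)$ for all $Q$, which says $G[W^{*}\cup F]$ (restricted to the explored region) loses at least $\rho|Q|$ triangles whenever $Q$ is removed and has tr-density at least $\rho$; choosing the connected piece around $u$ exhibits a $\rho$-tr-compact subgraph containing $u$, so $\phi_{tr}(u)\ge\rho$. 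Combining (i), (ii), and the fact that the BFS reaches every vertex with tr-compact number $\ge\rho$ that is linked to $S$ through such vertices, \ExtMaximal returns exactly $\{u:\phi_{tr}(u)\ge\rho\}$, which establishes the lemma.

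\emph{Main obstacle.} Step~1 and the ``a subset that pays for itself'' reading of the min-cut are routine --- they parallel the edge-based reasoning behind \Cref{thm:isLDS}. The delicate part is Step~2's reliance on the two facts \IsLTDS postpones to the appendix: (a) an invalid triangle (one whose three vertices cannot all attain $\overline{\phi}_{tr}\ge\rho$) is genuinely useless, so discarding it leaves the maximal $\rho$-tr-compact subgraph unchanged; and (b) every vertex with $\underline{\phi}_{tr}>\rho$ really belongs to that subgraph, so replacing a naive ``triangle-Goldberg'' network by $\mathcal{F}$ --- where such vertices vanish and survive only through their free co-triangle vertices as patterns --- does not change which vertices are returned. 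Tightly coupled to this is verifying that the BFS in lines~6--25 of \Cref{alg:is-ltds} really collects the entire maximal $\rho$-tr-compact region around $S$ together with every competing such region; this is exactly where condition~(2) of \Cref{th:cnltds}, the triangle analogue of \Cref{lem:lds-out-phi}, is indispensable. I would settle (a), (b), and this coverage claim first; everything after that is bookkeeping.
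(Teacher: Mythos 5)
Your proposal matches the paper's proof in its essential structure: both reduce the minimum cut of $\mathcal{F}$ to maximizing the number of patterns fully contained in the source-side vertex set minus $\rho$ times its size (the paper's cost formula $|P|-t_0(A)+\rho|A|$ is exactly $|P|-h(A)$ in your notation), and both then establish the two inclusions between the inclusion-maximal optimizer and $\{u:\phi_{tr}(u)\ge\rho\}$ via the bridging fact that deleting vertices of tr-compact number at least $\rho$ destroys at least $\rho$ patterns per deleted vertex (the paper's Lemma~\ref{lem:PT}; your deferred facts (a), (b) and the BFS-coverage claim). Your supermodularity/lattice observation and the detour through \Cref{th:cp} in inclusion (i) are cosmetic variants of the same argument, and the obstacles you flag are precisely the points the paper compresses into its terse proof of Lemma~\ref{lem:PT}.
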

\begin{proof}
    Please refer to the appendix.\qed
\end{proof}

\begin{theorem}
    \label{th:isltds}
    Given a graph $G$ and a subgraph $G[S]$, where $G[S]$ is the TDS of itself, $G[S]$ is an LTDS of $G$ if and only if \Cref{alg:is-ltds} returns {\sf True}.
\end{theorem}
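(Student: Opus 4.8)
The plan is to follow exactly the two–direction argument used for \Cref{thm:isLDS}, with three substitutions: edges become triangles, the self‑loops collected in $L$ become the shrunken patterns collected in $P$, and the min‑cut step that extracts ``all $\rho$-compact subgraphs'' is replaced by the call to \ExtMaximal, whose correctness is exactly \Cref{lem:network}. First I would record the reduction. Since $G[S]$ is the TDS of itself, every proper $S'\subsetneq S$ has $\mathsf{tr\text{-}density}(G[S'])\le\mathsf{tr\text{-}density}(G[S])=:\rho$, so deleting $S\setminus S'$ from $G[S]$ destroys $\mathsf{tr\text{-}density}(G[S])\lvert S\rvert-\mathsf{tr\text{-}density}(G[S'])\lvert S'\rvert\ge\rho\,\lvert S\setminus S'\rvert$ triangles; hence $G[S]$ is $\rho$-tr-compact and, by \Cref{def:ltds}, is an LTDS if and only if it is a \emph{maximal} $\rho$-tr-compact subgraph of $G$ (if $G[S]$ is disconnected it is not an LTDS and line~27 returns \textsf{False}, consistently). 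So it suffices to show that \IsLTDS returns \textsf{True} iff $G[S]$ is maximal $\rho$-tr-compact.

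For the forward direction I would assume $G[S]$ is an LTDS. If $needFlow$ is never raised, the algorithm returns \textsf{True} at the penultimate line and there is nothing to prove; otherwise I need to show $G[S]$ is a connected component of $S'=\ExtMaximal(U,P,\rho)$. The key observation, the triangle analogue of ``only the loops in $L$ may increase the compact numbers'', is that every triangle of $G[U]$ occurs as a full $3$-pattern in $P$, whereas a valid triangle incident to $U$ that contains a vertex $w$ with $\underline{\phi}_{tr}(w)>\rho$ contributes a shrunken pattern $M$ of size $\le 2$; since a triangle's unit weight must still be fully distributed, a smaller pattern can only raise, never lower, the tr-compact numbers of the auxiliary instance $(U,P)$. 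Consequently $G[S]$ is still $\rho$-tr-compact in $(U,P)$, and by \Cref{lem:network} the returned set $S'$ is precisely the vertices of $U$ whose tr-compact number in $(U,P)$ is $\ge\rho$. If $G[S]$ were not a connected component of $G[S']$, some $u\in S'\setminus S$ would be joined to $S$ through a pattern; tracing that pattern back to its originating triangle in $G$, and — for a shrunken pattern coming from a vertex $w$ with $\underline{\phi}_{tr}(w)>\rho$, hence $\phi_{tr}(w)\ge\underline{\phi}_{tr}(w)>\rho$ — re‑attaching the whole $\rho'$-tr-compact region around $w$ with $\rho'>\rho$, would yield a $\rho$-tr-compact subgraph of $G$ strictly containing $S$, contradicting maximality. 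Hence the algorithm returns \textsf{True}.

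For the converse I would assume $G[S]$ is not an LTDS; being the TDS of itself, it must then fail maximality, so fix a $\rho$-tr-compact $G[S^\ast]$ with $S\subsetneq S^\ast$. Every $v\in S^\ast$ lies in a $\rho$-tr-compact subgraph, so $\phi_{tr}(v)\ge\rho$ and therefore $\overline{\phi}_{tr}(v)\ge\rho$; thus during the BFS from $S$ every triangle of $G[S^\ast]$ that touches an already‑explored vertex is declared \emph{valid} and is recorded — in full if all three endpoints have $\underline{\phi}_{tr}\le\rho$, and shrunken together with $needFlow\leftarrow\textsf{True}$ otherwise — while each $v\in S^\ast\setminus S$ with $\underline{\phi}_{tr}(v)\le\rho$ is added to $U$. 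A careful case check on whether $S^\ast$ contains a boundary vertex with $\underline{\phi}_{tr}>\rho$ shows $needFlow$ is raised, so $\ExtMaximal$ is invoked, and the image of $G[S^\ast]$ inside the $(U,P)$ instance is a $\rho$-tr-compact subgraph strictly containing $S$ there too; by \Cref{lem:network} this forces $S\subsetneq S'$, so $G[S]$ is not a connected component of $G[S']$ and the algorithm returns \textsf{False}.

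The main obstacle I anticipate is the middle step in both directions: making rigorous that the shrunken patterns $M$ behave \emph{exactly} like the self‑loops of \IsLDS — i.e.\ that passing from triangles to $\le 2$-vertex patterns preserves $\rho$-tr-compactness of $G[S]$ and manufactures no spurious larger $\rho$-tr-compact region — and, symmetrically, that a genuine larger $\rho$-tr-compact supergraph of $G[S]$ always survives the restriction to $(U,P)$. This rests on the closure fact that a union of overlapping $\rho$-tr-compact subgraphs is again $\rho$-tr-compact (the triangle version of the property used implicitly in \Cref{thm:isLDS}) together with the bookkeeping that the traversal never drops a vertex of $S^\ast$ reachable from $S$ via valid triangles or edges; the remaining details mirror \Cref{alg:is-lds} and are routine. $\square$
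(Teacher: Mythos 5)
Your proposal is correct and follows essentially the same route as the paper's proof: both directions are argued by contradiction through \Cref{lem:network}, with the forward direction resting on the observation that the shrunken patterns in $P$ (like the self-loops in \IsLDS) can only increase tr-compact numbers in the reduced instance, and that any larger maximal $\rho$-tr-compact region found there can be pulled back to a genuine larger $\rho$-tr-compact supergraph in $G$ by re-attaching the vertices with $\underline{\phi}_{tr}>\rho$. Your write-up is more explicit than the paper's about the reduction from ``LTDS'' to ``maximal $\rho$-tr-compact'' and about the closure and BFS-coverage facts that make the middle step rigorous, but these are elaborations of the same argument rather than a different approach.
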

\begin{proof}
    Please refer to the appendix.\qed
\end{proof}

\noindent \textbf{The overall algorithm {\tt LTDScvx}.} The workflow of {\tt LTDScvx} can be naturally extended from {\tt LDScvx}.}

{\color{blackkkk} \section{A Unified Framework for LDS Problem}
\label{sec:framework}
\citeauthor{zhou2024depth} \cite{zhou2024depth} proposed a unified framework for the densest subgraph discovery (DSD) problem. This framework consists of three stages: {\it Graph reduction}, {\it Vertex Weight Updating} ({\tt VWU}), and {\it Candidate Subgraph Extraction and Verification} ({\tt CSEV}). Inspired by this, we consolidate all LDS and LTDS solutions ({\tt LDSflow} \cite{qin2015locally}, {\tt LTDSflow} \cite{samusevich2016local}, {\tt LDScvx}, and {\tt LTDScvx}) into a similar framework.

\begin{algorithm}[tb]
    \caption{The unified framework.}
    \label{alg:framework}
    \algofont
    \Input{A graph $G=(V,E)$ and an integer $k$}
    \Output{Top-$k$ valid subgraph}
    $stk \gets$ an empty stack, $\bm{w}\gets \emptyset$, $\overline{\phi}\gets \emptyset$, $\underline{\phi}\gets \emptyset$\;
    \tcp{\textcolor{teal}{The initial reduction method.}}
    $V', E', \overline{\phi}, \underline{\phi} \gets$ {\tt IR}($G$)\;
    push $V'$ into $stk$\;
    \While{$k>0\wedge stk\neq \emptyset$} {
        $S\gets$ pop out the first subset from $stk$\;
        \tcp{\textcolor{teal}{(1) The vertex weight updating method.}}
        $\bm{w}\gets${\tt VWU}($G$,$\bm{w}$, $\overline{\phi}, \underline{\phi}$)\;
        \tcp{\textcolor{teal}{(2) The graph reduction and division method.}}
        $S, stk, \overline{\phi}, \underline{\phi}\gets${\tt GRD}($G$, $\bm{w}$, $stk$)\;
        \tcp{\textcolor{teal}{(3) The candidate subgraph extraction and verification method.}}
        $k\gets${\tt CSEV}($G$, $S$, $k$)\;
    }
\end{algorithm}

\Cref{alg:framework} depicts this framework, including four components: {\it Initial Reduction} ({\tt IR}), {\it Vertex Weight Updating} ({\tt VWU}), {\it Graph Reduction and Division} ({\tt GRD}), and {\it Candidate Subgraph Extraction and Verification} ({\tt CSEV}). Specifically, given a graph $G$ and an integer $k$, we initialize the upper and lower bounds of compact (resp. tr-compact) numbers using core (resp. tr-core) numbers (line 2). We then iteratively execute operations in the following three stages (lines 4-8):
\begin{enumerate}
    \item Update the vertex weight vector $\bm{w}$ for each vertex;

    \item Prune invalid vertices via upper and lower bounds of $\phi$, update bounds of $\phi$ in residual graph, and divide the large graph into multiple smaller subgraphs for further computation;

    \item {Extract the candidate subgraph using vertex weight vector $\bm{w}$, and verify if the candidate subgraph meets the requirements.}
\end{enumerate}

\begin{table*}[t]
    \centering
    \small
    \caption{{Overview of the four components of LDS/LTDS solutions.}}
    \begin{tabular}{l|l|p{2.4cm}|l|l}
        \toprule
        \textbf{Method}                                                     & \textbf{IR}
                                                                            & \textbf{Stage (1):} {\tt VWU}
                                                                            & \textbf{Stage (2):} {\tt GRD}
                                                                            & \textbf{Stage (3):} {\tt CSEV}
        \\ \hline \hline
        \multirow{3}{*}{{\tt LDSflow}}                                      & \multirow{3}{*}{$k$-core}                                         & \multirow{6}{*}{\parbox{2.4cm}{compute the maximum flow}} & {\small\ding{182}}  prune via \Cref{cor:uv-prune} and \Cref{lem:uu-prune};                  & \multirow{3}{*}{{\small\ding{182}} extract the minimum cut; }         \\
                                                                            &                                                                   &                                                           & {\small\ding{183}} no updating;                                                             &                                                                       \\
                                                                            &                                                                   &                                                           & {\small\ding{184}} extract the residual graph.                                              &                                                                       \\\cline{1-2} \cline{4-4}
        \multirow{3}{*}{{\tt LTDSflow}}                                     & \multirow{3}{*}{tr-$k$-core}                                      &                                                           & {\small\ding{182}}  no pruning;                                                             & \multirow{3}{*}{{\small\ding{183}} verify locally densest property. } \\
                                                                            &                                                                   &                                                           & {\small\ding{183}} no updating;                                                             &                                                                       \\
                                                                            &                                                                   &                                                           & {\small\ding{184}} extract the residual graph.                                              &                                                                       \\\hline
        \multirow{3}{*}{{\tt LDScvx}}                                       & \multirow{3}{*}{$k$-core}                                         & \multirow{6}{*}{\parbox{2.4cm}{optimize CP formulation}}  & \multirow{2}{*}{{\small\ding{182}}  prune via \Cref{cor:uv-prune} and \Cref{lem:uu-prune};} & \multirow{2}{*}{{\small\ding{182}} extract the top stable group; }    \\
                                                                            &                                                                   &                                                           &                                                                                             &                                                                       \\
                                                                            &                                                                   &                                                           &
        \multirow{2}{*}{{\small\ding{183}} update bounds via stable group;} & \multirow{2}{*}{{\small\ding{183}} verify self-densest property;}                                                                                                                                                                                                                                   \\\cline{1-2}
        \multirow{3}{*}{{\tt LTDScvx}}                                      & \multirow{3}{*}{tr-$k$-core}                                      &                                                           &                                                                                             &                                                                       \\
                                                                            &                                                                   &                                                           & \multirow{2}{*}{{\small\ding{184}} divide the graph via stable group.}                      & \multirow{2}{*}{{\small\ding{184}} verify locally densest property.}  \\
                                                                            &                                                                   &                                                           &                                                                                             &                                                                       \\\bottomrule
    \end{tabular}

    \label{tab:brief_overview}
\end{table*}

We outline the four components of different algorithms in \Cref{tab:brief_overview}. For example, for our {\tt LDScvx}, it uses $k$-core to initialize bounds of $\phi$. Then, for the repeated searching process: {\bf Stage (1)} employ \FW to optimize \Cref{equ:cp}; {\bf Stage (2)} prune invalid vertices based on \Cref{cor:uv-prune} and \Cref{lem:uu-prune}, while adopting stable group to tighten the bounds and decompose the problem into multiple sub-problems; {\bf Stage (3)} extract the top stable group as a candidate, and execute \IsDensest and \IsLDS to verify its optimality.

    {\bf Comparison between the max flow-based and CP-based algorithms.}
With the help of our unified framework, we can thoroughly compare our method and the max flow-based ones.
For the {\tt IR} component, they share the same initialization. For {\tt VWU} stage, the vertex weight vector $\bm{w}$ represents the weights assigned to different vertices. It holds different meanings and serves different purposes in different algorithms:

{\begin{itemize}
    \item In the max-flow-based algorithms, $\bm{w}$ denotes the flows from the vertices to the target node;
    \item In our CP-based algorithms, $\bm{w}$ represents the weight sum received by each vertex.
\end{itemize}
This stage reflects the main difference between these two types of algorithms. For {\tt GRD} stage, with the help of stable groups, our CP-based algorithms provide {\em tighter} bounds via convex programming, which enables more vertices to be pruned and allows a division of the original graph. Furthermore, in the {\tt CSEV} stage, {\tt LDScvx} and {\tt LTDScvx} construct {\em smaller} flow networks to verify the optimality of the candidate by exploiting tighter bounds.

    {\bf Comparison between our framework and the DSD framework \cite{zhou2024depth}.} First, we divide the {\it Graph Reduction} stage into two parts, {\tt IR} and {\tt GRD}, since these two parts serve different purposes and utilize different notions, such as $k$-core and stable groups, in the LDS problem and its variant. In contrast, for the DSD problem, $k$-core is the only concept used in this stage. Second, since the LDS and LTDS problems aim to identify multiple dense regions in the given graph, both reduction and division are necessary. However, the DSD problem focuses solely on the densest part, making reduction the only essential step.
}

\color{black}
\section{Experiments}
\label{sec:exp}

\subsection{Setup}
\label{sec:exp:setup}

We use thirteen real datasets \cite{snapnets,nr,yang2015defining,boldi2011layered,boldi2004webgraph,nr} which are publicly available\footnote{\url{https://networkrepository.com/}, \url{https://snap.stanford.edu/data/index.html}, and \url{https://law.di.unimi.it/datasets.php}} except for TL. The TL dataset is provided by a television company TCL Technology. The dataset contains film information provided on its smart TV platform, mainly used for a case study. Other graph datasets cover various domains, including social networks (e.g., LiveJournal), e-commerce (e.g., Amazon), and video platforms (e.g., YouTube). \Cref{tab:datasets} summarizes the statistics.



\begin{table}[t]
    \centering
    \caption{Graphs used in our experiments.}
    \algofont
    \begin{tabular}{l|l|r|r}
        \hline
        Dataset & Category      & $|V|$ & $|E|$ \\
        \hline\hline
        PG      & Social        & 10.7K & 24.3K \\ \hline
        BK      & Social        & 58.2K & 214K  \\ \hline
        TL      & Movie         & 108K  & 168K  \\ \hline
        GW      & Social        & 196K  & 950K  \\ \hline
        AM      & E-commerce    & 335K  & 926K  \\ \hline
        YT      & Video-sharing & 1.13M & 2.99M \\ \hline
        SK      & Communication & 1.70M & 11.1M \\ \hline
        LJ      & Social        & 4.00M & 34.7M \\ \hline
        OR      & Social        & 3.07M & 117M  \\ \hline
        \MD{IC} & Web           & 7.41M & 194M  \\ \hline
        \MD{AB} & Web           & 22.7M & 639M  \\ \hline
        IT      & Web           & 41.3M & 1.03B \\ \hline
        LK      & Hyperlink     & 52.6M & 1.61B \\
        \hline
    \end{tabular}%
    \label{tab:datasets}%
\end{table}

We compare the following algorithms:
\begin{itemize}
    \item {\tt LDScvx} is our convex-programming based top-$k$ LDS algorithm (\Cref{sec:algo:lds}).
    \item {\tt LDSflow} \cite{qin2015locally} is the state-of-the-art top-$k$ LDS algorithm based on max-flow.
    \item {\tt LTDScvx} is our convex-programming based top-$k$ LTDS algorithm (\Cref{sec:ltds}).
    \item {\tt LTDSflow} \cite{samusevich2016local} is the state-of-the-art top-$k$ LTDS algorithm based on max-flow.
\end{itemize}

All the algorithms above are implemented in C++ with STL used, except for {\tt LDSflow}, whose source codes are provided by the authors of \cite{qin2015locally}. We run all the experiments on a machine having an Intel(R) Xeon(R) Silver 4110 CPU @ 2.10GHz processor and 256GB memory, with Ubuntu installed.

\subsection{Overall Evaluation of LDS and LTDS Algorithms}
\textbf{Setting of $N$.}\MD{To choose the best setting of $N$, i.e., the number of \FW iterations, we tested the running time of {\tt LDScvx} w.r.t. different values of $N$ from 50 to 200 with $k$ fixed to $5$. \Cref{tab:para-N} reports the average relative running time w.r.t. $N$ over different datasets. The relative running time for a specific value of $N$ on each dataset is obtained via dividing the running time by the minimum running time over all $N$ values for the dataset. The mean is then obtained by averaging over all datasets. We can find that when $N=100$ (resp. $N=200$), we obtain the minimum average relative running time. Hence we use $N=100$ (resp. $N=200$) as the default parameter value in other experiments.}

\begin{table}[t]
    \centering
    \caption{\MD{Relative running time w.r.t. different $N$}}
    \begin{tabular}{r|r|r|r|r}
        \hline
        $N$           & 50   & 100  & 150  & 200  \\
        \hline
        \hline
        {\tt LDScvx}  & 1.64 & 1.10 & 1.12 & 1.20 \\
        \hline
        {\tt LTDScvx} & 1.54 & 1.42 & 1.27 & 1.24 \\
        \hline
    \end{tabular}%
    \label{tab:para-N}%
\end{table}%

\noindent\textbf{Evaluation of efficiency.} In this experiment, we compare our top-$k$ LDS algorithm {\tt LDScvx} with the state-of-the-art {\tt LDSflow} \cite{qin2015locally} w.r.t. running time.

We first fix $k=5$ to overview the two algorithms on different datasets. For the edge-based setting, we conduct experiments on the nine largest datasets. \yy{In contrast, for the triangle-based setting, due to its higher computational complexity, we test on six smaller datasets, excluding TL, as it contains no triangles.}
\Cref{fig:eff-overall} shows the efficiency results of the two algorithms. The datasets are ordered by graph size on the x-axis. Note that for some datasets, the bars of {\tt LDSflow} touch the solid upper line, which means {\tt LDSflow} cannot finish within 600 hours on those datasets. From \Cref{fig:eff-overall}, we can observe: first, the running time of {\tt LDSflow} increases along with the graph size increasing; second, {\tt LDScvx} is up to four orders of magnitude faster than {\tt LDSflow}.
\yy{For the LTDS algorithms, a similar trend can be observed: {\tt LTDSflow} generally requires significantly more time than {\tt LTDScvx}, and {\tt LTDScvx} is up to more than 20 times faster than {\tt LTDSflow} on certain datasets.}
We reckon that the speedup comes from \MD{more vertices pruned due to tighter bounds, fewer candidates verified, and smaller flow networks as stated in \Cref{sec:algo}.}



\begin{figure*}[t]
    \centering
    \begin{minipage}{0.48\textwidth}
        \centering
        \input{figs/time-overall}
        \vspace{-1em}
        \caption{Efficiency of {\tt LDScvx} and {\tt LDSflow} with $k=5$.}
        \label{fig:eff-overall}
    \end{minipage}%
    \hfill
    \begin{minipage}{0.48\textwidth}
        \centering
        \input{figs/time-overall-ltds}
        \vspace{-1em}
        {\color{blackkkk} \caption{{Efficiency of {\tt LTDScvx} and {\tt LTDSflow} with $k=5$.}}}
        \label{fig:eff-overall-ltds}
    \end{minipage}
\end{figure*}

\begin{figure*}[t]
    \centering
    \begin{minipage}{0.48\textwidth}
        \centering
        \input{figs/time-k}
        \vspace{-1em}
        \caption{Efficiency of {\tt LDScvx} and {\tt LDSflow} w.r.t. different $k$.}
        \label{fig:eff-k}
    \end{minipage}%
    \hfill
    \begin{minipage}{0.48\textwidth}
        \centering
        \input{figs/time-k-ltds}
        \vspace{-1em}
        {\color{blackkkk} \caption{Efficiency of {\tt LTDScvx} and {\tt LTDSflow} w.r.t. different $k$.}}
        \label{fig:eff-k-ltds}
    \end{minipage}
\end{figure*}


We further provide the running time trends of these algorithms w.r.t. different $k$ values on \MD{four representative datasets in \Cref{fig:eff-k} and \Cref{fig:eff-k-ltds} due to space limit}. For {\tt LDSflow}, we do not show its trends on dataset LK because it cannot finish within 600 hours even with $k=5$. We can see that when $k$ increases, the running time of all algorithms increases. Moreover, the growth rates of {\tt LDScvx}’s and {\tt LTDScvx}'s running times with respect to $k$ is generally smaller than those of {\tt LDSflow} and {\tt LTDSflow}.

\noindent\textbf{Memory usage.} Further, we test the memory usage of these algorithms. The maximum memory usage is tested via the Linux command {\tt /usr/bin/time -v}. For the cases that {\tt LDSflow} does not finish reasonably, we record the maximum resident memory during the running process.
\Cref{fig:memory} and \Cref{fig:memory-ltds} report the maximum memory usage of these algorithms. The datasets are sorted on the x-axis in ascending order of graph size. We can observe that the memory usages of all algorithms increase along with the increasing graph size. Besides, the memory costs of convex-programming-based algorithms and max-flow-based algorithms are around the same scale because all algorithms take linear memory usage w.r.t. the graph size.


\begin{figure*}[t]
    \centering
    \begin{minipage}{0.48\textwidth}
        \centering
        \input{figs/memory-usage}
        \vspace{-1em}
        \caption{Memory usage of LDS algorithms with $k=5$.}
        \label{fig:memory}
    \end{minipage}%
    \hfill
    \begin{minipage}{0.48\textwidth}
        \centering
        \input{figs/memory-usage-ltds}
        \vspace{-1em}
        {\color{blackkkk} \caption{Memory usage of LTDS algorithms with $k=5$.}
        \label{fig:memory-ltds}}
    \end{minipage}
\end{figure*}

\subsection{Comparison Under the Unified Framework}
In this experiment, we comprehensively compare all algorithms under our summarized unified framework for LDS and LTDS problems. Since the {\tt IR} methods are nearly identical across all algorithms, and the {\tt VWU} methods are difficult to compare in isolation, we omit the discussion of these two components.


\color{blackkkk}
\noindent\textbf{Evaluation of {\tt GRD} methods.} The primary difference between the {\tt GRD} methods used in our convex-programming-based algorithms and those in existing max-flow-based solutions lies in the way {\tt LDScvx} and {\tt LTDScvx} derive stable groups to provide tighter bounds, thereby enabling more effective vertex pruning and a more efficient decomposition of the original graph.

To analyze the priority of stable groups, we focus on two special cases shown in \Cref{fig:eff-k} and \Cref{fig:eff-k-ltds}:
\begin{itemize}
    \item In \Cref{fig:eff-k}, there is a significant increase in {\tt LDSflow} (about two orders of magnitude) when $k$ is increased from $10$ to $15$ on dataset YT;
    \item In \Cref{fig:eff-k-ltds},  the running time of {\tt LTDScvx} on dataset AM increases more noticeably with $k$ compared to other datasets.
\end{itemize}
Here, we use the number of failed LDS and LTDS candidates as the evaluation metric for the {\tt GRD} method, since better pruning and problem decomposition would lead to fewer but more accurate verifications.
\color{black}
\begin{table}[htbp]
    \centering
    \caption{Numbers of failed LDS candidates on YT w.r.t. $k$.}
    \begin{tabular}{l|r|r|r}
        \hline
        Algorithm     & $k=10$ & $k=15$ & increased times \\
        \hline
        \hline
        {\tt LDScvx}  & 37     & 84     & 2.27$\times$    \\
        \hline
        {\tt LDSflow} & 277    & 18399  & 66.42$\times$   \\
        \hline
    \end{tabular}%
    \label{tab:failed-YT}%
\end{table}%

\begin{table}[htbp]
    \centering
    {\color{blackkkk} \caption{Numbers of failed LTDS candidates on AM w.r.t. $k$.}}
    \begin{tabular}{l|r|r|r}
        \hline
        Algorithm     & $k=10$ & $k=15$ & increased times \\
        \hline
        \hline
        {\tt LDScvx}  & 0      & 0      & ---             \\
        \hline
        {\tt LDSflow} & 0      & 0      & ---             \\
        \hline
    \end{tabular}%
    \label{tab:failed-YT}%
\end{table}%

We report the numbers of failed LDS candidates on YT with $k=10$ and $15$ for both algorithms in \Cref{tab:failed-YT}. We observe that the number of failed candidates in {\tt LDSflow} increased around 66$\times$ when $k$ increases from $10$ to $15$, which explains the surge of running time in \Cref{fig:eff-k}. In contrast, the number of failed candidates in {\tt LDScvx} only increases by about 2$\times$. This is why the running time of {\tt LDScvx} does not increase much when $k$ is increased from 10 to 15. Another observation from \Cref{tab:failed-YT} is that the failed numbers for {\tt LDScvx} are smaller than {\tt LDSflow} on both $k$ values, respectively. The reason is that we provide tight upper and lower bounds for compact numbers via convex programming and stable groups, and the tight bounds further enable more vertices to be pruned, which results in fewer LDS candidates to be examined. Apart from that, to examine an LDS candidate, our {\tt LDScvx} only needs a subgraph of what is needed in {\tt LDSflow} to calculate the max-flow by leveraging the lower bounds of compact numbers, according to \Cref{sec:algo:verify}. We believe the above two improvements explain why we are around three orders of magnitude faster on YT when $k=15$.
\color{blackkkk}
We further analyze the special case observed on the AM dataset in \Cref{fig:eff-k-ltds}, where the running time of {\tt LTDScvx} increases more noticeably with $k$ compared to other datasets. Specifically, we find that when $k=25$, the number of failed LTDS candidates for both {\tt LTDScvx} and {\tt LTDSflow} drops to zero. This indicates that the top-25 locally triangle-dense regions in AM can already be effectively pruned and decomposed using the tr-$k$-core number, without requiring the additional guidance from stable groups. As a result, the tighter bounds provided by our {\tt GRD} method do not further contribute to vertex pruning or problem division in this case. This explains why {\tt LTDScvx} does not exhibit a clear efficiency advantage and its running time increases with $k$, similar to {\tt LTDSflow}.

These two special cases demonstrate that the efficiency of {\tt LDScvx} and {\tt LTDScvx} heavily depend on the effectiveness of the {\tt GRD} method, particularly in providing tighter bounds for pruning and division. When the bounds are already tight (e.g., due to strong core structure), the advantage of stable groups diminishes, leading to a higher growth rate as $k$ increases.
\color{black}

\noindent\textbf{Evaluation of {\tt CSEV} methods.} Our main optimization for the {\tt CSEV} method is that we construct smaller flow networks to verify the optimality of candidates. Here, we conduct an ablation study on {\tt LDScvx} and {\tt LTDScvx} to understand the effectiveness of {\tt IsLDS} (\Cref{alg:is-lds}) and \IsLTDS (\Cref{alg:is-ltds}).

Recall that in {\tt IsLDS} we only include the vertices satisfying $\overline{\phi}(u) \geq \mathsf{density}(G[S]) \geq \underline{\phi}(u)$ into the flow network computation, while its counterpart in {\tt LDSflow} \cite{qin2015locally} includes all vertices satisfying $\overline{\phi}(u) \geq \mathsf{density}(G[S])$, named by {\tt IsLDS-core}. We report the time used for verifying LDSs with {\tt IsLDS} and {\tt IsLDS-core}, respectively, on the nine datasets when $k=5$ in \Cref{tab:ablation}. \MD{Here, the time of {\tt IsLDS-core} is measured by replacing {\tt IsLDS} with {\tt IsLDS-core} in {\tt LDScvx}.} The time of {\tt IsLDS-core} on LK is marked as $\geq 259200$s because it cannot finish within three days. From the results, we observe that the verification process with {\tt IsLDS} is up to 110$\times$ faster than that with {\tt IsLDS-core}.

\color{blackkkk}
For the LTDS problem, we compare the following three verification strategies:
\begin{itemize}
    \item {\tt IsLTDS}: our proposed method, which includes only the vertices satisfying $\overline{\phi}_{tr}(u) \geq \mathsf{tr\text{-}density}(G[S]) \geq \underline{\phi}_{tr}(u)$ in the flow network construction;
    \item {\tt IsLTDS-core}: a generalization of {\tt IsLDS-core} to the triangle-based setting, including all vertices satisfying $\overline{\phi}_{tr}(u) \geq \mathsf{tr\text{-}density}(G[S])$;
    \item {\tt IsLTDS-bs} (baseline)\cite{samusevich2016local}: the method adopted in {\tt LTDSflow}, which includes all vertices in $G$ for verification.
\end{itemize}

We report the running time of the three verification methods on six datasets with $k=5$ in \Cref{fig:ablation}. As shown in the figure, {\tt IsLTDS} consistently achieves the lowest verification cost across all datasets. In particular, on large graphs such as SK and YT, {\tt IsLTDS} reduces the verification time by up to two orders of magnitude compared to {\tt IsLTDS-bs}, and is significantly faster than {\tt IsLTDS-core} as well. This improvement is attributed to the tight upper and lower bounds derived from the stable group, which help exclude many irrelevant vertices from the flow network construction. As a result, {\tt IsLTDS} performs verification over a much smaller subgraph, leading to substantial efficiency gains. Moreover, compared to {\tt IsLTDS-core}, which only applies the upper bound constraint, {\tt IsLTDS} leverages both bounds to further reduce the search space, especially when the bounds are tight. The comparison on AM further highlights this: while {\tt IsLTDS} is significantly faster than the other two methods on most datasets, it shows little advantage over {\tt IsLTDS-core} on AM. This observation is consistent with the special case discussed earlier, where we showed that all top-$k$ candidates on AM can already be effectively pruned using the tr-$k$-core number alone. In such cases, the additional bounds provided by the stable group in {\tt IsLTDS} do not further reduce the search space, making its verification cost comparable to that of {\tt IsLTDS-core}.
\color{black}


\begin{table}[t]
    \centering
    {\caption{Effect of {\tt IsLDS} with $k=5$.}}
    \begin{tabular}{l|r|r|r}
        \hline
        Dataset & {\tt IsLDS} & {\tt IsLDS-core} & Speedup             \\
        \hline
        \hline
        AM      & 0.3334s     & 0.3623s          & $1.09\times$        \\
        \hline
        YT      & 2.6575s     & 80.9994s         & $30.48\times$       \\
        \hline
        SK      & 58.7789s    & 18035.1864s      & $306.83\times$      \\
        \hline
        LJ      & 2.1204s     & 2.3924s          & $1.13\times$        \\
        \hline
        OR      & 18.4089s    & 723.6035s        & $39.31\times$       \\
        \hline
        \MD{IC} & 285.4502s   & 288.9184s        & $1.01\times$        \\
        \hline
        \MD{AB} & 60.2669s    & 62.0416s         & $1.03\times$        \\
        \hline
        IT      & 147.9361s   & 188.8527s        & $1.28 \times$       \\
        \hline
        LK      & 2335.4461s  & $\geq 259200$s   & $\geq 110.99\times$ \\
        \hline
    \end{tabular}%
    \label{tab:ablation}%
\end{table}%

\begin{figure}[t]
    \input{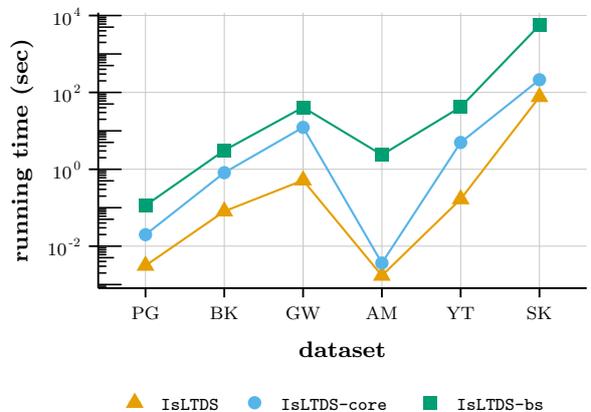}
    \vspace{-1em}
    {\color{blackkkk} \caption{Effect of {\tt IsLTDS} with $k=5$.}}
    \label{fig:ablation}
\end{figure}

\subsection{Time Proportion Analysis}

\begin{figure*}[t]
    \centering
    \begin{minipage}{0.48\textwidth}
        \centering
        \input{figs/time-proportion}
        \vspace{-1em}
        \caption{Proportion of each part in total running time for {\tt LDScvx}.}
        \label{fig:time-proportion}
    \end{minipage}%
    \hfill
    \begin{minipage}{0.48\textwidth}
        \centering
        \input{figs/time-proportion-ltds}
        \vspace{-1em}
        \caption{Proportion of each part in total running time for {\tt LTDScvx}.}
        \label{fig:time-proportion-ltds}
    \end{minipage}
\end{figure*}


Here, we evaluate how the different building blocks of {\tt LDScvx} and {\tt LTDScvx} contribute to the whole running time after the graph is loaded and preprocessed. \Cref{fig:time-proportion} reports the proportion of each part in the total running time for {\tt LDScvx}: Frank-Wolfe (\Cref{alg:fw}), ExtractSG (\Cref{alg:ext-sg}), Pruning (\Cref{alg:pruning}), and VerifyLDS (\Cref{alg:is-lds}) with $k=5$ over nine datasets. We can observe that the Frank-Wolfe computation is the most computationally expensive part on most datasets. For SK, LK and YT datasets, the time used by verifying LDS takes the majority. We further examine the number of failed LDS candidates (i.e., on which \IsLDS returns {\sf False}) on the nine datasets. \Cref{tab:failed-can} reports the results. We can find that the numbers of failed candidates on these datasets are much higher than other datasets, which means that these two datasets need more time to verify LDSs, which explains the results in \Cref{fig:time-proportion} to some extent. \MD{For IC, the time used by verifying LDS is also high, because the first LDS in IC is quite large and takes relatively long time to verify.}
\color{blackkkk}
For {\tt LTDScvx}, the overall time breakdown in \Cref{fig:time-proportion-ltds} is similar to that of {\tt LDScvx}, with Frank-Wolfe and {\tt VerifyLTDS} being the most time-consuming components across most datasets. The correlation between verification time and the number of failed candidates is further supported by \Cref{tab:failed-can-ltds}, which shows that datasets with more failed candidates also exhibit higher {\tt VerifyLTDS} proportions.
\color{black}
\begin{table}[t]
    \centering
    \caption{Numbers of failed LDS candidates with $k=5$.}
    \algofont
    \begin{tabular}{l|r|r|r|r|r|r|r|r|r}
        \hline
        Dataset  & AM & YT & SK & LJ & OR & \MD{IC} & \MD{AB} & IT & LK \\
        \hline
        \hline
        \#failed & 0  & 9  & 67 & 1  & 1  & 0       & 0       & 0  & 6  \\
        \hline
    \end{tabular}%
    \label{tab:failed-can}%
\end{table}%

\begin{table}[t]
    \centering
    {\color{blackkkk} \caption{Numbers of failed LTDS candidates with $k=5$.}}
    \algofont
    \begin{tabular}{l|r|r|r|r|r|r}
        \hline
        Dataset  & PG & BK & GW & AM & YT & SK \\
        \hline
        \hline
        \#failed & 0  & 8  & 18 & 0  & 7  & 21 \\
        \hline
    \end{tabular}%
    \label{tab:failed-can-ltds}%
\end{table}%

\subsection{Scalability} \MD{We further tested the scalability w.r.t. the density via synthetics datasets. The six synthetic graphs are generated by Barab\'{a}si-Albert (BA) model \cite{albert2002statistical}. The numbers of vertices in all synthetic graphs are fixed to 1,000,000, and the densities are increased linearly from 2 to 12. In other words, the numbers of edges are from 2,000,000 to 12,000,000. We report the running time of {\tt LDScvx} on the six synthetic datasets in \Cref{fig:scalability}. We observe that {\tt LDScvx} scales well w.r.t. the graph density.}

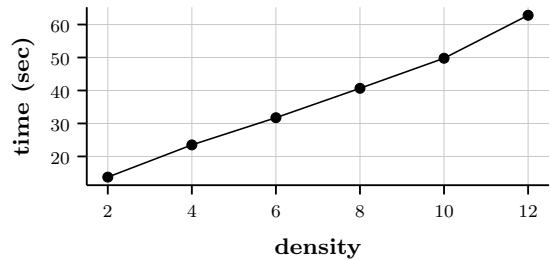
\begin{figure}[h]
\begin{tikzpicture}[x=1pt,y=1pt]
\definecolor{fillColor}{RGB}{255,255,255}
\path[use as bounding box,fill=fillColor,fill opacity=0.00] (0,0) rectangle (216.81,115.63);
\begin{scope}
\path[clip] (  0.00,  0.00) rectangle (216.81,115.63);
\definecolor{fillColor}{RGB}{255,255,255}

\path[fill=fillColor] ( -0.00,  0.00) rectangle (216.81,115.63);
\end{scope}
\begin{scope}
\path[clip] ( 35.18, 34.25) rectangle (208.27,101.41);
\definecolor{fillColor}{RGB}{255,255,255}

\path[fill=fillColor] ( 35.18, 34.25) rectangle (208.27,101.41);

\path[] ( 35.18, 38.88) --
	(208.27, 38.88);

\path[] ( 35.18, 51.32) --
	(208.27, 51.32);

\path[] ( 35.18, 63.76) --
	(208.27, 63.76);

\path[] ( 35.18, 76.20) --
	(208.27, 76.20);

\path[] ( 35.18, 88.64) --
	(208.27, 88.64);

\path[] ( 35.18,101.08) --
	(208.27,101.08);

\path[] ( 58.78, 34.25) --
	( 58.78,101.41);

\path[] ( 90.25, 34.25) --
	( 90.25,101.41);

\path[] (121.73, 34.25) --
	(121.73,101.41);

\path[] (153.20, 34.25) --
	(153.20,101.41);

\path[] (184.67, 34.25) --
	(184.67,101.41);
\definecolor{drawColor}{RGB}{199,200,199}

\path[draw=drawColor,line width= 0.2pt,line join=round] ( 35.18, 45.10) --
	(208.27, 45.10);

\path[draw=drawColor,line width= 0.2pt,line join=round] ( 35.18, 57.54) --
	(208.27, 57.54);

\path[draw=drawColor,line width= 0.2pt,line join=round] ( 35.18, 69.98) --
	(208.27, 69.98);

\path[draw=drawColor,line width= 0.2pt,line join=round] ( 35.18, 82.42) --
	(208.27, 82.42);

\path[draw=drawColor,line width= 0.2pt,line join=round] ( 35.18, 94.86) --
	(208.27, 94.86);

\path[draw=drawColor,line width= 0.2pt,line join=round] ( 43.05, 34.25) --
	( 43.05,101.41);

\path[draw=drawColor,line width= 0.2pt,line join=round] ( 74.52, 34.25) --
	( 74.52,101.41);

\path[draw=drawColor,line width= 0.2pt,line join=round] (105.99, 34.25) --
	(105.99,101.41);

\path[draw=drawColor,line width= 0.2pt,line join=round] (137.46, 34.25) --
	(137.46,101.41);

\path[draw=drawColor,line width= 0.2pt,line join=round] (168.93, 34.25) --
	(168.93,101.41);

\path[draw=drawColor,line width= 0.2pt,line join=round] (200.41, 34.25) --
	(200.41,101.41);
\definecolor{drawColor}{RGB}{0,0,0}

\path[draw=drawColor,line width= 0.6pt,line join=round] ( 43.05, 37.30) --
	( 74.52, 49.45) --
	(105.99, 59.71) --
	(137.46, 70.80) --
	(168.93, 82.14) --
	(200.41, 98.35);
\definecolor{fillColor}{RGB}{0,0,0}

\path[draw=drawColor,line width= 0.4pt,line join=round,line cap=round,fill=fillColor] ( 43.05, 37.30) circle (  1.86);

\path[draw=drawColor,line width= 0.4pt,line join=round,line cap=round,fill=fillColor] ( 74.52, 49.45) circle (  1.86);

\path[draw=drawColor,line width= 0.4pt,line join=round,line cap=round,fill=fillColor] (105.99, 59.71) circle (  1.86);

\path[draw=drawColor,line width= 0.4pt,line join=round,line cap=round,fill=fillColor] (137.46, 70.80) circle (  1.86);

\path[draw=drawColor,line width= 0.4pt,line join=round,line cap=round,fill=fillColor] (168.93, 82.14) circle (  1.86);

\path[draw=drawColor,line width= 0.4pt,line join=round,line cap=round,fill=fillColor] (200.41, 98.35) circle (  1.86);

\path[] ( 35.18, 34.25) rectangle (208.27,101.41);
\end{scope}
\begin{scope}
\path[clip] (  0.00,  0.00) rectangle (216.81,115.63);
\definecolor{drawColor}{RGB}{0,0,0}

\path[draw=drawColor,line width= 0.7pt,line join=round] ( 35.18, 34.25) --
	( 35.18,101.41);
\end{scope}
\begin{scope}
\path[clip] (  0.00,  0.00) rectangle (216.81,115.63);
\definecolor{drawColor}{RGB}{0,0,0}

\node[text=drawColor,anchor=base east,inner sep=0pt, outer sep=0pt, scale=  0.80] at ( 28.88, 42.34) {20};

\node[text=drawColor,anchor=base east,inner sep=0pt, outer sep=0pt, scale=  0.80] at ( 28.88, 54.79) {30};

\node[text=drawColor,anchor=base east,inner sep=0pt, outer sep=0pt, scale=  0.80] at ( 28.88, 67.23) {40};

\node[text=drawColor,anchor=base east,inner sep=0pt, outer sep=0pt, scale=  0.80] at ( 28.88, 79.67) {50};

\node[text=drawColor,anchor=base east,inner sep=0pt, outer sep=0pt, scale=  0.80] at ( 28.88, 92.11) {60};
\end{scope}
\begin{scope}
\path[clip] (  0.00,  0.00) rectangle (216.81,115.63);
\definecolor{drawColor}{RGB}{0,0,0}

\path[draw=drawColor,line width= 0.7pt,line join=round] ( 31.68, 45.10) --
	( 35.18, 45.10);

\path[draw=drawColor,line width= 0.7pt,line join=round] ( 31.68, 57.54) --
	( 35.18, 57.54);

\path[draw=drawColor,line width= 0.7pt,line join=round] ( 31.68, 69.98) --
	( 35.18, 69.98);

\path[draw=drawColor,line width= 0.7pt,line join=round] ( 31.68, 82.42) --
	( 35.18, 82.42);

\path[draw=drawColor,line width= 0.7pt,line join=round] ( 31.68, 94.86) --
	( 35.18, 94.86);
\end{scope}
\begin{scope}
\path[clip] (  0.00,  0.00) rectangle (216.81,115.63);
\definecolor{drawColor}{RGB}{0,0,0}

\path[draw=drawColor,line width= 0.7pt,line join=round] ( 35.18, 34.25) --
	(208.27, 34.25);
\end{scope}
\begin{scope}
\path[clip] (  0.00,  0.00) rectangle (216.81,115.63);
\definecolor{drawColor}{RGB}{0,0,0}

\path[draw=drawColor,line width= 0.7pt,line join=round] ( 43.05, 30.75) --
	( 43.05, 34.25);

\path[draw=drawColor,line width= 0.7pt,line join=round] ( 74.52, 30.75) --
	( 74.52, 34.25);

\path[draw=drawColor,line width= 0.7pt,line join=round] (105.99, 30.75) --
	(105.99, 34.25);

\path[draw=drawColor,line width= 0.7pt,line join=round] (137.46, 30.75) --
	(137.46, 34.25);

\path[draw=drawColor,line width= 0.7pt,line join=round] (168.93, 30.75) --
	(168.93, 34.25);

\path[draw=drawColor,line width= 0.7pt,line join=round] (200.41, 30.75) --
	(200.41, 34.25);
\end{scope}
\begin{scope}
\path[clip] (  0.00,  0.00) rectangle (216.81,115.63);
\definecolor{drawColor}{RGB}{0,0,0}

\node[text=drawColor,anchor=base,inner sep=0pt, outer sep=0pt, scale=  0.80] at ( 43.05, 22.44) {2};

\node[text=drawColor,anchor=base,inner sep=0pt, outer sep=0pt, scale=  0.80] at ( 74.52, 22.44) {4};

\node[text=drawColor,anchor=base,inner sep=0pt, outer sep=0pt, scale=  0.80] at (105.99, 22.44) {6};

\node[text=drawColor,anchor=base,inner sep=0pt, outer sep=0pt, scale=  0.80] at (137.46, 22.44) {8};

\node[text=drawColor,anchor=base,inner sep=0pt, outer sep=0pt, scale=  0.80] at (168.93, 22.44) {10};

\node[text=drawColor,anchor=base,inner sep=0pt, outer sep=0pt, scale=  0.80] at (200.41, 22.44) {12};
\end{scope}
\begin{scope}
\path[clip] (  0.00,  0.00) rectangle (216.81,115.63);
\definecolor{drawColor}{RGB}{0,0,0}

\node[text=drawColor,anchor=base,inner sep=0pt, outer sep=0pt, scale=  1.00] at (121.73,  8.15) {\bfseries density};
\end{scope}
\begin{scope}
\path[clip] (  0.00,  0.00) rectangle (216.81,115.63);
\definecolor{drawColor}{RGB}{0,0,0}

\node[text=drawColor,rotate= 90.00,anchor=base,inner sep=0pt, outer sep=0pt, scale=  1.00] at ( 13.49, 67.83) {\bfseries time (sec)};
\end{scope}
\end{tikzpicture}
    \vspace{-1em}
    \caption{\MD{The scalability w.r.t. the density.}}
    \label{fig:scalability}
\end{figure}

\subsection{Subgraph Statistics}
\label{sec:exp:lds-stat}
\MD{\Cref{fig:density-size} reports the densities of the top-15 densest subgraphs w.r.t. the size (number of vertices) returned by four different models on four datasets, where {\tt Greedy} iteratively computes a densest subgraph and removes it from the graph, and {\tt FDS} denotes the density-friendly decomposition model \cite{tatti2015density,danisch2017large}. From \Cref{fig:density-size}, we can find that the densest subgraph can be found by all three algorithms, except {\tt LTDS}, because the densest subgraph is also an LDS. But there are some  dense subgraphs found by {\tt Greedy} that do not qualify as LDSs. Hence, the subgraphs found by {\tt LDScvx} have a wide range of densities and sizes. For {\tt FDS}, we can find that the subgraphs have increasing sizes and decreasing densities. This is because {\tt FDS} outputs a chain of subgraphs, where each subgraph is nested within the next one, and the inner one is denser than the outer ones. \textcolor{blackkkk}{In contrast, {\tt LTDS} operates under a triangle-based density setting, which evaluates subgraphs based on their triangle concentration rather than edge count. As a result, the subgraphs returned by {\tt LTDScvx} tend to be smaller in size but possess high triangle densities. This difference explains why {\tt LTDS} may not always identify the same densest subgraphs found by edge-based models, especially in cases where edge density is high but triangle concentration is relatively low.}}

\begin{figure}[t]
    \input{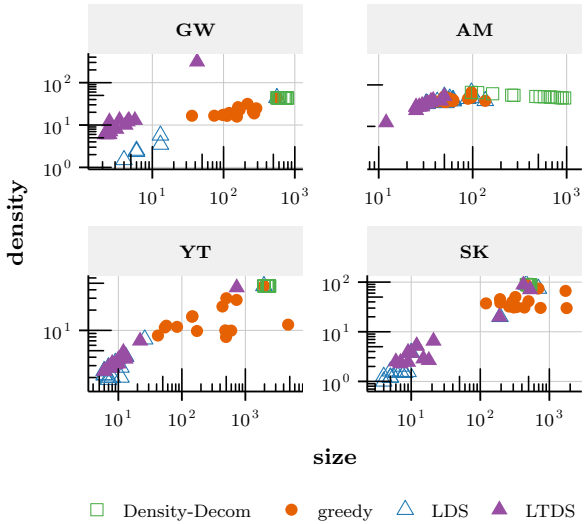}
    \vspace{-1em}
    {\color{blackkkk} \caption{Subgraph statistics: density w.r.t. size.}}
    \label{fig:density-size}
\end{figure}


\subsection{Case Study}
\label{sec:exp:case-study}
Here, we perform a case study on the TL dataset. The TL dataset is provided by a Chinese television company, TCL Technology, which contains three types of vertices: director, movie, and actor.


After examining the top-10 LDSs returned by our LDS algorithm, we found that the LDSs on the TL dataset are about different topics. For example, \Cref{fig:case-study} shows the LDS with the third-highest density. This LDS contains eight movies related to a famous Japanese sci-fiction series ``Ultraman'' with four actors and one director. In this LDS, the four actors participated in all eight films, and the director directed five of them.
The LDS with the second-highest density is a subgraph about Chinese martial fiction. Other LDSs cover topics about western films, Danish comedies, and cartoons, while the DS model can only find a large subgraph about western films.
Hence, from the output of the case study, we reckon that the subgraphs returned by the LDS algorithm are good representations of {\em different} local dense regions in the graph.

\section{Conclusion}
\label{sec:conc}
In this paper, we study the problem of finding the top-$k$ locally densest subgraphs (LDSs) in a graph to identify the local dense regions. The LDSs are usually compact and dense. To facilitate the discovery of LDS, we propose a new concept, named {\em compact number} for each vertex, which denotes the compactness of the most compact subgraph containing the vertex. By leveraging the compact number and its relations with the LDS problem and a specific convex program, we derive a convex-programming based algorithm {\tt LDScvx} following the pruning-and-verifying paradigm. \yy{We also extend our convex-programming based algorithm to the triangle-based setting, named {\tt LTDScvx}, to efficiently discover locally triangle-densest subgraphs (LTDSs).}
\yy{Extensive experiments on thirteen datasets demonstrate that both {\tt LDScvx} and {\tt LTDScvx} consistently outperform existing state-of-the-art algorithms in terms of efficiency.}

{\color{blackkkk} \section{Appendix}
\subsection{Additional Proofs}
\underline{Proof of \Cref{lem:ltds}.} These two properties can be proved by contradiction. For the first property, suppose the tr-density of $G[S]$ is equal to $\rho$, and $G[S']$ is a subgraph of $G[S]$ with a larger tr-density $\rho'$. Then removing $S\setminus S'$ from $G[S]$ would result in the removal of $|T(G[S])|-|T(G[S'])|=\rho |S|-\rho'|S'|<\rho|S\setminus S'|$ triangles, which contradicts \Cref{def:compact}. For the second property, this can be directly derived from \Cref{def:maximal}.

\bigskip
\noindent\underline{Proof of \Cref{th:cnltds}.} We first explore the relationship between the LTDS and the tr-compact numbers of vertices either within or adjacent to the LTDS through the following lemmas.
\begin{lemma}
    \label{lemma:cnltds1}
    Given an LTDS $G[S]$ in $G$, $\forall u \in S$, we have $\phi_{tr}(u)=\mathsf{tr\text{-}density}(G[S])$.
\end{lemma}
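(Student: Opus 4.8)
The plan is to mirror the proof of \Cref{lem:lds-in-phi} for the triangle setting, since \Cref{def:compact,def:maximal,def:ltds} are the exact triangle-analogues of \Cref{def:rho-compact,def:maximal-rho,def:lds}. First I would record the obvious direction: since $G[S]$ is $\mathsf{tr\text{-}density}(G[S])$-tr-compact and contains $u$, Definition \ref{def:trcn} gives $\phi_{tr}(u) \ge \mathsf{tr\text{-}density}(G[S])$. The substance is the reverse inequality, i.e. that no $\rho$-tr-compact subgraph containing $u$ can have $\rho > \mathsf{tr\text{-}density}(G[S])$.

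For that, I would argue by contradiction. Suppose $G[S']$ is a $\rho$-tr-compact subgraph with $u \in S'$ and $\rho > \mathsf{tr\text{-}density}(G[S])$. By property (1) of \Cref{lem:ltds} applied to $S'$ (taking $S' \setminus S'$ trivially, or more precisely: a $\rho$-tr-compact graph has tr-density at least $\rho$, which follows by removing the whole vertex set), we get $\mathsf{tr\text{-}density}(G[S']) \ge \rho > \mathsf{tr\text{-}density}(G[S])$. Next I would show $S' \subseteq S$: since $G[S]$ is a \emph{maximal} $\mathsf{tr\text{-}density}(G[S])$-tr-compact subgraph and $u \in S \cap S' \neq \emptyset$, if $S'$ were not contained in $S$ then $G[S \cup S']$ would be a strictly larger supergraph of $G[S]$, and one checks — exactly as in the edge case — that it is still $\mathsf{tr\text{-}density}(G[S])$-tr-compact (connectivity is inherited through the shared vertex $u$; the triangle-removal bound for any removed set splits across $S$ and $S'$), contradicting maximality. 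Hence $S' \subseteq S$. Now remove $U = S \setminus S'$ from $G[S]$: the number of triangles removed is $|T(G[S])| - |T(G[S'])| = \mathsf{tr\text{-}density}(G[S])\,|S| - \mathsf{tr\text{-}density}(G[S'])\,|S'| < \mathsf{tr\text{-}density}(G[S])\,(|S| - |S'|) = \mathsf{tr\text{-}density}(G[S])\,|U|$, where the strict inequality uses $\mathsf{tr\text{-}density}(G[S']) > \mathsf{tr\text{-}density}(G[S])$. This contradicts that $G[S]$ is $\mathsf{tr\text{-}density}(G[S])$-tr-compact by \Cref{def:compact}. Therefore $\phi_{tr}(u) = \mathsf{tr\text{-}density}(G[S])$, proving \Cref{lemma:cnltds1}, i.e. condition (1) of \Cref{th:cnltds}.

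For condition (2), I would prove the analogue of \Cref{lem:lds-out-phi}: for any edge $(u,v)\in E$ with $u\in S$ and $v\in V\setminus S$, we have $\phi_{tr}(u) > \phi_{tr}(v)$. If instead $\phi_{tr}(v) \ge \phi_{tr}(u) = \mathsf{tr\text{-}density}(G[S])$, then $v$ lies in some $\rho$-tr-compact subgraph $G[S_v]$ with $\rho \ge \mathsf{tr\text{-}density}(G[S])$; since $u$ and $v$ are adjacent and $u\in S$, one argues that $G[S \cup S_v]$ (which is connected because the edge $(u,v)$ joins the two pieces) is $\mathsf{tr\text{-}density}(G[S])$-tr-compact and strictly larger than $G[S]$, contradicting maximality of $G[S]$. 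Conversely, for the "if" direction of the theorem, given a connected $G[S]$ satisfying (1) and (2), I would show it is a maximal $\mathsf{tr\text{-}density}(G[S])$-tr-compact subgraph: tr-compactness itself follows from (1) together with \Cref{th:cp} (the optimal $r^*_u$ values, equal to $\phi_{tr}(u)$, certify compactness via the same $X,Y,Z$ partition argument used in the proof of \Cref{thm:phi-r}), and maximality follows from (2), since any strictly larger $\rho$-tr-compact supergraph with $\rho \ge \mathsf{tr\text{-}density}(G[S])$ would force some neighbour $v\notin S$ of a vertex in $S$ to have $\phi_{tr}(v) \ge \mathsf{tr\text{-}density}(G[S]) = \phi_{tr}(u)$, contradicting (2).

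The main obstacle is the "union stays tr-compact" lemma — verifying that $G[S\cup S']$ (or $G[S\cup S_v]$) is $\mathsf{tr\text{-}density}(G[S])$-tr-compact. In the edge case this is a clean counting identity, but for triangles one must be careful that triangles with vertices straddling both $S$ and $S'$ are accounted for correctly when bounding the triangles destroyed by removing an arbitrary subset; the inequality still goes through because every triangle incident to a removed vertex is counted at least once, but the bookkeeping is the delicate point and I would state it as an auxiliary lemma (a triangle-analogue of the inequality chains in the proofs of \Cref{lem:lds-in-phi} and \Cref{lem:lds-out-phi}) before invoking it.
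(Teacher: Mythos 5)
Your proof is correct and follows essentially the same route as the paper's: the easy inequality $\phi_{tr}(u)\ge\mathsf{tr\text{-}density}(G[S])$ from tr-compactness, then a contradiction argument that forces any $\rho$-tr-compact subgraph $G[S']$ containing $u$ with $\rho>\mathsf{tr\text{-}density}(G[S])$ to satisfy $S'\subseteq S$ by maximality and then violate the tr-compactness of $G[S]$ via the triangle-counting inequality. The only cosmetic difference is that the paper closes the argument by citing property (1) of \Cref{lem:ltds} directly, whereas you re-derive that counting step inline (and correctly flag the union-stays-compact claim, which the paper also leaves implicit in its one-line assertion that $S'\subseteq S$).
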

\begin{proof}
    We prove the lemma by contradiction.
    Let $\rho$ denote the value of $\mathsf{tr\text{-}density}(G[S])$.
    Since $G[S]$ is $\rho$-tr-compact, $\forall u \in S$, $\phi_{tr}(u) \geq \rho$.
    Suppose there exists a vertex $v \in S$ with $\phi_{tr}(v) > \rho$. Then, we can find a $\phi(v)$-tr-compact subgraph $G[S']$ that contains $v$. Because $G[S]$ is a maximal $\rho$-tr-compact subgraph and $S' \cap S \ne \emptyset$, we have $S' \subseteq S$, which contradicts with \Cref{lem:ltds}.
\end{proof}
\begin{lemma}
    \label{lemma:cnltds2}
    Given an LTDS $G[S]$ in $G$, $\forall (u,v) \in E$, if $u\in S$ and $v\in V\setminus S$, we have $\phi_{tr}(u)>\phi_{tr}(v)$.
\end{lemma}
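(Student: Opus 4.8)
\textbf{Proof proposal for Lemma~\ref{lemma:cnltds2}.} The plan is to mirror the structure of the proof of Lemma~\ref{lem:lds-out-phi} in the edge-based setting, adapting each step to triangles. First I would argue by contradiction: suppose there is an edge $(u,v)\in E$ with $u\in S$, $v\in V\setminus S$, and $\phi_{tr}(u)\le \phi_{tr}(v)$. By Lemma~\ref{lemma:cnltds1}, $\phi_{tr}(u)=\rho$ where $\rho=\mathsf{tr\text{-}density}(G[S])$, so the assumption becomes $\phi_{tr}(v)\ge \rho$. By the definition of the tr-compact number (Definition~\ref{def:trcn}), there is a $\phi_{tr}(v)$-tr-compact subgraph $G[S']$ containing $v$; since $\phi_{tr}(v)\ge\rho$ and any $\rho'$-tr-compact graph is also $\rho$-tr-compact for $\rho'\ge\rho$, $G[S']$ is in particular $\rho$-tr-compact.

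The key step is to combine $G[S]$ and $G[S']$ into a larger $\rho$-tr-compact subgraph, contradicting the maximality of $G[S]$ (Definition~\ref{def:maximal}). Note $G[S']$ and $G[S]$ are connected and share the edge $(u,v)$ with $u\in S$, $v\in S'$, so $G[S\cup S']$ is connected. The claim is that $G[S\cup S']$ is $\rho$-tr-compact. To verify this I would take an arbitrary subset $Q\subseteq S\cup S'$ and show that removing $Q$ from $G[S\cup S']$ destroys at least $\rho\,|Q|$ triangles. Split $Q=Q_1\cup Q_2$ with $Q_1=Q\cap S$ and $Q_2=Q\setminus S\subseteq S'\setminus S$. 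Removing $Q_1$ from $G[S]$ destroys at least $\rho\,|Q_1|$ triangles of $G[S]$ (these are all triangles of $G[S\cup S']$), and removing $Q_2$ from $G[S']$ destroys at least $\rho\,|Q_2|$ triangles of $G[S']$. The delicate point is double counting: a triangle destroyed in both processes has all three vertices in $(S\cup S')$ and is hit by both $Q_1$ and $Q_2$, meaning it has at least one vertex in $S\setminus Q$ (impossible if it is counted by $Q_2\subseteq S'\setminus S$) — more carefully, a triangle counted in the first tally has a vertex in $Q_1\subseteq S$ and a triangle counted in the second has a vertex in $Q_2\subseteq S'\setminus S$, but a triangle could still have vertices in both sets. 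Here I would use the standard submodularity-style bookkeeping: work with $G[S\cup S']$ directly and remove $Q$ in two stages — first remove $Q_2$, which (since $G[S']$ is $\rho$-tr-compact and $Q_2\subseteq S'$) destroys $\ge \rho|Q_2|$ triangles all lying in $G[S']\subseteq G[S\cup S']$; then in the residual graph remove $Q_1$, and observe that $G[S]$ restricted appropriately is still available so removal destroys $\ge\rho|Q_1|$ further triangles not already counted because those were triangles of $G[S]$ present in the residual graph (a triangle of $G[S]$ survives stage one unless it had a vertex in $Q_2\subseteq S'\setminus S$, i.e.\ unless it was not a triangle of $G[S]$ at all). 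This careful two-stage accounting is the main obstacle, and it is exactly the argument used implicitly for the edge case.

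Having established that $G[S\cup S']$ is $\rho$-tr-compact and connected, and that $S\subsetneq S\cup S'$ (strict because $v\in S'\setminus S$), we contradict the maximality of $G[S]$ as a $\rho$-tr-compact subgraph, which is part of the hypothesis that $G[S]$ is an LTDS (Definition~\ref{def:ltds}). This completes Lemma~\ref{lemma:cnltds2}. Finally, to finish the proof of Theorem~\ref{th:cnltds} I would assemble the two directions: Lemmas~\ref{lemma:cnltds1} and~\ref{lemma:cnltds2} give the ``only if'' direction directly; for the ``if'' direction, assume a connected $G[S]$ satisfies conditions (1) and (2), and show $G[S]$ is a maximal $\mathsf{tr\text{-}density}(G[S])$-tr-compact subgraph. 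Condition (1) with $u\in S$ and the definition of $\phi_{tr}$ shows each vertex of $S$ lies in some $\rho$-tr-compact subgraph, and one argues (analogously to Theorem~\ref{thm:phi-r}'s reasoning, or directly) that $G[S]$ itself is $\rho$-tr-compact; condition (2) rules out any strictly larger $\rho$-tr-compact supergraph, since such a supergraph would force some boundary vertex $v\in V\setminus S$ to have $\phi_{tr}(v)\ge\rho=\phi_{tr}(u)$ for its neighbour $u\in S$, contradicting the strict inequality. I expect the hardest part of the whole theorem to be the merging/double-counting argument for triangles described above; everything else is a routine transcription of the edge-based proofs.
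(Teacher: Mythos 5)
Your proof is correct and takes essentially the same route as the paper's: argue by contradiction, extract a $\rho$-tr-compact subgraph $G[S']$ containing $v$, and merge it with $G[S]$ to obtain a strictly larger connected $\rho$-tr-compact subgraph, contradicting the maximality in the LTDS definition. The paper simply asserts that $G[S\cup S']$ is $\rho$-tr-compact, whereas you supply the two-stage removal accounting (remove $Q_2\subseteq S'\setminus S$ first, then $Q_1\subseteq S$, noting no triangle of $G[S]$ meets $Q_2$) that justifies this claim without double counting — a correct filling-in of the step the paper glosses over.
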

\begin{proof}
    We prove the lemma by contradiction.
    Let $\rho$ denote the value of $\mathsf{tr\text{-}density}(G[S])$.
    Suppose there exists an edge $ (u, v) \in E $ such that $ u \in S $, $ v \in V \setminus S $, and $ \phi_{tr}(u) \leq \phi_{tr}(v) = \rho$. Then we can find a $\rho$-tr-compact subgraph $G[S']$ of $G$ that contains $v$. $G[S\cup S']$ is a connected supergraph of $g$ and is also $\phi_{tr}(u)$-tr-compact, which contradicts with \Cref{lem:ltds}.
\end{proof}
The sufficiency direction of \Cref{th:cnltds} can be concluded from \Cref{lem:ltds}, while the necessity direction directly follows from \Cref{lemma:cnltds1} and \Cref{lemma:cnltds2}.

\bigskip
\noindent \underline{Proof of \Cref{th:cp}.} We use \Cref{lem:cp} to formally explain the ideal weight distribution of \Cref{equ:trcp}.

\begin{lemma}
    \label{lem:cp}
    Suppose $(\mathbf{r^*}, \bm{\alpha}^*)$ is an optimal solution of \Cref{equ:trcp}.For a vertex $u \in V$, let $X  = \{v \in V | r_v^* > r_u^*\}$, $Y  = \{v \in V | r_v^* = r_u^*\}$, $Z  = \{v \in V | r_v^* < r_u^*\}$. The optimality of $\mathbf{r}^*$ implies that:
    \begin{enumerate}
        \item $\forall (a, b, c) \in T \cap (X \times X \times Y)$, $r_a^* > r_c^*$, $r_b^* > r_c^*$, and $\alpha^*_{c, (a,b,c)} = 1$.
        \item $\forall (a, b, c) \in T \cap (X \times Y \times Y)$, $r_a^* > r_b^* = r_c^*$, and $\alpha^*_{b, (a,b,c)} + \alpha^*_{c, (a,b,c)}= 1$.
        \item $\forall (a, b, c) \in T \cap (Z \times Z \times Y)$, $r_a^* < r_c^*$, $r_b^* < r_c^*$, and $\alpha^*_{a, (a,b,c)} + \alpha^*_{b, (a,b,c)} = 1$.
        \item $\forall (a, b, c) \in T \cap (Z \times Y \times Y)$, $r_a^* < r_b^* = r_c^*$, and $\alpha^*_{a, (a,b,c)} = 1$.
    \end{enumerate}
\end{lemma}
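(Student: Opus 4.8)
\textbf{Proof proposal for \Cref{lem:cp}.}

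The plan is to mirror the argument used for \Cref{thm:phi-r}, where a simple local exchange argument on the $\bm{\alpha}$ variables established the analogous structural facts for edges. Here the only change is that triangles carry the unit of weight instead of edges, so each triangle $t=(a,b,c)\in T$ has three fractional shares $\alpha_{a,t},\alpha_{b,t},\alpha_{c,t}$ summing to (at least) $1$, and the optimal solution will push all of that weight toward the endpoint(s) of $t$ with the smallest $r^*$ value. I would prove each of the four cases by contradiction: assume the claimed weight-concentration fails for some triangle $t$, exhibit an $\epsilon$-perturbation of $\bm{\alpha}^*$ that strictly decreases the objective $\sum_u (r_u^*)^2$, and conclude that $\bm{\alpha}^*$ was not optimal.

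Concretely, first I would note that at optimality every triangle constraint is tight, i.e.\ $\sum_{u\in t}\alpha^*_{u,t}=1$ for all $t\in T$ (otherwise scale down any positive share to reduce some $r_u^*$ and hence the objective). For case (1), take $t=(a,b,c)$ with $a,b\in X$, $c\in Y$, so $r_a^*>r_c^*$ and $r_b^*>r_c^*$. If $\alpha^*_{c,t}<1$, then some $\alpha^*_{a,t}>0$ or $\alpha^*_{b,t}>0$, say $\alpha^*_{a,t}>0$. Choose $\epsilon$ with $0<\epsilon<\min\{\alpha^*_{a,t},\,r_a^*-r_c^*\}$, shift $\epsilon$ of weight from $a$ to $c$ within triangle $t$ (set $\alpha_{a,t}\leftarrow\alpha^*_{a,t}-\epsilon$, $\alpha_{c,t}\leftarrow\alpha^*_{c,t}+\epsilon$), which decreases $r_a^*$ by $\epsilon$ and increases $r_c^*$ by $\epsilon$. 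The objective changes by $-2\epsilon(r_a^*-r_c^*)+2\epsilon^2 = -2\epsilon(r_a^*-r_c^*-\epsilon)<0$, a contradiction; the same works if $\alpha^*_{b,t}>0$. Hence $\alpha^*_{c,t}=1$, which forces $\alpha^*_{a,t}=\alpha^*_{b,t}=0$. Cases (2)--(4) are handled by the identical exchange: in case (2) any positive $\alpha^*_{a,t}$ can be shifted to whichever of $b,c$ has the (equal, strictly smaller) value, since $r_a^*>r_b^*=r_c^*$, giving $\alpha^*_{a,t}=0$ and $\alpha^*_{b,t}+\alpha^*_{c,t}=1$; in case (3) any positive $\alpha^*_{c,t}$ is moved toward $a$ or $b$ (both strictly below $r_c^*$), forcing $\alpha^*_{c,t}=0$ and $\alpha^*_{a,t}+\alpha^*_{b,t}=1$; and case (4) is the mirror of case (1), moving weight off $b$ and $c$ onto $a$.

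The one point requiring a little care—and the main obstacle—is the choice of $\epsilon$ when weight must be moved simultaneously off two vertices (as can happen in cases (2) and (3) if, e.g., both $\alpha^*_{a,t}>0$ and $\alpha^*_{b,t}>0$ in case (3) and we want to dump everything onto a single smallest vertex). In that situation one should do the exchange in two stages, or more simply move weight from each over-weighted vertex separately, each step strictly decreasing the objective by the same $-2\epsilon(r_{\text{large}}^*-r_{\text{small}}^*-\epsilon)$ computation; since $r^*$ is fixed throughout the analysis (we are reasoning about a \emph{given} optimal solution), there is no feedback issue and the inequalities stay clean. With \Cref{lem:cp} in hand, \Cref{th:cp} follows exactly as \Cref{thm:phi-r} did from its structural lemma: the set $X\cup Y$ is a $r_u^*$-tr-compact subgraph (removing $Y$ deletes exactly $r_u^*\cdot|Y|$ triangles by the weight-balance identity, and removing any $Q\subseteq X\cup Y$ deletes at least $r_u^*\cdot|Q|$ triangles via the triangle-constraint lower bound), while no subgraph containing $u$ can be $\rho$-tr-compact for $\rho>r_u^*$, so $\phi_{tr}(u)=r_u^*$.
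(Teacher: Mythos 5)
Your proposal is correct and uses essentially the same argument as the paper: a local $\epsilon$-exchange on the $\bm{\alpha}^*$ shares of a single triangle, showing that any weight left on a strictly larger-$r^*$ endpoint can be shifted to a smaller one to decrease $\sum_u (r_u^*)^2$ by $2\epsilon(r^*_{\mathrm{large}}-r^*_{\mathrm{small}}-\epsilon)>0$, contradicting optimality. The only cosmetic difference is that the paper perturbs two shares at once (with separate amounts $\epsilon$ and $\mu$), whereas you move weight off one vertex at a time — which suffices for the contradiction and is, if anything, cleaner — and you additionally justify the tightness $\sum_{u\in t}\alpha^*_{u,t}=1$, which the paper assumes implicitly.
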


\begin{proof}
    We prove \textit{1} by contradiction: Suppose $\exists t \in T \cap (X \times X \times Y)$, $r_a^* > r_c^*$, $r_b^* > r_c^*$, $\alpha_{c, (a,b,c)} < 1$.
    Since $\alpha_{a, t} + \alpha_{b, t} + \alpha_{c, t} = 1$, we have $\alpha_{a, t} + \alpha_{b, t} > 0$. Obviously there exists some nonnegative $\epsilon$ and $\mu$,  so that $\alpha'_{a, t} = \alpha_{a, t} - \epsilon$, $\alpha'_{b, t} = \alpha_{b, t} - \mu$, and $\alpha'_{c, t} = \alpha_{c, t} + \epsilon +\mu$, satisfies $\alpha'_{a, t} + \alpha'_{b, t} + \alpha'_{c, t} = 1$ and $\alpha'_{a, t}, \alpha'_{b, t}, \alpha'_{c, t} \in [0, 1]$.
    We change $\alpha_{a, t}$, $\alpha_{b, t}$, and $\alpha_{c, t}$ into $\alpha'_{a, t}$, $\alpha'_{b, t}$, and $\alpha'_{c, t}$, respectively. After such modifications, the objective function will change $\epsilon \cdot (-2 r_a^* + 2 r_c^* +  2\epsilon +\mu) + \mu \cdot (-2 r_b^* + 2 r_c^* +  2\mu +\epsilon)$.
    We can always find $\epsilon > 0$ or $\mu > 0$, which satisfies $2 r_a^* - 2 r_c^* >  2\epsilon +\mu$, and $2 r_b^* - 2 r_c^* >  2\mu + \epsilon$, to strictly decrease the objective function. This contradicts that $\mathbf{r^*}$ is the optimal solution to \Cref{equ:trcp}.
    Equation \textit{2}-\textit{4} can be proved similarly.
\end{proof}

\Cref{lem:cp} shows the optimal weight distribution of different kinds of triangles, i.e., each triangle distributes all its weight to vertices with the lowest $r^*$ value.

\begin{lemma}
    \label{lem:cp2}
    Suppose $(\mathbf{r^*}, \bm{\alpha}^*)$ is an optimal solution of $CP(G)$. For a vertex $u\in V$, each connected component of $G\left[X\cup Y\right]$ is $r_u^*$-tr-compact.
\end{lemma}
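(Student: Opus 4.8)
\textbf{Proof proposal for \Cref{lem:cp2}.}

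The plan is to mirror the structure of the edge-based argument in the proof of \Cref{thm:phi-r}, but carry the triangle bookkeeping from \Cref{lem:cp} through the counting. Fix a vertex $u\in V$ and set $X=\{v: r_v^*>r_u^*\}$, $Y=\{v: r_v^*=r_u^*\}$, $Z=\{v: r_v^*<r_u^*\}$. Let $H$ be a connected component of $G[X\cup Y]$, and write $H=G[W]$ with $W\subseteq X\cup Y$. I must show $G[W]$ is $r_u^*$-tr-compact, i.e. it is connected (true by construction) and removing any nonempty $Q\subseteq W$ deletes at least $r_u^*\cdot|Q|$ triangles of $G[W]$.

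First I would compute the number of triangles destroyed when we remove exactly the $Y$-vertices lying in $W$, call this set $Y_W=W\cap Y$. By \Cref{lem:cp}, for any vertex $y\in Y$ the triangles $t$ with $y\in t$ split into: triangles entirely inside $X\cup Y$ (types covered by items 1 and 2 of \Cref{lem:cp}, plus the all-$Y$ triangles), on which $y$ receives weight that (summed over the $Y$-vertices of $t$) exhausts the triangle's unit weight whenever the non-$Y$ vertices are all in $X$; and triangles touching $Z$, on which $y$ receives weight $0$ (items 3 and 4). Hence $\sum_{y\in Y_W} r_y^* = r_u^*\cdot|Y_W|$ equals the total $\bm\alpha^*$-weight received by $Y_W$, which by the support analysis equals exactly the number of triangles of $G[X\cup Y]$ that meet $Y_W$ — and since $W$ is a connected component of $G[X\cup Y]$, every such triangle lies inside $W$. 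So removing $Y_W$ from $G[W]$ deletes precisely $r_u^*\cdot|Y_W|$ triangles, giving the ``tight'' case. Then, for an arbitrary nonempty $Q\subseteq W$, I would bound the number of deleted triangles from below by the total $\bm\alpha^*$-weight assigned to $Q$ within $W$: each triangle $t\subseteq W$ with $t\cap Q\neq\emptyset$ is counted once in the deletion but contributes $\sum_{q\in t\cap Q}\alpha^*_{q,t}\le \sum_{q\in t}\alpha^*_{q,t}\le 1$ to that weight, so
\[
\#\{t\in T(G[W]): t\cap Q\neq\emptyset\}\ \ge\ \sum_{q\in Q}\ \sum_{t\in T(G[W]),\,q\in t}\alpha^*_{q,t}.
\]
It then remains to argue the right-hand side is at least $r_u^*\cdot|Q|$. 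For $q\in Q\cap Y$ this is immediate since $r_q^*=r_u^*$ and, as above, all of $q$'s received weight comes from triangles inside $W$. For $q\in Q\cap X$ we have $r_q^*>r_u^*$, and I would show that the weight $q$ receives from triangles \emph{outside} $W$ (necessarily triangles with a vertex in $Z$, since $W$ is a full component of $G[X\cup Y]$) is zero by items 3 and 4 of \Cref{lem:cp}: a triangle containing a vertex of $X$ and a vertex of $Z$ puts no weight on its $X$-vertices. Hence $\sum_{t\subseteq W, q\in t}\alpha^*_{q,t}=r_q^*\ge r_u^*$, and summing over $q\in Q$ finishes the bound.

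The main obstacle I anticipate is the triangle-support bookkeeping in the ``tight'' step — specifically making airtight the claim that \emph{all} the $\bm\alpha^*$-weight landing on $Y_W$ originates from triangles that lie entirely inside the component $W$, and that this weight exactly counts those triangles. This requires carefully enumerating the triangle types through a $Y$-vertex using \Cref{lem:cp} (a triangle $t$ with $y\in t$ has its other two vertices in one of $X\times X$, $X\times Y$, $Y\times Y$, $X\times Z$, $Y\times Z$, $Z\times Z$), checking in each case both where the weight goes and whether $t\subseteq W$, and invoking connectivity of $W$ to rule out a triangle straddling two components of $G[X\cup Y]$. A secondary subtlety is that a triangle may have \emph{two} vertices in $Y$, so ``$\sum$ over $t$'s $Y$-vertices of $\alpha^*$'' rather than a single $\alpha^*$ is what equals the triangle's contribution; I would keep the sums over $t\cap Y$ explicit throughout to avoid an off-by-multiplicity error. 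Once these cases are tabulated, the counting identities and the inequality chain are routine, exactly paralleling \Cref{thm:phi-r}.
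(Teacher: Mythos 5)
Your proof is correct and follows essentially the same route as the paper: both rest on the key inequality $\#\{t\in T(G[X\cup Y]):t\cap Q\neq\emptyset\}\ \ge\ \sum_{q\in Q}\sum_{t\ni q}\alpha^*_{q,t}=\sum_{q\in Q}r_q^*\ \ge\ r_u^*|Q|$, justified by the weight-support structure of \Cref{lem:cp} (triangles touching $Z$ contribute nothing to $X\cup Y$); your additions — the exact count for removing $Y_W$ and the explicit restriction to a component — are harmless extras. One small point: for $q\in X$ the vanishing of weight from $Z$-touching triangles does not literally follow from items 3--4 of \Cref{lem:cp} as stated (they only cover $ZZY$ and $ZYY$ triangles); you need to apply the lemma with $q$ itself as the reference vertex, or repeat the optimality argument, to conclude that an $X$-vertex receives nothing from a triangle containing a strictly lower-$r^*$ vertex.
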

\begin{proof}
    According to \Cref{lem:cp}, $\left|T\left(G\left[X\cup Y\right]\right)\right|=|T\cap (X\times X \times X)| + |T\cap(X\times X \times Y)| + |T\cap(X\times Y \times Y)| + |T\cap(Y \times Y \times Y)|=\sum_{u\in X\cup Y}r_u^*$, and removing any subset $S\subseteq X\cup Y$ will result in the removal of at least $r_u^*\times |S|$ triangles, this is because\\
    \begin{equation*}
        \begin{aligned}
            \sum_{t\in T(G[X\cup Y])\wedge t\cap S\neq \emptyset}1
             & = \sum_{t\in T(G[X\cup Y])\wedge t\cap S\neq \emptyset}\sum_{v\in t}\alpha_{v,t}                 \\
             & \ge \sum_{t\in T(G[X\cup Y])\wedge t\cap S\neq \emptyset}\sum_{v\in t\wedge v \in S}\alpha_{v,t} \\
             & =\sum_{v\in S}r_v^*                                                                              \\
             & \ge r_u^*\times|S|
        \end{aligned}
    \end{equation*}
    Hence, the lemma holds.
\end{proof}

Based on \Cref{lem:cp}, \Cref{lem:cp2} builds an initial relationship between tr-compactness and \Cref{equ:trcp}. Now, we begin to prove \Cref{th:cp}. \Cref{lem:cp2} guarantees that any $u\in V$ is contained in at least one $r_u^*$-tr-compact subgraph. For any other subgraph $G[S]\nsubseteq G[X\cup Y]$ containing $u$,  $G[S]$ is a $\phi$-tr-compact subgraph, where $\phi \leq r_u^*$. Clearly, $S \cap (Y \cup Z) \neq \emptyset$. Removing $S \cap (Y \cup Z)$ from $G[S]$ will result in the removal of no more than $r_u^* \times |S \cap (Y \cup Z)|$ triangles. Hence, the theorem holds.

\bigskip

Based on \Cref{def:trsg}, we prove $\phi_{tr}(u) \geq \min_{v \in S}r_v$ by contradiction. According to \Cref{th:cp}, $\forall u \in V$, $\phi_{tr}(u) = r_u^*$. Suppose there exists a vertex $u \in S$ such that $r_u^* = \phi_{tr}(u) < \min_{v \in S} r_v \leq r_u$. According to \Cref{def:trsg}, we have $\sum_{v \in S} r_v = \sum_{v \in S} r_v^*$, so there must exist another vertex $x \in S$ such that $r_x^* = \phi(x) > r_x$, which implies $r_x^* \ge \min_{v \in S}r_v > r_u^*$.We can choose $\epsilon$ such that $0 < \epsilon < r_x^* - r_u^*$. If we increase $r_u^*$ by $\epsilon$, and decrease $r_x^*$ by $\epsilon$, the objective value $||\bm{r^*}||^2_2$ will decrease by $2\epsilon (\epsilon + r_u^* - r_x^*)$. This contradicts that $\mathbf{r^*}$ is the optimal solution to \Cref{equ:trcp}, so $\phi(u) \geq \min_{v \in S}r_v$. $\phi(u) \leq \max_{v \in S}r_v$ can be proved in the same way.

\bigskip
\noindent \underline{Proof of \Cref{lem:network}.} For the sake of the proof, we introduce the following definitions and notations. For a given set of vertices $S'$, let $t_i(S')$ be the number of patterns that involve exactly $i$ vertices not from $S'$, $i\in \{1, 2, 3\}$, and $t_u$ be the number of patterns that involve vertex $u$. Let $A=\mathcal{S}\cap S$.

We begin the proof with a structural lemma for the optimal min-cut of the flow network $\mathcal{F}$ in \Cref{alg:is-maximal}.

\begin{lemma}
    Consider any min-cut $(\mathcal{S},\mathcal{T})$ in the network $\mathcal{F}$, the cost of the min-cut is equal to $|P|-t_0(A)+\rho|A|$.
\end{lemma}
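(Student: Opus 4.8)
The goal is to compute the cost of any minimum $s$-$t$ cut in the flow network $\mathcal{F}$ constructed by \Cref{alg:is-maximal}, and to show it equals $|P| - t_0(A) + \rho|A|$, where $A = \mathcal{S} \cap S$ is the set of original vertices on the source side. The plan is to analyze the structure of an optimal cut edge by edge, exploiting the fact that all capacities out of $s$ and into $t$ are simple ($1$ and $\rho$ respectively), so that an optimal cut never severs a ``middle'' edge $(\mathsf{p},\mathsf{u})$ unless forced to.

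First I would fix an arbitrary cut $(\mathcal{S},\mathcal{T})$ with $s \in \mathcal{S}$, $t \in \mathcal{T}$, and let $A = \mathcal{S}\cap S$ (the vertex-nodes on the source side) and let $P_{\mathcal S} = \mathcal{S} \cap P$ (the pattern-nodes on the source side). I would then observe that the cut cost decomposes into three contributions: (i) edges $(s,\mathsf{p})$ with $\mathsf{p}\in\mathcal{T}$, contributing $|P| - |P_{\mathcal S}|$; (ii) edges $(\mathsf{u},t)$ with $\mathsf{u}\in\mathcal{S}$, contributing $\rho|A|$; and (iii) edges $(\mathsf{p},\mathsf{u})$ with $\mathsf{p}\in\mathcal{S}$ and $\mathsf{u}\in\mathcal{T}$, contributing $1$ each. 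For the third term, if a pattern $\mathsf{p}$ is on the source side, it contributes one unit for each of its (at most three) vertices lying in $\mathcal{T}$, i.e.\ not in $A$. Thus the total cut cost is $|P| - |P_{\mathcal S}| + \rho|A| + \sum_{\mathsf{p}\in P_{\mathcal S}} |\{u \in p : u \notin A\}|$.

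Next I would argue that in a \emph{minimum} cut the pattern-side set $P_{\mathcal S}$ is determined by $A$: placing $\mathsf{p}$ on the source side costs $\sum_{u\in p}[u\notin A]$ while placing it on the sink side costs $1$; an optimal cut puts $\mathsf{p}$ on the source side precisely when all three of its vertices lie in $A$ (cost $0 < 1$), and on the sink side otherwise (ties broken toward the sink side, cost $1 \le \sum_{u\in p}[u\notin A]$ whenever at least one vertex is outside $A$). Hence for an optimal cut, $|P_{\mathcal S}| = |\{p \in P : p \subseteq A\}| = t_0(A)$ by the definition of $t_0$, and the sum-term vanishes. Substituting gives cut cost $= |P| - t_0(A) + \rho|A|$, which is the claimed formula; since this holds for \emph{the} min-cut (indeed for any min-cut, all of which share the same cost), the lemma follows. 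I would also note that one should record, for later use, that the source side of a min-cut is exactly $\{s\} \cup A \cup \{\mathsf{p} : p \subseteq A\}$ for the optimal choice of $A$, which is what connects this to the set returned in the last line of \Cref{alg:is-maximal}.

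The main obstacle I anticipate is the careful bookkeeping of degenerate patterns and of ties: a ``pattern'' $M \in P$ may contain one, two, or three vertices (per \Cref{alg:is-ltds}, only vertices with $\underline{\phi}_{tr}(w)\le\rho$ are added to $M$), so $t_1, t_2, t_3$ are all genuinely needed, and one must verify that the ``put $\mathsf p$ on the sink side unless $p\subseteq A$'' rule remains optimal for $|p| \in \{1,2\}$ — it does, since the cost comparison $\sum_{u\in p}[u\notin A]$ versus $1$ still favors the sink side whenever $p \not\subseteq A$. A secondary subtlety is making sure the accounting of $t_0(A)$ matches: $t_0(A)$ counts patterns with \emph{zero} vertices outside $A$, i.e.\ patterns fully contained in $A$, which is exactly $P_{\mathcal S}$ in the optimal cut. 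Once these edge cases are pinned down, the computation is routine.
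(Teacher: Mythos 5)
Your proposal is correct and follows essentially the same argument as the paper: decompose the cut cost into the source edges, sink edges, and middle edges, then observe that in a min-cut each pattern node contributes $\min(1,|\{u\in p: u\notin A\}|)$, so the pattern nodes fully inside $A$ contribute nothing and every other pattern contributes exactly $1$, yielding $|P|-t_0(A)+\rho|A|$. The only cosmetic difference is that the paper enumerates the six pattern types ($|p|\in\{1,2,3\}$ crossed with the number of vertices outside $A$) explicitly, whereas you handle them uniformly; your remark that ties (exactly one vertex outside $A$) may be broken either way without affecting the cost matches the paper's observation for those cases.
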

\begin{proof}
    For patterns satisfying $p\cap A=p$, node $\mathsf{p}$ corresponding to the specific pattern $p$ has to be in $\mathcal{S}$. If not, then we could reduce the cost of the min-cut by moving $\mathsf{p}$ to $\mathcal{S}$.
    There are six additional cases we considered, one per each type of pattern with respect to set $A$.

    \begin{enumerate}
        \item {$p=\{u, v, w\}\in P$, $u, v, w \notin A$.}
        \item {$p=\{u, v, w\}\in P$, $u, v\notin A$, $w\in A$.}
        \item {$p=\{u, v, w\}\in P$, $u\notin A$, $v, w\in A$.}
        \item {$p=\{u, v\}\in P$, $u, v\notin A$.}
        \item {$p=\{u, v\}\in P$, $u\notin A$, $v\in A$.}
        \item {$p=\{u\}\in P$, $u\notin A$.}
    \end{enumerate}
    For cases 3, 5, 6, node $\mathsf{p}$ corresponding to this specific $p$ can be either in $\mathcal{S}$ or $\mathcal{T}$, and the cost of edges from source $s$ to $\mathcal{T}\cap S$ is equal to $t_1(A)$. For cases 2, 4, node $\mathsf{p}$ corresponding to this specific $p$ has to be in $\mathcal{T}$, and the cost of edges from source $s$ to $\mathcal{T}\cap S$ is equal to $t_2(A)$. For case 1, node $\mathsf{p}$ corresponding to this specific $p$ has to be in $\mathcal{T}$, and the cost of edges from source $s$ to $\mathcal{T}\cap S$ is equal to $t_3(A)$. Furthermore, the cost of edges from $\mathcal{S}$ to sink $t$ equals $\rho|A|$. Summing up these cost terms, the total cost is equal to $t_1(A) + t_2(A) + t_3(A) + \rho|A|=|P|-t_0(A)+\rho|A|$.
\end{proof}

Next, we derive the following lemma, which aims to build a relationship between $S$, $P$, and the original graph.
\begin{lemma}
    \label{lem:PT}
    Removing any subset $S'$ from $S$, such that $\forall u \in S', \phi_{tr}(u)\ge \rho$,  will result in the removal of at least $\rho |S'|$ patterns from $P$.
\end{lemma}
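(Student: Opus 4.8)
The plan is to exploit the fact that every vertex $u$ in $S'$ has $\phi_{tr}(u)\ge\rho$, which by Theorem~\ref{th:cnltds} means $u$ lies in some $\rho$-tr-compact subgraph of $G$. The key observation is that the patterns in $P$ are precisely the (restricted) triangles collected in lines 6--25 of Algorithm~\ref{alg:is-ltds}, and each such pattern $p$ corresponds to a genuine triangle $t$ of $G$ whose three vertices all have tr-compact-number at least $\rho$ (the validity check in line~11 enforces $\overline{\phi}_{tr}(w)\ge\rho$ for all $w\in t$, hence $\phi_{tr}(w)\le\overline{\phi}_{tr}(w)$ is \emph{not} what we want --- rather, a pattern is retained only when the triangle could belong to a $\rho$-tr-compact subgraph). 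First I would make precise the correspondence: each $M\in P$ is obtained from a valid triangle $t$ by keeping only those vertices $w\in t$ with $\underline{\phi}_{tr}(w)\le\rho$; the excluded vertices all satisfy $\underline{\phi}_{tr}(w)>\rho$, so they automatically survive any removal of a set $S'\subseteq S$. Thus removing $S'$ from $S$ "destroys" a pattern $M$ exactly when $M\cap S'\neq\emptyset$, and this happens exactly when the underlying triangle $t$ meets $S'$.

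Next I would pass to the subgraph level. Since every $u\in S'$ has $\phi_{tr}(u)\ge\rho$, consider the union $H$ of all maximal $\rho$-tr-compact subgraphs of $G$ containing vertices of $S'$; by the analogue of Lemma~\ref{lem:lds-out-phi} for triangles (part~2 of Theorem~\ref{th:cnltds}), $S'$ is contained in $H$, and one can argue that the connected components of $H$ containing $S'$-vertices are themselves $\rho$-tr-compact (this is where $\phi_{tr}(u)\ge\rho$ is used essentially, mirroring the argument in Lemma~\ref{lem:cp2}). Within such a $\rho$-tr-compact subgraph, removing $S'$ removes at least $\rho|S'|$ triangles \emph{of that subgraph} by Definition~\ref{def:compact}. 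The remaining step is to show that all these destroyed triangles actually appear as patterns in $P$ --- i.e. that the BFS in Algorithm~\ref{alg:is-ltds}, starting from $G[S]$ and expanding through vertices with $\underline{\phi}_{tr}(w)\le\rho$ (and registering $needFlow$ when it meets a vertex with $\underline{\phi}_{tr}(w)>\rho$), indeed visits every triangle incident to $S'$ whose vertices have tr-compact-number $\ge\rho$. This follows because such triangles lie in a $\rho$-tr-compact, hence connected, subgraph reachable from $G[S]$, and the traversal condition in lines 11--20 is exactly "$\overline{\phi}_{tr}(w)\ge\rho$", which every vertex of such a triangle meets.

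Putting these together: removing $S'$ from $S$ destroys at least as many patterns in $P$ as it destroys triangles in the relevant $\rho$-tr-compact component(s) containing $S'$, which is at least $\rho|S'|$. The main obstacle I anticipate is the bookkeeping around the \emph{restricted} patterns --- patterns in which vertices with $\underline{\phi}_{tr}(w)>\rho$ have been dropped. One must verify that dropping these vertices does not decrease the count of patterns destroyed by removing $S'\subseteq S$ (which it cannot, since $S'\subseteq S$ and such dropped vertices are never in $S$, as they would have been pruned), and simultaneously that it does not cause any genuine $\rho$-tr-compact triangle to be missed by the traversal. Handling this carefully, together with the claim that the BFS with the $\overline{\phi}_{tr}$-threshold reaches the entire $\rho$-tr-compact neighbourhood of $G[S]$, is the delicate part; the counting via Definition~\ref{def:compact} is then routine.
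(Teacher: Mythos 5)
Your proposal is correct and follows essentially the same route as the paper's (much terser) proof: use $\phi_{tr}(u)\ge\rho$ to place each $u\in S'$ in a $\rho$-tr-compact subgraph, observe that every triangle of that subgraph incident to $S'$ is visited by the traversal and yields a pattern in $P$ containing a vertex of $S'$, and then invoke Definition~\ref{def:compact} to lower-bound the count by $\rho|S'|$. The extra bookkeeping you supply (the BFS reachability and the restriction of patterns to vertices with $\underline{\phi}_{tr}\le\rho$) fills in details the paper leaves implicit rather than changing the argument.
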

\begin{proof}
    According to \Cref{def:trcn}, $\phi_{tr}(u)\ge \rho$ suggests that $u$ is contained in at least one $\rho$-tr-compact subgraph of $G$. For all triangles containing $u$ and possibly contained in a $\rho$-tr-compact subgraph, we have generated a corresponding $p\in P$ in \Cref{alg:is-ltds}. Suppose the lemma holds to the contrary; this would contradict the definition of \Cref{def:compact}.
\end{proof}

\begin{lemma}
    \label{lem:optimal}
    Given an optimal min-cut $(\mathcal{S}, \mathcal{T})$ in the network $\mathcal{F}$ that maximize $|A|$, vertex $u\in A$ if and only if $\phi_{tr}(u)\ge \rho$.
\end{lemma}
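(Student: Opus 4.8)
The statement characterizes membership in the set $A = \mathcal{S}\cap S$ for the min-cut that maximizes $|A|$: a vertex $u\in A$ exactly when $\phi_{tr}(u)\geq\rho$. I would prove both directions separately, leveraging the cost formula $|P|-t_0(A)+\rho|A|$ from the preceding structural lemma together with Lemma~\ref{lem:PT}.

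For the ``only if'' direction (if $u\in A$ then $\phi_{tr}(u)\geq\rho$), I would argue by contradiction using an exchange argument on the cut. Suppose $A$ contains a nonempty subset $S' = \{u\in A : \phi_{tr}(u) < \rho\}$. The idea is that moving $S'$ out of $\mathcal{S}$ (into $\mathcal{T}$) cannot increase the cut cost, and the maximality of $|A|$ would then force $S'$ to be empty. To make this precise I would compute the change in cut cost $|P|-t_0(A)+\rho|A|$ when passing from $A$ to $A\setminus S'$: the term $\rho|A|$ decreases by $\rho|S'|$, while $t_0(A)$ can only decrease (fewer vertices outside means fewer patterns fully inside), so $-t_0$ increases by at most the number of patterns that become ``not fully contained in $A\setminus S'$'' — which is exactly the number of patterns in $P$ touching $S'$. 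By Lemma~\ref{lem:PT}, removing $S'$ from $S$ removes at least $\rho|S'|$ patterns from $P$ that are ``supported'' on $S'$; combined with the min-cut optimality of $(\mathcal{S},\mathcal{T})$ this shows the cost does not strictly increase, hence the new cut is also optimal but has strictly larger $|\mathcal{T}\cap S|$ (equivalently, we could have enlarged $A$ on the complement), contradicting the choice of $(\mathcal{S},\mathcal{T})$. The bookkeeping between $t_0$, $t_1$, $t_2$, $t_3$ and ``patterns touching $S'$'' is where care is needed.

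For the ``if'' direction (if $\phi_{tr}(u)\geq\rho$ then $u\in A$), I would again use an exchange argument but in the other direction: if some vertex $u$ with $\phi_{tr}(u)\geq\rho$ were in $\mathcal{T}\cap S$, I would show that adding $u$ (and possibly more vertices with tr-compact number $\geq\rho$ connected to it through patterns) to $A$ does not increase the cut cost, again contradicting the maximality of $|A|$. The key input is once more Lemma~\ref{lem:PT} applied in the contrapositive form: since $u$ lies in a $\rho$-tr-compact subgraph of $G$, the patterns in $P$ generated from $u$'s triangles are ``dense enough'' that pulling the corresponding set into $\mathcal{S}$ saves at least $\rho$ units on the source side for each vertex moved, offsetting the $\rho$ paid on the sink side. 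I expect this direction to require identifying the right set to move — not just $\{u\}$ but the maximal set of high-tr-compact-number vertices reachable via patterns — so that the local min-cut optimality can be invoked cleanly.

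**Main obstacle.** The hard part will be the careful accounting in the exchange argument relating the combinatorial quantities $t_0(A), t_1(A), t_2(A), t_3(A)$ and $t_u$ (pattern-degree) to the change in cut cost when $A$ is modified, and then reconciling this with Lemma~\ref{lem:PT}, which is stated in terms of removing a subset from $S$ rather than in terms of the flow network directly. Bridging ``patterns removed from $P$ when deleting $S'$ from $S$'' and ``patterns that cross the modified cut'' is the technical crux; everything else (the cost formula, the direction of the inequalities) follows once that correspondence is nailed down.
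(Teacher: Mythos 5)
Your overall strategy --- a two--directional exchange argument on the cut, driven by the cost formula $|P|-t_0(A)+\rho|A|$ and Lemma~\ref{lem:PT} --- is the same as the paper's, and your ``if'' direction (moving the whole set of high--tr-compact-number vertices outside $A$ into $\mathcal{S}$ at once, showing the cost does not increase, and contradicting the maximality of $|A|$) is essentially the paper's argument for $B\subseteq A$.

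The ``only if'' direction, however, has a genuine gap. You invoke Lemma~\ref{lem:PT} on the set $S'=\{u\in A:\phi_{tr}(u)<\rho\}$, but that lemma's hypothesis is precisely that every vertex of $S'$ has $\phi_{tr}(u)\ge\rho$, so it does not apply; worse, its conclusion ($\ge\rho|S'|$ patterns removed) points in the wrong direction for what you need. The correct input here is the \emph{opposite} inequality: since the vertices of $S'$ have tr-compact number below $\rho$, the definition of the tr-compact number (no $\rho$-tr-compact subgraph contains them) yields a subset $S'\subseteq A$ whose removal destroys \emph{strictly fewer} than $\rho|S'|$ patterns, i.e.\ $t_0(A)-t_0(A\setminus S')<\rho|S'|$. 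Plugging this into the cost formula shows that replacing $A$ by $A\setminus S'$ \emph{strictly decreases} the cut cost, and the contradiction is with the minimality of the cut --- not, as you write, with the maximality of $|A|$. Indeed, shrinking $A$ could never contradict the maximality of $|A|$ among optimal cuts, so the contradiction you aim for cannot close the argument even if the cost bookkeeping were fixed. Once you replace ``Lemma~\ref{lem:PT}'' by the defining property of vertices with $\phi_{tr}<\rho$ and redirect the contradiction at min-cut optimality, this direction goes through exactly as in the paper.
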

\begin{proof}
    Let $B$ denote the vertex set of all vertices $u$ such that $u\in S\cap \phi_{tr}(u)\ge \rho$. We prove that $A$ and $B$ are the same.

    \textit{(1) We prove that $A\subseteq B$.} Suppose to the contrary that some vertices $u\in A$ have a tr-compact number smaller than $\rho$, then according to \Cref{lem:PT},there exists a subset $S'\subseteq A$ such that removing $S'$ from $A$ would result in the removal of less that $\rho|S'|$ patterns. Thus, we have $(|P|-t_0(A\setminus S')+\rho|A\setminus S'|)-(|P|-t_0(A)+\rho|A|)=t_0(A)-t_0(A\setminus S') - \rho |S'|<0$, which contradicts that $(\mathcal{S}, \mathcal{T})$ is optimal.

    \textit{(2) We prove that $B\subseteq A$.} Suppose to the contrary that $B$ is not a subset of $A$. According to \textit{(1)}, we can derive that $A\subsetneq B$. According to \Cref{lem:PT}, removing $B\setminus A$ from $B$ would result in the removal of at least $\rho|B\setminus A|$ patterns. Thus, we have $(|P|-t_0(B)+\rho|B|)-(|P|-t_0(A)+\rho|A|)=t_0(A)-t_0(B) + \rho |B\setminus A|<=0$. To maximize $|A|$, $B\setminus A$ has to be $\emptyset$, which contradicts our assumption.

    According to \textit{(1)} and \textit{(2)}, the lemma is proved.
\end{proof}

\Cref{lem:network} follows directly from \Cref{lem:optimal}.

\bigskip
\noindent \underline{Proof of \Cref{th:isltds}.} First, if $G[S]$ is an LTDS, \Cref{alg:is-ltds} returns {\sf True}. Because according to \Cref{lem:network}, \Cref{alg:is-maximal} returns all vertices $u\in U$ such that their tr-compact number is larger than or equal to $\rho$. $G[S]$ is a connected component in $G[S']$ suggests that $G[S]$ is a maximal $\rho$-tr-compact subgraph in $G$. Otherwise the maximal $\mathsf{density}(G[S])$-tr-compact subgraph containing $G[S]$ must contain a vertex $u$ with corresponding patterns, and then we can construct a larger $\mathsf{density}(G[S])$-tr-compact subgraph in $G$ by including vertices with $\underline{\phi}_{tr}(w) > \mathsf{density}(G[S])$ connected to $u$. Hence, the contradiction proves the claim.

On the other direction, if $G[S]$ is not an LTDS of $G$, we will find a larger $\mathsf{density}(G[S])$-tr-compact subgraph containing $G[S]$. Thus, the algorithm returns {\sf False}.}
\bibliographystyle{spbasic}      
\bibliography{ref}   


\end{document}